\newtheorem{theorem}{Theorem}
\newtheorem{lemma}{Lemma}
\newtheorem{proposition}{Proposition}
\newtheorem{definition}{Definition}
\newcommand{\floor}[1]{\left\lfloor{#1}\right\rfloor}
\newcommand{\ceil}[1]{\left\lceil{#1}\right\rceil}
\newcommand{\remove}[1]{}
\renewcommand{\epsilon}{\varepsilon}
\newcommand{\myfigswidth}{.35\textwidth}
\newcommand{\myplotwidth}{.4\textwidth}
\newcounter{MYtempeqncnt}
\begin{document}

\title{Coded Caching for Multi-level Popularity and Access}
\author{
Jad~Hachem,~\IEEEmembership{Student~Member,~IEEE,}
Nikhil~Karamchandani,~\IEEEmembership{Member,~IEEE,}
and~Suhas~Diggavi,~\IEEEmembership{Fellow,~IEEE}%
\thanks{J. Hachem and S. Diggavi are with the Department of Electrical Engineering, University of California, Los Angeles.}%
\thanks{N. Karamchandani is with the Department of Electrical Engineering, Indian Insitute of Technology, Bombay.}%
\thanks{This work was supported in part by NSF grant \#1423271 and a gift by Qualcomm Inc.}}

\maketitle

\begin{abstract}
To address the exponentially rising demand for wireless content, use
of caching is emerging as a potential solution.  It has been recently
established that joint design of content delivery and storage (coded
caching) can significantly improve performance over conventional
caching.  Coded caching is well suited to emerging heterogeneous
wireless architectures which consist of a dense deployment of
local-coverage wireless access points (APs) with high data rates,
along with sparsely-distributed, large-coverage macro-cell base
stations (BS). This enables design of coded caching-and-delivery
schemes that equip APs with storage, and place content in them
in a way that creates
coded-multicast opportunities for combining with macro-cell broadcast
to satisfy users even with different demands.  Such coded-caching
schemes have been shown to be order-optimal with respect to the BS
transmission rate, for a system with
single-level content, i.e., one where all content is uniformly
popular. In this work, we consider a system with non-uniform
popularity content which is divided into multiple levels, based on
varying degrees of popularity. The main contribution of this work is
the derivation of an order-optimal scheme which judiciously shares
cache memory among files with different popularities.  To show order-optimality
we derive new information-theoretic lower bounds, which
use a sliding-window entropy inequality, effectively creating a
non-cutset bound. We also extend the ideas to when users can access
multiple caches along with the broadcast.  Finally we consider two
extreme cases of user distribution across caches for the multi-level
popularity model: a single user per cache (single-user setup) versus a
large number of users per cache (multi-user setup), and demonstrate a
dichotomy in the order-optimal strategies for these two extreme cases.

\end{abstract}

\section{Introduction}
\label{sec:intro}
Broadband data consumption has witnessed a tremendous growth over the
past few years, due in large part to multi-media applications such as
Video-on-Demand. This increased demand has been managed in the wired
internet via Content Distribution Networks (CDNs), by mirroring data
in various locations and in effect pushing the content closer to the
end users. Wireless data consumption, driven by the increased demand
for high-definition content on mobile devices, has also grown at a
significant rate \cite{CiscoReport} and is testing the limits of our
underlying wireless communication
systems \cite{QualcommSmallCells}. However, simply borrowing the CDN
solution from wired networks and applying it to wireless systems is
insufficient to solve the wireless content delivery problem.  In the
wired Internet, CDNs remove the bottleneck at the content distribution
server by utilizing repeated demand of particular content.  It has the
most gains when the local communication link is not the
bottleneck \cite{korupolu1999}. In wireless cellular usage, this is
typically not true as the (cellular) wireless hop is a bottleneck
link.

The \emph{broadcast} nature of wireless can be used as an advantage to
alleviate this problem. This, along with the
emerging \emph{heterogeneous} wireless network can be used to provide
an architecture for wireless content distribution. The heterogeneous
wireless network (HetNet) architecture emerging for 5G consists of
a dense deployment of wireless access points (APs) with small coverage
and relatively large data rates, in combination with cellular
base-stations (BS) with large coverage and smaller data rates. For
example, the access points could be WiFi or emerging small-cells (or
femto-cells), which provide high data rate for short ranges.  The
consequence of this emerging architecture is that a user could
potentially receive broadcast from the BS as well as connect to
(several) wireless APs. Therefore, we could place caches at local APs
and complement them with macro-cellular (BS) broadcast.

In this paper we study a problem based on an architecture where
content is stored at multiple APs without \emph{a priori} knowing the user
requests, and the base-station broadcast is used judiciously to
complement the local caching, \emph{after} the user requests are
known. This is motivated by the new approach initiated
in \cite{maddah-ali2012, maddah-ali2013} where it has been shown that joint
design of storage and delivery (a.k.a. ``coded caching'') can
significantly improve content delivery rate requirements. This was
enabled by content placement that creates (network-coded) multicast
opportunities among users with access to different storage units, even
when they have different (and \emph{a priori} unknown) requests. This enables
an examination of the optimal trade-off between cache memory size and
broadcast delivery rate.

The setup studied in \cite{maddah-ali2012,maddah-ali2013} consisted of
single-level content, i.e., every file in the system is
uniformly demanded.  However, it is well understood that content
demand is non-uniform in practice, with some files being more popular
than others.  Motivated by
this, \cite{niesen2013,Zcodedcaching,ZhangArbitrary,MLcodedcaching,
HKDinfocom15, MLCCusers} considered such
non-uniform content demand, following different models. 
In \cite{niesen2013,Zcodedcaching,ZhangArbitrary}, the setup
considered a single user per cache requesting a file independently and
randomly according to some (arbitrary) probability distribution that
represents content popularity.  These works studied the trade-off
between the average rate and the cache memory.  A memory-sharing
scheme was proposed in \cite{niesen2013}, and its achievable rate was
characterized.  However, from our understanding, this scheme was not
shown to be order-optimal in general.%
\footnote{We refer to an ``order-optimal'' result as
one that is within a constant multiplicative factor from the
information-theoretic optimum. The constant is to be independent of
the number of users, caches, memory size and number of popularity
levels.}
In \cite{Zcodedcaching}, a different scheme was proposed,
based on a clustering of the most popular files into a single content
level, which was shown to
be order-optimal for Zipf-distributed content and, more recently, for
arbitrary distributions in \cite{ZhangArbitrary}.

By contrast, in \cite{MLcodedcaching}, a deterministic multi-level
popularity model was introduced (simultaneous to the aforementioned
other non-uniform popularity models), where content is divided into
discrete levels based on popularity. In this paper, we focus on this
model and study it mostly in the context where a large number of
users connect to each cache (``multi-user model''), and, for each
level, a fixed and \emph{a priori} known fraction of the users per
cache request files from said level.  It is easy to see that, when the
number of users per cache is large enough, this deterministic model
will closely approximate an equivalent stochastic-demands model
similar to \cite{niesen2013,Zcodedcaching,ZhangArbitrary}. We will
also study the scenario where users could connect to multiple access
points (caches) as well as listen to the broadcast to get the desired
content. In short, the setup considered has a different popularity
model and user population as well as cache access than considered in
earlier literature. We will also compare the results, for this
popularity model, between setups with many users per cache (multi-user
setup) and a single user per cache (single-user setup).

The main contribution of this work is, for any given multi-level
content popularity profile, to approximately solve the trade-off of
the transmission cost at the BS with the storage cost at the APs. In
addition, we also approximately solve the case where users have access
to multiple APs. Finally, we study the effect of number of users per
cache in the multi-level content popularity model. In particular, the
following are the core technical contributions of the
work:%
\footnote{Shorter versions of these results have been published
in \cite{MLcodedcaching, HKDinfocom15, MLCCusers}.}.
\begin{itemize}
\item Non-cut-set-based information-theoretic lower bounds, for the multi-user
model with users accessing multiple caches, which are used to evaluate the performance of the proposed schemes.
\item A memory-sharing scheme, which divides the available storage at each cache (AP)
among the various popularity levels. A striking aspect of this
(order-optimal, for the multi-user setup) solution is that,
in some regimes, it is better to
store some less popular content without completely storing the more
popular content, even when cache memory is available.
\item We demonstrate order-optimality of the scheme with respect to the
information-theoretic lower bound that is independent of the number of
popularity levels, number of users, files, and caches.
\item We demonstrate that drastically different strategies are order-optimal
for the multi-user and single-user setup.  In the
single-user case, we show that clustering the most popular levels and
giving them all the memory, leaving none for the rest, is
order-optimal; a strategy proposed
in \cite{Zcodedcaching,ZhangArbitrary}. In contrast, the multi-user
case requires a complete separation of the different levels and a
division of the memory between them.

\end{itemize}

The paper is organized as follows.  Section~\ref{sec:setup} formulates
the problem, describing precisely the multi-user and single-user
setups.  We establish some background in
Section~\ref{sec:preliminaries}, which enables us to state the main
results in Section~\ref{sec:results}.  The caching and delivery
strategy for the multi-user set up, as well as corresponding lower bounds,
are given in Section~\ref{sec:multi-user},
while the single-user setup is studied in Section~\ref{sec:single-user}.  A brief
discussion about the dichotomy in the two setups is given in
Section~\ref{sec:comparison}. The paper concludes in
Section \ref{sec:discussion} with a discussion and some numerical
evaluations to interpret the results.  Many of the detailed proofs are
given in the appendices.

\subsection{Related work}

Content caching has a rich history and has been studied extensively,
see for example \cite{Wessels:2001} and references therein. More
recently, it has been studied in the context of Video-on-Demand
systems where efficient content placement and delivery schemes have
been proposed in \cite{korupolu1999, Borst:2010, Tan:2013,
llorca2013}. The impact of content popularity distributions on caching
schemes has also been widely investigated, see for
example \cite{Wolman99, breslau1999web, Applegate:2010}. 

Most of the literature has focused on wired networks and, as
argued before, the solutions there do not carry directly to wireless
networks. Recently, \cite{FemtoCaching} proposed a caching
architecture for heterogeneous wireless networks, with the small-cell
or WiFi access points acting as helpers by storing part of the
content. A content placement scheme is formulated and posed as a
linear program. However, the (information-theoretic) optimality of
such schemes was not examined in that work. 
Another aspect (also common to most of the papers in the content
caching literature) is that the delivery phase used independent
unicasts to serve the different users. The important observation to
utilize broadcast to improve system performance by serving multiple
users simultaneously was made in \cite{ maddah-ali2013,
maddah-ali2012}. They initiated the study of coded caching where
joint design of storage and delivery was considered for the case with
a single level of files and single cache access during delivery by
proposing an order-optimal coded caching scheme.  These results have
been extended to online caching systems in~\cite{PMN13},
heterogeneous cache sizes \cite{wang2015fundamental}, unequal file
sizes \cite{zhang2015filesize}, and improved converse
arguments \cite{ghasemi2015improved, sengupta2015improved}.
Efficient coded caching schemes have been devised in \cite{vettigli2015efficient},
and the effect of finite file sizes has been investigated in \cite{shanmugam2015finite}.
Content caching and delivery has also been studied for hierarchical
tree topologies \cite{Hcodedcaching,KNMD14}, device to device networks
\cite{ji2013wireless,jeon2015d2d}, multi-server topologies
\cite{shariatpanahi2015multi}, and heterogeneous wireless networks
\cite{HKDinfocom15}.

Coded caching was extended to non-uniform popularity models in
\cite{niesen2013,Zcodedcaching,ZhangArbitrary,ji2015random}, where the setup
considered a single user per cache requesting a file independently and
randomly according to some (arbitrary) probability distribution that
represents content popularity. The trade-off between memory and average
delivery rate was studied in these works. Our work differs from these
as it uses a deterministic multi-level popularity model introduced
in \cite{MLcodedcaching}, enabling a worst-case rather than average
case analysis. We analytically characterize the order-optimal
splitting parameters for the memory-sharing scheme, even with user
access to multiple caches. The dichotomy of order-optimal schemes
between having multiple users per cache and a single user per cache is
also demonstrated for this multi-level popularity model.

Other related work includes \cite{Gitzenis13} which derives scaling
laws for content replication in multihop wireless
networks; \cite{Ioannidis:2010} which explores distributed caching in
mobile networks using device-to-device
communications; \cite{AltmanAG13} which studies the benefit of coded
caching when the caches are distributed randomly; and \cite{YangH13}
which explores the benefits of adaptive content placement, using
knowledge of user requests.

\section{Setup, Notation, and Formulation}
\label{sec:setup}

\subsection{Setup}

Consider a network where a group of users request files from a server.
All files are assumed to be of size $F$ bits.
Prior to any user requests, a \emph{placement phase} occurs in which information about these files is placed in the caches of $K$ access points (APs); each cache has a capacity of $MF$ bits.
Then, in the \emph{delivery phase}, users connect to the different caches, and each requests a file based on an underlying file popularity model: more popular files are more likely to be requested.
The server then sends, through the base station (BS), a broadcast message of size $RF$ bits that all the users can hear.
The users combine the broadcast with the contents of their cache to recover the file that they have requested.
Clearly, there is a trade-off between the values of $M$ (the ``cache memory'') and $R$ (the ``broadcast rate''): the larger the cache memory, the more information the caches can store, and hence the smaller the broadcast rate needed to serve the requests.
Our goal is to characterize this trade-off.

\subsection{Multi-level popularity}
\label{sec:setup-popularity}

In multi-media applications such as video-on-demand, we often find that a small number of files are requested by many more users than the rest of the files.
This difference in popularity can easily influence the caching system described above.
For example, when deciding what to store in the (limited-capacity) caches, one would want to give more of the cache memory to the more popular files, since they will be, on average, requested more often.

Different popularity models have been considered in the literature.
The simplest model was studied in \cite{maddah-ali2012,maddah-ali2013}.
In this model, all files are equally popular, meaning that there is no preference among the users to choose one file over the others.
The results were of a worst-case nature, i.e., they are true for all possible (valid) combinations of user demands.
While this model allowed the first approximate-optimality caching result and introduced the idea of coded caching, it is not an accurate representation of typical multi-media data.
To introduce the effect of file popularity, we studied a multi-level popularity model in \cite{MLcodedcaching,HKDinfocom15,MLCCusers}, in which files are divided into different popularity classes (\emph{levels}), and files within each class are equally popular.
The results are also based on a worst-case analysis.
Finally, probabilistic models were also studied in the literature.
In these, user demands are stochastic and follow some probability distribution, and the focus is on average results rather than worst-case ones.
There has been focus on Zipf distributions \cite{niesen2013,Zcodedcaching}, which can be seen in examples such as YouTube videos (see \figurename~\ref{fig:youtube} based on data from \cite{YoutubeRepository}), but also on arbitrary distributions \cite{niesen2013,ZhangArbitrary}.

\begin{figure}
\centering
\includegraphics[width=\myplotwidth]{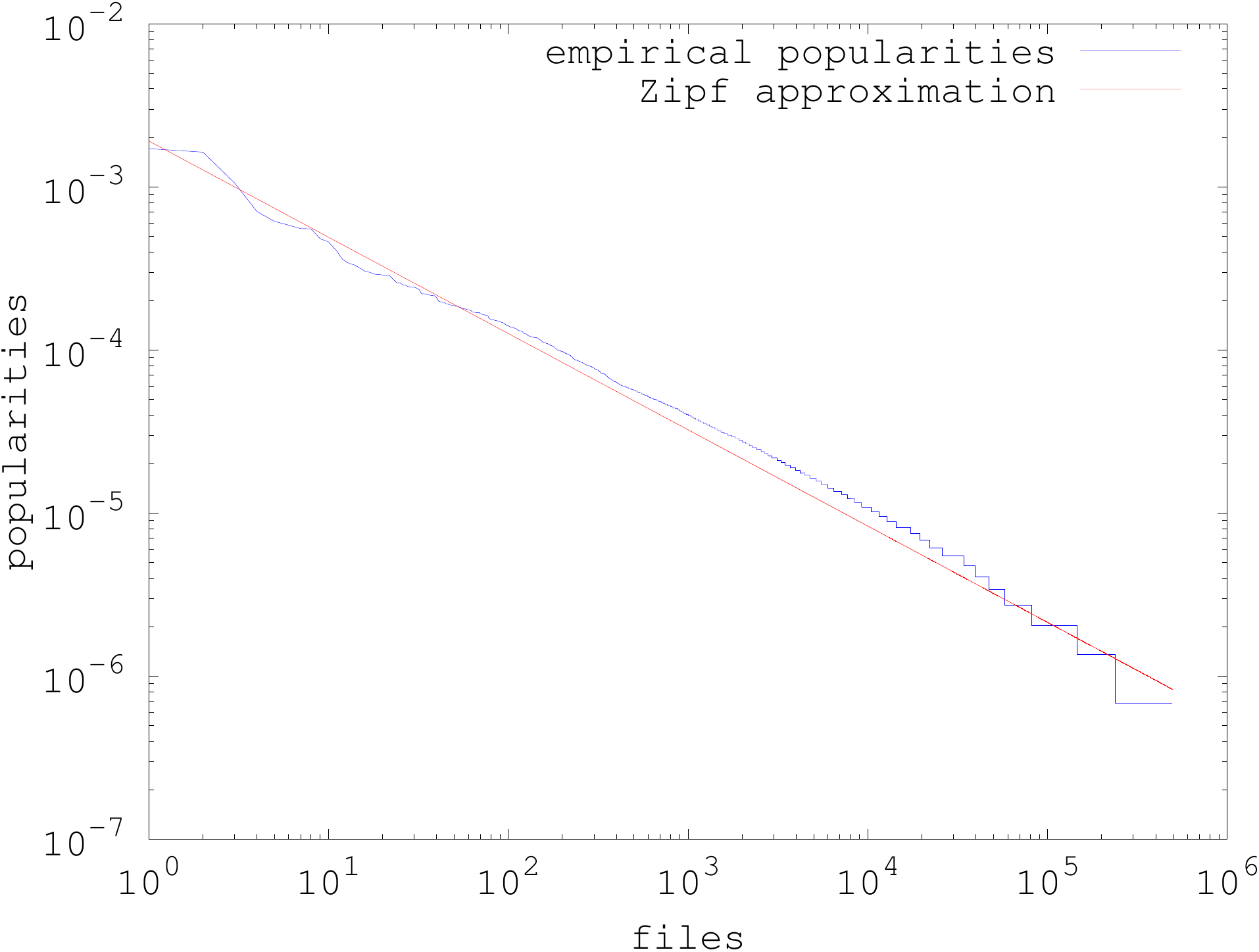}
\caption{Empirical popularities of some YouTube videos (based on number of views), with an approximating Zipf distribution.}
\label{fig:youtube}
\end{figure}

The popularity model that we consider in this paper is the \emph{multi-level} model.
The files are divided into $L$ \emph{popularity levels}, such that all files in a single level are equally popular.%
\footnote{For a discussion of the multi-level popularity model, see Section~\ref{sec:discretizing}.}
The levels consist of $N_1,\ldots,N_L$ files.
Furthermore, the total number of users in the system who are requesting files from each level $i$ is \emph{fixed} and \emph{known} to the designer \emph{a priori}.

To motivate the determinism in this last point, consider the following example.
Suppose there are two popularity levels, and assume a stochastic-demands setup where each user is three times as likely to request a file from the first level as he is from the second.
If there are $40$ users in the network, then we would expect that about $30$ of them will request a file from the first level, and $10$ from the second.
By the law of large numbers, when a large number of users is present in the system, we expect a concentration of the number of users requesting files from each level around their means.
Because of this concentration, the stochastic-demands model will closely resemble the determinism in the multi-level model that we adopt.

In this paper, we will use the phrase \emph{user profile} to refer to the arrangement of users across the caches.
Specifically, it is the number of users requesting a file from level $i$ at cache $k$, for every pair $(i,k)$.
Moreover, we define the \emph{user demand vector} or \emph{user request vector} as the vector of specific files requested by each user.

\subsection{Multi-level access}

Another aspect of the caching problem that we study is the possibility of users to access more than one cache.
This introduces a third dimension to the trade-off.
The more caches users can access, the more information they can gather from them for the same memory, and hence the less information they need from the broadcast transmission.
However, an increased degree can be costly to users who must now establish connections with multiple APs.

We introduce this multiple-access dimension to the problem by augmenting the multi-level popularity model with a \emph{multi-level access} aspect.
Users are required to access a certain number of caches based on the popularity level of the file they request.
Specifically, all users demanding a file from level $i$ must access $d_i$ caches.
We call $d_i$ the \emph{access degree} of level $i$.
Because we would prefer that users connect to the nearest caches, these $d_i$ caches are consecutive, with a cyclic wrap-around for symmetry, as in \figurename~\ref{fig:setup-F-ma}.

\subsection{Number of users}

As we will see later, the caching system that we have described behaves differently depending on the number of users \emph{per cache}.
In particular, it requires different strategies when this number is large or small.

We focus our attention on the two extreme cases of number of users.
In one extreme, we assume that there is only one user at every cache.
Recall from Section~\ref{sec:setup-popularity} that the popularity model assumes the number of users requesting a file from each level is fixed and known.
However, since there is only one user per cache, we do not know beforehand which user will connect to which cache.
Thus the user profile is unknown.
We call this extreme the \emph{single-user setup}.
In the other extreme, the number of users per cache is large enough that the concentration of the levels requested by each user manifests itself, not only on the overall set of users, but also on the set of users connected to a single cache.
In other words, we assume that the number of users requesting files from each level is fixed \emph{per cache}.
Furthermore, we assume a symmetry across the caches, so that every level is represented equally at each cache.
We call this setup the \emph{multi-user setup}.

After introducing the two setups, we now give their formal definitions.

\subsubsection{Multi-user setup}

\begin{figure}
\centering
\includegraphics[width=\myfigswidth]{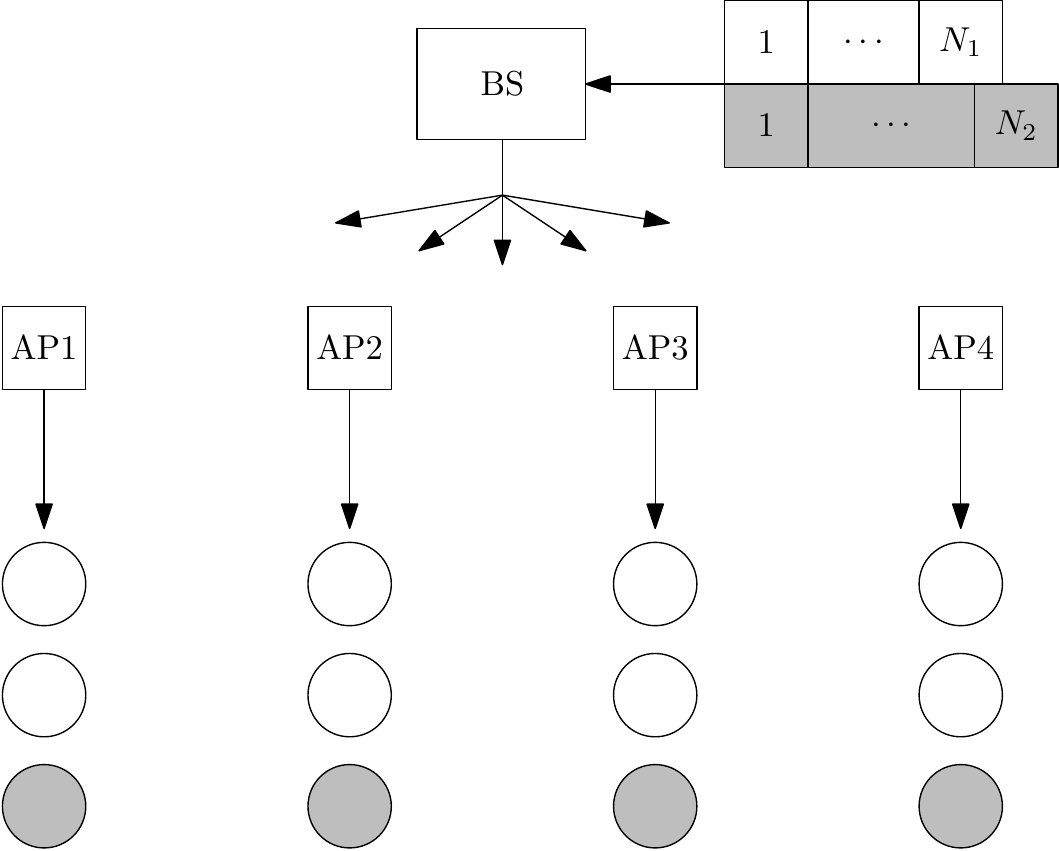}
\caption{Multi-user setup with $K=4$ caches, and $L=2$ levels with $(U_1,U_2)=(2,1)$ users per cache.
Both levels have an access degree of~$1$.}
\label{fig:setup-F}
\end{figure}

Consider the setup shown in \figurename~\ref{fig:setup-F-ma}.
For every level $i$ and every consecutive subset of $d_i$ caches, there are exactly $U_i$ users connecting to these caches and requesting a file from level $i$.
Notice that every level is represented equally at every cache.

\begin{figure}
\centering
\includegraphics[width=\myfigswidth]{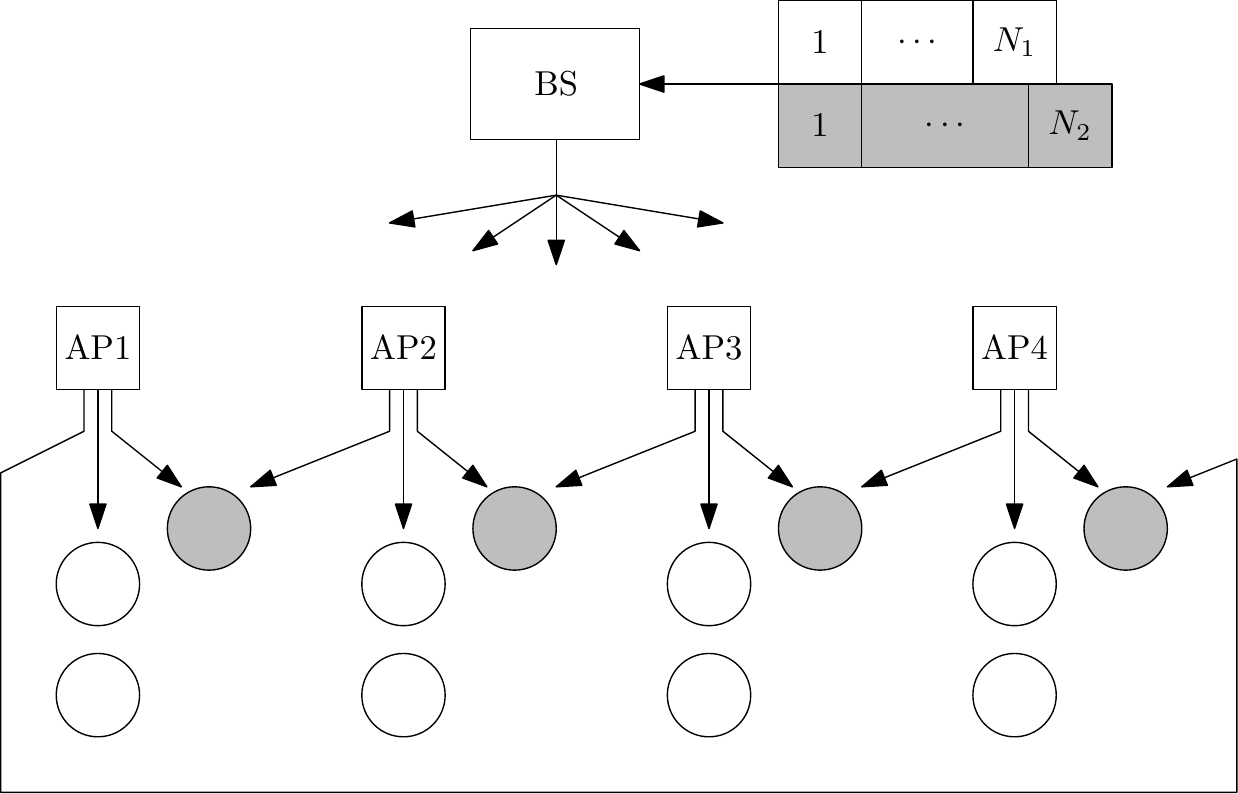}
\caption{Multi-user setup with $K=4$ caches, and $L=2$ levels with $(U_1,U_2)=(2,1)$ users per cache, and access degrees of $(d_1,d_2)=(1,2)$.}
\label{fig:setup-F-ma}
\end{figure}

\subsubsection{Single-user setup}

\begin{figure}
\centering
\includegraphics[width=\myfigswidth]{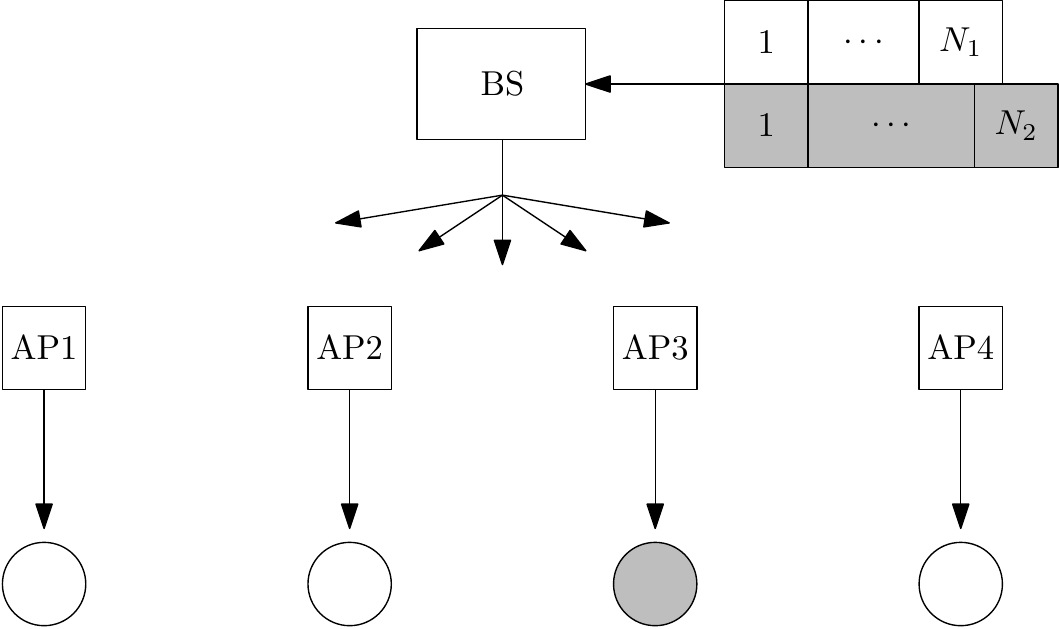}
\caption{Single-user setup with $K=4$ caches, and $L=2$ levels with $(K_1,K_2)=(3,1)$ users.}
\label{fig:setup-G}
\end{figure}

Consider now the setup in \figurename~\ref{fig:setup-G}, depicting the other extreme.
We have only one user connecting to every cache, for a total of $K$ users.
The only information known \emph{a priori} is that, for each level $i$, exactly $K_i$ out of the $K$ users will request a file from $i$.
However, we do not know which users these will be.

For symmetry, we assume only single-access in the single-user setup, i.e., $d_i=1$ for all levels $i$.

\subsection{Rate-memory trade-off}

Recall that our main goal is to characterize the trade-off between the broadcast rate $R$ and the cache memory $M$.
The determinism in the setups allows us to study worst-case trade-offs as opposed to average-case trade-offs.

We say that a pair $(R,M)$ is \emph{achievable} if there exists a placement-and-delivery strategy that uses caches of memory $M$ and transmits, for \emph{any} possible combination of user requests (valid within the popularity model), a broadcast message of rate at most $R$ that satisfies all said requests with vanishing probability as the file size $F$ grows.
Our goal is to find all such achievable pairs.
In particular, we wish to find the optimal rate-memory trade-off:
\[
R^\ast(M) = \inf\left\{ R : \text{$(R,M)$ is achievable} \right\},
\]
where the minimization is done over \emph{all} possible strategies.

\subsection{Problem formulation}
\label{sec:setup-formulation}

The problem of finding an exact characterization of the rate-memory trade-off is difficult even for the simplest cases \cite{maddah-ali2012}.
Therefore, in this paper, we will instead consider approximate characterizations.
In particular, we wish to find an achievability strategy that results in a rate-memory trade-off $R(M)$ such that:
\[
cR(M) \le R^\ast(M) \le R(M),
\]
where $c$ is some constant.
We say that such a strategy is \emph{order-optimal}.

We allow $c$ to depend on only one parameter: the maximum access degree $D=\max_i d_i$.
In practice, we do not expect that one user will be required to access a large number of caches, and so $D$ would be quite small.
However, $c$ must be independent of all other parameters.

\subsection{Regularity conditions}

We assume the following two regularity conditions.
First, for every popularity level $i$, there are more files than users.
In the multi-user setup, this means:
\begin{equation}
\label{eq:mu-reg1}
\forall i,\quad N_i\ge KU_i.
\end{equation}
In the single-user setup, we would write:
\begin{equation}
\label{eq:su-reg}
\forall i,\quad N_i \ge K_i.
\end{equation}
This can be seen, for example, in video applications such as Netflix, where ``files'' would be video segments of a few seconds to a few minutes.
To borrow an example from \cite{Hcodedcaching}, if a database has 1000 popular movies of length 100 minutes, and each movie is divided into files (segments) of one minute each, the result is 100,000 files.
It is unlikely that over 100,000 users will each be watching one of those 1000 movies at the same time.

Second, \emph{in the multi-user setup only}, we assume that no two levels have very similar popularities.
The popularity of a level can be written as the number of users per file of the level, i.e., as $U_i/N_i$.
Hence, if $i$ is a more popular level than $j$, the regularity condition states:%
\footnote{As we will see in later sections, the variables $U_i$ and $N_i$ will often appear inside square roots.
For this reason, phrasing the regularity condition using the square roots is more useful.}
\begin{equation}
\label{eq:mu-reg2}
\sqrt{\frac{U_i/N_i}{U_j/N_j}} \ge \frac{D}{\beta} = 198 D,
\end{equation}
where $\beta=\frac{1}{198}$ is called the \emph{level-separation factor}.
The reasoning behind this condition is that, if it did not hold for some levels $i$ and $j$, then we can think of them as essentially one level with $N_i+N_j$ files and $U_i+U_j$ users per cache.
The resulting popularity $\frac{U_i+U_j}{N_i+N_j}$ would be close to both $U_i/N_i$ and $U_j/N_j$.

\subsection{Notation table}

For reference, we present in \tablename~\ref{tbl:notation} all the notation that we use in this paper.

\begin{table}
\renewcommand{\arraystretch}{1.2}
\centering
\caption{Notation}
\label{tbl:notation}
\begin{tabular}{|c|l|}
\hline
\multicolumn{2}{|l|}{\textbf{All setups}}\\
\hline
$K$   & \# of caches\\
$L$   & \# of popularity levels\\
$N_i$ & \# of files in level $i$\\
$F$   & file size\\
$R$   & broadcast rate (normalized)\\
$M$   & cache memory (normalized)\\
$R^\ast(M)$ & optimal rate-memory trade-off\\
\hline
\multicolumn{2}{|l|}{\textbf{Multi-user setup}}\\
\hline
$U_i$ & \# of users per cache for level $i$\\
$d_i$ & access degree of level $i$\\
$D$   & maximum access degree\\
$\beta=\frac{1}{198}$ & level-separation factor\\
\hline
\multicolumn{2}{|l|}{\textbf{Single-user setup}}\\
\hline
$K_i$ & total \# of users for level $i$\\
\hline
\end{tabular}
\end{table}

\section{Preliminaries}
\label{sec:prelim}
\label{sec:preliminaries}
Traditional caching only uses multiple-unicast transmissions from the server to the users.
As a result, the total transmission size was proportional to the number of users, for any value of the cache memory.
Coded caching, initially introduced in \cite{maddah-ali2012}, brought a drastic improvement by eliminating the dependence of the transmission size on the number of users (except for very small cache memory).
This technique was shown to be approximately optimal in \cite{maddah-ali2012} (a centralized version) and in \cite{maddah-ali2013} (a decentralized version), under a setup with a single level of popularity and a single user at every cache, with a single-access structure.
We will refer to this setup as the Basic Setup, because it will form the basis of our main analysis.

The scheme for the Basic Setup, in its decentralized form (on which will we henceforth focus) consists in placing a random sampling of bits from all files in every cache, independently.
Consequently, there will be some overlap, but also some differences, in the bits present in every cache.
The BS then transmits linear combinations of these bits, taking advantage of the overlaps as well as the differences, so that the same linear combination can be useful for multiple users at once.
The resulting rate-memory trade-off is given in the following Lemma.

\begin{lemma}[Rate for the Basic Setup {\cite{maddah-ali2012}}]
\label{lemma:single-level}
For a single-level caching system with $K$ caches, $N$ files, a single user per cache with an access degree of $1$, and a cache memory of $M$, the following rate is achievable:
\[
R^0(M,K,N) = \min\left\{ \frac{N}{M} , K \right\} \cdot \left( 1-\frac{M}{N} \right).
\]
Furthermore, this rate is within a constant of the optimum.
\end{lemma}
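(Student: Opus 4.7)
The plan is to prove both halves of this lemma by analyzing the decentralized coded caching scheme of Maddah-Ali and Niesen and then invoking their matching cut-set converse for the order-optimality claim. In the placement phase, each cache will independently sample a uniformly random $MF/N$-bit subset of each of the $N$ files, so that any given bit of any given file lands in any given cache independently with probability $p = M/N$. In the delivery phase, given the demand vector, for each user $k$ and each $T \subseteq [K]\setminus\{k\}$ I let $V_{k,T}$ denote the bits of user $k$'s requested file that are cached by exactly the users in $T$, and for each nonempty $S \subseteq [K]$ the server broadcasts $\bigoplus_{k \in S} V_{k, S\setminus\{k\}}$ (zero-padding shorter summands to the common length). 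Each user $k \in S$ already stores $V_{j, S\setminus\{j\}}$ for every $j \in S \setminus \{k\}$, because $k \in S\setminus\{j\}$, so it cancels the other terms and recovers its own piece.

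Next I would control the rate by concentration. The size $|V_{k,T}|$ is a sum of $F$ i.i.d.\ Bernoullis with parameter $p^{|T|}(1-p)^{K-|T|}$, so a Chernoff bound plus a union over the $2^K$ subsets, all users, and all demand vectors shows that $|V_{k,T}| \le F p^{|T|}(1-p)^{K-|T|}(1+o_F(1))$ simultaneously with probability tending to $1$ as $F \to \infty$. Summing these bounds across $S$ and grouping by $s = |S|-1$ yields a total normalized rate of about $\tfrac{1-p}{p}\bigl(1-(1-p)^K\bigr)$; applying $1-(1-p)^K \le \min\{1, Kp\}$ and substituting $p = M/N$ reduces this to the claimed $\min\{N/M,\,K\}(1 - M/N)$.

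For the ``within a constant of the optimum'' half, I would invoke the cut-set-style lower bound already established in \cite{maddah-ali2012}: for every $s \in \{1,\ldots,\min(K,N)\}$, by asking a fixed group of $s$ users to decode $\lfloor N/s \rfloor$ disjoint request patterns from the same broadcast plus their caches, one obtains $R^\ast(M) \ge s - \tfrac{sM}{\lfloor N/s \rfloor}$. Optimizing over $s$ and comparing against $R^0(M,K,N)$ gives a multiplicative gap bounded by an absolute constant independent of $K$, $N$, and $M$.

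The step I expect to be most delicate is the concentration, which must hold uniformly over all $2^K$ cache-membership patterns, all users, and all demand vectors, while keeping the multiplicative inflation $(1+o_F(1))$ tight enough not to spoil the target rate. A straightforward union bound with Chernoff tails handles this once $F$ is large relative to $K$; the algebraic simplification to the $\min$-form and the matching cut-set converse are then essentially mechanical once the scheme and its concentration are in place.
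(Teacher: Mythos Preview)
The paper does not give its own proof of this lemma; it is stated as a citation to prior work (Maddah-Ali--Niesen), with only a one-paragraph informal description of the decentralized placement preceding it. Your proposal correctly reconstructs the standard argument from \cite{maddah-ali2013}: random-sampling placement, XOR-based delivery indexed by subsets $S$, concentration of the piece sizes, the closed-form sum $\tfrac{1-p}{p}\bigl(1-(1-p)^K\bigr)$, and the cut-set converse. This is exactly the argument the paper is implicitly invoking, so there is nothing to compare against here---your sketch is sound and matches the source the paper cites.

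Two small remarks. First, the union bound need not run over all $N^K$ demand vectors: since $|V_{k,T}|$ depends only on which file user $k$ requests and on the placement, it suffices to take a union over the $N\cdot 2^{K-1}$ (file, subset) pairs, which is a much smaller count and makes the ``$F$ large relative to $K$'' requirement milder. Second, note that the lemma as stated in the paper cites \cite{maddah-ali2012} (the centralized scheme), but the rate expression and the surrounding text (``in its decentralized form, on which we will henceforth focus'') make clear that the intended reference is the decentralized variant of \cite{maddah-ali2013}; your choice to analyze the decentralized scheme is therefore the right one.
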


Notice that, when $M>N/K$, then the rate becomes $(N/M-1)$, removing all dependence on the number of users.
Under traditional caching, the rate would have been $K(1-M/N)$.

\subsection{Generalizing to multi-user, multi-access}

In the multi-level setups, every level can have more than one user per cache, and can also have an access degree that is larger than $1$.
Thus, the first step towards solving the multi-level problem is to generalize the single-level problem to one where these two parameters are no longer $1$.
\figurename~\ref{fig:single-level} illustrates such a setup.

\begin{figure}
\centering
\includegraphics[width=\myfigswidth]{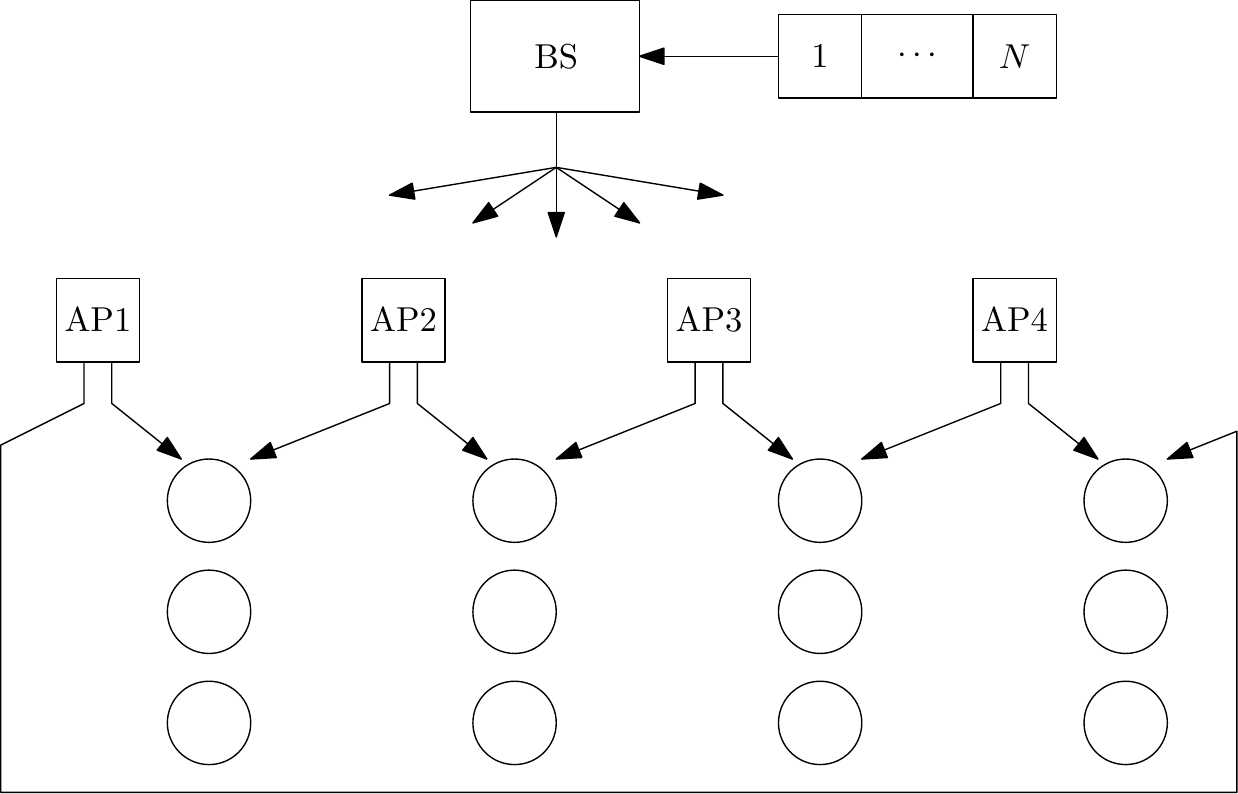}
\caption{Generalized single-level setup with $U=3$ users per cache, and an access degree $d=2$.}
\label{fig:single-level}
\end{figure}

We adopt the following strategy.
We first color all the caches into $d$ colors, such that every user is connected to exactly one cache of every color.
In parallel, we split every file into $d$ equal subfiles, and color each subfile using the same colors as the caches.
Next, we group all the $KU$ users into $dU$ groups of $K/d$ users each, such that no two users in the same group share any cache.
The end result is illustrated in \figurename~\ref{fig:single-level-groups}.

\begin{figure}
\centering
\includegraphics[width=\myfigswidth]{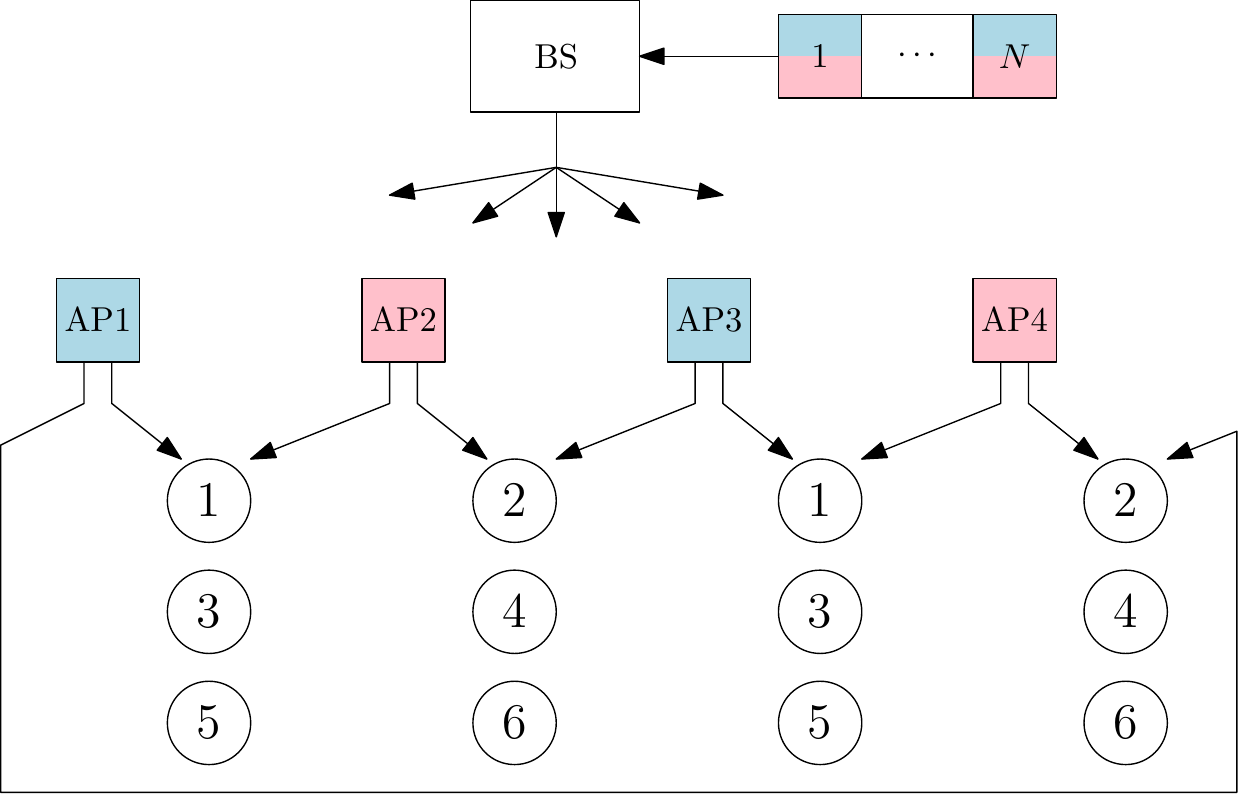}
\caption{An illustration of the scheme used on the example from \figurename~\ref{fig:single-level}.
Caches are colored into $d=2$ colors, and the files are divided and colored with the same colors.
In parallel, the users are divided into $dU=6$ groups, where users fro m the same group have no overlapping caches.}
\label{fig:single-level-groups}
\end{figure}

In the placement phase, we consider each of the $d$ colors separately.
The caches of a certain color perform a simple placement on the subfiles of the same color.
Since each file is split into $d$ equal subfiles, every cache can hold $dM$ subfiles, the equivalent of $M$ files.

In the delivery phase, each of the $dU$ groups is considered separately.
For every group and every color, we have a subsystem where $K/d$ users are each requesting a subfile of size $F/d$ bits from one like-colored cache of size $dM$ subfiles.
The total broadcast size is thus:
\[
RF = dU \cdot d \cdot R^0(dM,K/d,N)\cdot (F/d) \text{ bits.}
\]
Note that the memory given to the $R^0$ function is $dM$.
This is because the ``files'' in each subsystem have a size of $F/d$ bits, and thus the total cache size of $MF$ bits must be normalized by $F/d$ bits.

Using the above equation with Lemma~\ref{lemma:single-level}, we get the following theorem.

\begin{theorem}[Single-level rate-memory trade-off]
\label{thm:single-level}
Given a single-level caching system, with $N$ files, $K$ caches, $U$ users at each cache with access degree $d$, and a cache memory of $M$, the following rate is achievable:
\begin{equation}
\label{eq:single-level}
R^\text{SL}(M,K,N,U,d)
= U \cdot \min\left\{ \frac{N}{M} , K \right\}
\cdot \left( 1 - \frac{dM}{N} \right).
\end{equation}
\end{theorem}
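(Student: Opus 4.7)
The plan is to verify the rate expression by direct substitution into the decomposition that is already sketched in the text immediately preceding the theorem statement. Since the strategy itself has been described (color-and-group the caches, subfiles, and users, then invoke the Basic Setup on each color-group subsystem), what remains is (i) to justify that the coloring and grouping constructions actually exist for the given cyclic access structure, and (ii) to carry out the arithmetic that turns the per-subsystem rate from Lemma~\ref{lemma:single-level} into the claimed closed-form $R^{\text{SL}}$.

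First, I would argue the combinatorial feasibility. Because each user reads $d$ consecutive caches (with cyclic wrap-around), assigning color $c \in \{0,1,\dots,d-1\}$ to cache index $k$ whenever $k \equiv c \pmod d$ ensures every user touches exactly one cache of every color; the same coloring is applied to the $d$ subfiles of each file so that the color-$c$ caches are responsible for the color-$c$ subfiles. For the user grouping, I would note that, within a given color class of $K/d$ caches, each cache hosts $U$ users of that color, which I can partition into $U$ slots; this yields $dU$ groups total, each consisting of $K/d$ users that collectively touch $K/d$ distinct caches per color. (I would implicitly assume the natural divisibility $d \mid K$; otherwise one can handle the remainder by a standard memory-sharing argument, which I would just flag.)

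Next, I would treat each color and each group independently and invoke Lemma~\ref{lemma:single-level}. Inside one subsystem, the ``files'' are subfiles of size $F/d$ bits, there are $N$ of them, there are $K/d$ caches of physical capacity $MF$ bits (which is equivalent to $dM$ subfiles), and the $K/d$ users each request one subfile from their unique like-colored cache, i.e.\ a Basic Setup with parameters $(M', K', N') = (dM, K/d, N)$. Lemma~\ref{lemma:single-level} therefore gives an achievable per-subsystem rate (in subfile units) of $R^0(dM, K/d, N)$, contributing $R^0(dM, K/d, N) \cdot (F/d)$ bits. Summing over the $d$ colors and the $dU$ groups yields the total broadcast budget $RF = dU \cdot d \cdot R^0(dM, K/d, N) \cdot (F/d)$, so $R = dU \cdot R^0(dM, K/d, N)$.

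Finally I would simplify:
\[
dU \cdot R^0(dM, K/d, N)
= dU \cdot \min\!\left\{\tfrac{N}{dM},\tfrac{K}{d}\right\} \cdot \left(1 - \tfrac{dM}{N}\right)
= U \cdot \min\!\left\{\tfrac{N}{M}, K\right\} \cdot \left(1 - \tfrac{dM}{N}\right),
\]
which is exactly $R^{\text{SL}}(M,K,N,U,d)$. The only genuinely non-mechanical step is the feasibility of the coloring-and-grouping partition under the consecutive cyclic access pattern; I expect that to be the main obstacle to a fully rigorous write-up, since divisibility conditions on $K$, $d$, and $U$ need to be handled cleanly (most likely via memory-sharing or a rounding argument absorbed into the order-optimality constant), whereas everything else is either a direct invocation of Lemma~\ref{lemma:single-level} or straightforward algebra.
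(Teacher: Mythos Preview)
Your proposal is correct and follows essentially the same route as the paper: the paper's proof is precisely the coloring-and-grouping construction described in the paragraphs preceding the theorem, culminating in the identity $RF = dU \cdot d \cdot R^0(dM,K/d,N)\cdot(F/d)$, and you have filled in the feasibility of the coloring/grouping and the final algebraic simplification exactly as needed. Your remark about the implicit divisibility assumption $d\mid K$ is a fair caveat that the paper also leaves unaddressed.
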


While, in the single-access case, a cache memory $M<N$ meant a non-zero transmission rate, adding the multi-access aspect results in a zero achievable rate for the smaller memory value of $N/d$.
Intuitively, at $M=N/d$, it is as though we apply an erasure-correcting code on all the files and spread it across the caches, such that any $d$ caches can reproduce all files.

The results presented in this section will be key to our solution of the multi-level caching problem.
Indeed, in all that will follow, we use the above-described (decentralized) coded caching scheme as a black box that gives a rate $R^\text{SL}(M,K,N,U,d)$ for input parameters $M$, $K$, $N$, $U$, and $d$.

\subsection{A small multi-level example with exact characterization}

In order to illustrate the general multi-level problem, we will here present a small example that combines both multi-level popularity and multi-level access.
We give, for this example, an exact characterization of the rate-memory trade-off.

\begin{figure}
\centering
\includegraphics[width=.35\textwidth]{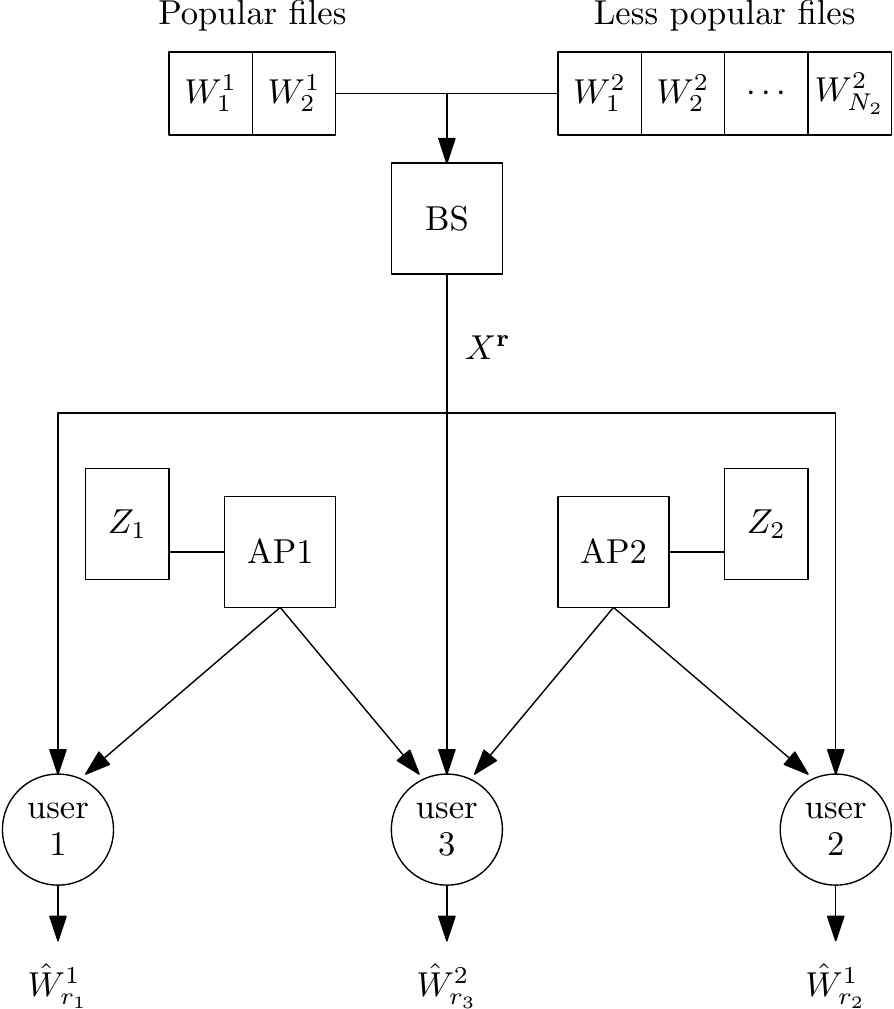}
\caption{Small example that illustrates multi-level popularity and access.}
\label{fig:small-ex-setup}
\end{figure}

Consider the setup in \figurename~\ref{fig:small-ex-setup}.
The server holds files from $L=2$ popularity levels.
The first level has $N_1=2$ files, and the second level has $N_2\ge4$ files.
There are two APs, each equipped with a cache of memory (i.e., size normalized by file size) $M$.
There is one user accessing each cache and requesting a file from level $1$ (users $1$ and $2$), and a third user accessing both caches and requesting a file from level $2$ (user $3$).

\begin{theorem}[Exact characterization for the small example]
\label{thm:small-example}
For the setup shown in \figurename~\ref{fig:small-ex-setup}, the optimal rate-memory trade-off is plotted in \figurename~\ref{fig:small-ex-region} and is characterized by:
\[
R^\ast(M)
= \max\left\{
3 - 2M,
\frac52 - M,
2 - \frac12 M,
1 - \frac{M-2}{N_2/2}
\right\}.
\]
\end{theorem}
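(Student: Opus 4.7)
The claimed $R^\ast(M)$ is a piecewise linear, convex decreasing function of $M$ with five corner points $(M,R) \in \{(0,3),\,(1/2,2),\,(1,3/2),\,(2,1),\,(2+N_2/2,0)\}$. I would prove it in two parts: achievability (an explicit scheme at each corner, interpolated by memory sharing) and converse (four matching linear lower bounds, one per affine piece of the $\max$).

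\textbf{Achievability.} Split each level-$1$ file into halves $A=(A_{(1)},A_{(2)})$ and $B=(B_{(1)},B_{(2)})$. At $M=0$, unicast the three demanded files, for rate~$3$. At $M=1/2$, use the \emph{coded} cache $Z_j = A_{(j)}\oplus B_{(j)}$ of size $F/2$; for any level-$1$ demand pattern the BS unicasts the two appropriate halves (rate~$1$) so that users~$1$ and~$2$ can invert the XOR, and user~$3$'s file is sent entirely (rate~$1$), giving $R=2$. At $M=1$, use uncoded storage $Z_j=(A_{(j)},B_{(j)})$; a single XOR of two halves (rate~$1/2$) serves both users~$1$ and~$2$ by the Maddah--Ali trick, and user~$3$'s file is sent at rate~$1$, yielding $R=3/2$. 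At $M=2$, each cache stores both level-$1$ files fully and only user~$3$'s file is broadcast (rate~$1$). At $M=2+N_2/2$, additionally place the first half of each level-$2$ file in cache~$1$ and the second half in cache~$2$, so that user~$3$ reconstructs its requested file from the two caches alone (rate~$0$). I would verify each scheme against the four level-$1$ demand patterns and any level-$2$ request, then memory-share between adjacent corners to fill in the interior.

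\textbf{Converse.} The bound $R \ge 3-2M$ is the single-demand cut-set: for demand $(A,B,X_1)$, the triple $(Z_1,Z_2,X^{(1)})$ determines three independent files, so $(2M+R)F \ge 3F$. The bound $R \ge 1-(M-2)/(N_2/2)$ comes from considering $N_2$ demand vectors in which user~$3$ cycles through every level-$2$ file while users~$1$ and~$2$ sweep $\{A,B\}$; the joint $(Z_1,Z_2,X^{(1)},\dots,X^{(N_2)})$ must decode all $N_2+2$ distinct files, and a refinement exploiting that each level-$1$ file must be individually recoverable from either cache alone with its own broadcast sharpens the naive $2M+N_2R\ge N_2+2$ into $2M+N_2R\ge N_2+4$. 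The two middle inequalities $R\ge 5/2-M$ and $R\ge 2-M/2$ are \emph{not} implied by any single cut-set; the plan is to derive them via a non-cut-set / sliding-window entropy inequality (in the spirit of the tool flagged in the introduction), summing the entropies of tuples $(Z_k,X^{(t_1)},\dots,X^{(t_s)})$ over a short, carefully chosen sequence of demand vectors and applying Han-type submodularity so that the two caches enter with the fractional weights $1$ and $1/2$ respectively.

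\textbf{Main obstacle.} The crux is the two middle bounds. Pure cut-set counting yields only $R\ge 3-2M$, $R\ge 2-M$, and $R\ge 1+(2-2M)/N_2$, all of which match the achievability at the endpoints but miss the interior slopes $-1$ and $-1/2$ at the corners $(1/2,2)$ and $(1,3/2)$. Producing exactly these slopes requires identifying the right short sequence of demand vectors and the right Han-type combination of conditional cache-and-broadcast entropies, exploiting the access asymmetry (user~$3$ sees both caches while users~$1$ and~$2$ see only one each) within a single joint entropy argument. Making this fractional accounting precise is the main technical step.
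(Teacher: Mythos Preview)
Your achievability is exactly the paper's scheme, corner for corner. Your converse plan is also on track: the first bound is the single cut-set you describe, and the fourth bound $2M+N_2R\ge N_2+4$ is obtained in the paper by precisely the ``refinement'' you hint at---splitting the $N_2$ broadcasts into two groups, pairing one group with $Z_1$ and the other with $Z_2$, extracting the two level-$1$ files from each pair via Fano (contributing $4F$), then using subadditivity to merge and decode all $N_2$ level-$2$ files.

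The only place you overestimate the difficulty is the middle two bounds. You flag them as the main obstacle and anticipate sliding-window or Han-type inequalities with fractional cache weights. In fact the paper uses nothing more than basic subadditivity of entropy, $H(A)+H(B)\ge H(A\cup B)$, applied once. For $R\ge 5/2-M$: take demands $\mathbf{r}_1=(1,2,1)$ and $\mathbf{r}_2=(2,1,2)$, write $H(Z_1,X^{\mathbf{r}_1})+H(Z_2,X^{\mathbf{r}_2})$, condition each term on $W^1_1$ (Fano gives $2F$), apply subadditivity to merge, then Fano on the joint to extract $W^1_2,W^2_1,W^2_2$ (another $3F$), yielding $2R+2M\ge5$. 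For $R\ge2-M/2$: same pattern but pair two broadcasts with each cache, i.e., $H(Z_1,X^{\mathbf{r}_1},X^{\mathbf{r}_2})+H(Z_2,X^{\mathbf{r}_3},X^{\mathbf{r}_4})$; condition each side on both level-$1$ files (Fano gives $4F$), merge, then Fano again on the joint to extract four level-$2$ files (another $4F$), yielding $4R+2M\ge8$. The different slopes $-1$ and $-1/2$ arise simply from varying the number of broadcasts paired with each cache, not from any fractional weighting of cache entropies. So your ``main obstacle'' dissolves once you try the simplest two-term subadditivity argument; no sliding-window machinery is needed for this example.
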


\begin{figure}
\centering
\includegraphics[width=\myfigswidth]{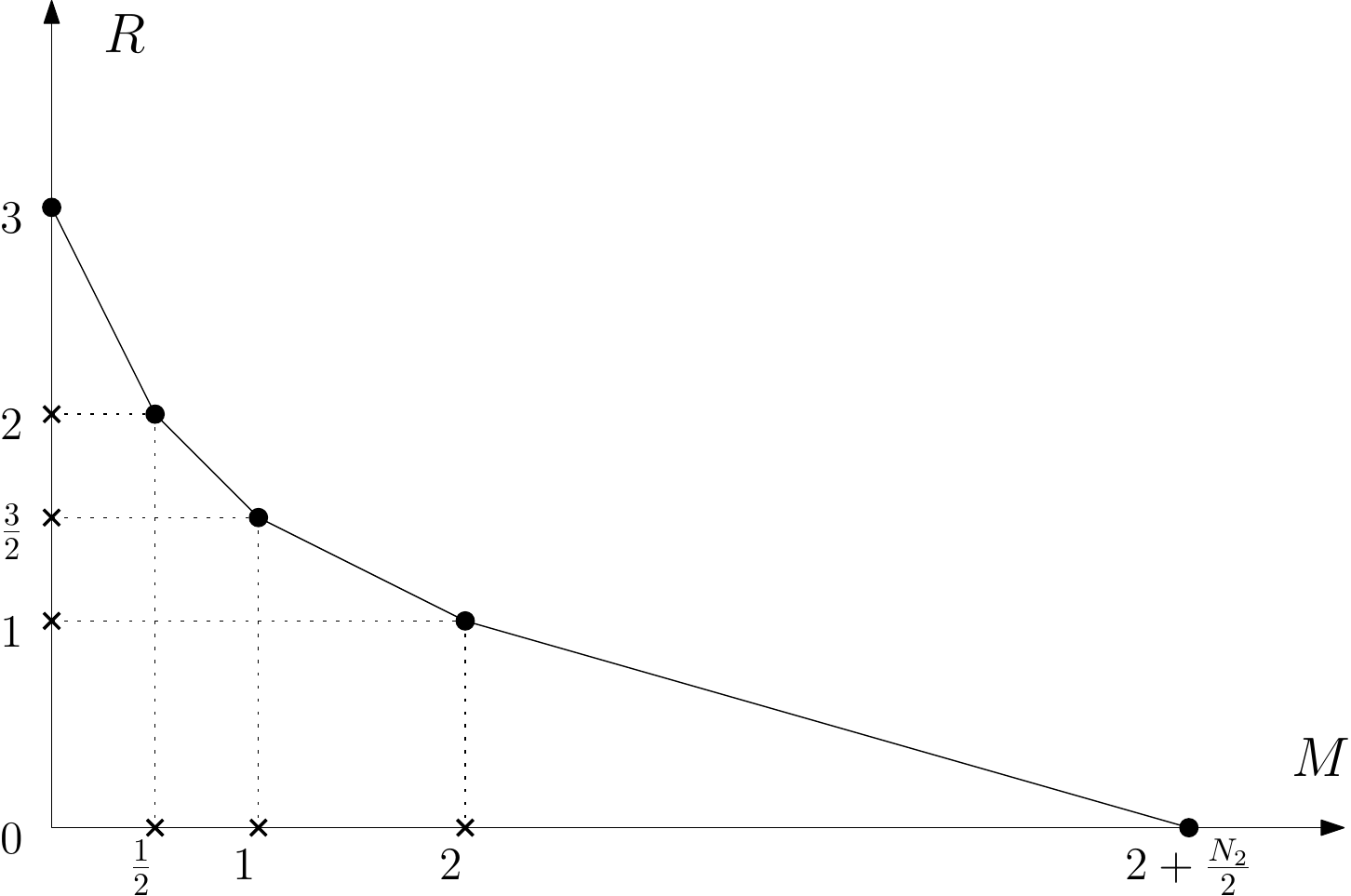}
\caption{Optimal rate-memory trade-off $R^\ast(M)$ for the small example.}
\label{fig:small-ex-region}
\end{figure}

The proof of Theorem~\ref{thm:small-example} is given in Appendix~\ref{app:small-example}.
However, for illustration and to gain intuition about the general problem, we will here briefly discuss the achievability scheme for two values of the cache memory $M$: $M=1$ (for which $R=3/2$ is achievable) and $M=1/2$ (for which $R=2$ is achievable).

Suppose that $M=1$, so that each cache can hold the equivalent of one file.
We first split each file in level $1$ into two equal parts: $W^1_n = ( W^1_{n,a}, W^1_{n,b} )$, for $n=1,2$.
Now, each cache exclusively stores one half of each popular file, and stores nothing from level $2$.
Thus the first cache will contain $(W^1_{1,a},W^1_{2,a})$ and the second cache will contain $(W^1_{1,b},W^1_{2,b})$.
When the users make their requests, the BS transmits $W^2_{r_3}$ completely for user 3, and sends a coded transmission $(W^1_{r_1,b}\oplus W^1_{r_2,a})$ for users 1 and 2 together.
Combining the transmission with the contents of their respective caches, each of users 1 and 2 can recover the file that they have requested.
In total, the BS would have transmitted one complete file ($W^2_{r_3}$), plus the equivalent of one half-file (the linear combination), for a total rate of $R=3/2$.

Suppose now that $M=1/2$, i.e., each cache can only hold the equivalent of half a file.
We again split the two level-$1$ files just like in the previous case.
However, this time the first cache stores $(W^1_{1,a}\oplus W^1_{2,a})$ while the second cache stores $(W^1_{1,b}\oplus W^1_{2,b})$.
When the users make their requests, the BS again transmits $W^2_{r_3}$ to serve user 3, but also sends $W^1_{r_1,b}$ and $W^1_{r_2,a}$ for users 1 and 2.
This allows them to recover the file that they have requested by combining the transmission with the side-information available at their caches.
Since the BS has transmitted one complete file and two half-files, the total rate is $R=2$.

While, in this small example, an exact characterization of the rate-memory trade-off was found, this is difficult in general.
For this reason, we focus our attention on order-optimality results as stated in Section~\ref{sec:setup-formulation}.

\section{Main Results}
\label{sec:results}
In this section, we provide the approximately optimal rate-memory trade-off for each of the two setups (multi-user and single-user).
We discuss how each such trade-off is achieved.

\subsection{Multi-user setup}
\label{sec:results-multiuser}

The placement-and-delivery strategy that we adopt for the multi-user setup is a \emph{memory-sharing} strategy.
It consists of dividing the cache memory between all the $L$ levels, and then treating each level as a separate caching sub-system, with the reduced memory.
In other words, we give level $i$ a memory $\alpha_iM$, where $\alpha_i\in[0,1]$ and $\sum_i\alpha_i=1$, and we then apply a single-level placement-and-delivery strategy for this level on this $\alpha_iM$ memory, separately from the other levels.
The total rate for this scheme is thus:
\begin{equation}
R^\text{MU}\left(M,K,\{N_i,U_i,d_i\}_i\right)
= \sum_{i=1}^L R^\text{SL}(\alpha_iM,K,N_i,U_i,d_i),
\end{equation}
where $R^\text{SL}(\cdot)$ is defined in \eqref{eq:single-level}.

By optimizing the overall rate over the memory-sharing parameters $\{\alpha_i\}_i$, we establish a memory allocation which we will show is order-optimal.
At a high level, this allocation is done by partitioning the popularity levels into three sets: $H$, $I$, and $J$.
The levels in $H$ have such a small popularity that they will get no cache memory.
On the opposite end of the spectrum, the most popular levels are assigned to $J$ and are given enough cache memory to completely store all their files in every cache.
Finally, the rest of the levels, in the set $I$, will share the remaining memory among themselves, obtaining some non-zero amount of memory but not enough to store all of their files.

Our choice of the $(H,I,J)$ partition and corresponding memory assignments are discussed in Section~\ref{sec:multi-user-achievability}.
This choice results in the following achievable rate.

\begin{theorem}
\label{thm:multi-user-achievability}
Given a multi-user caching setup, with $K$ caches, $L$ levels, and, for each level $i$, $N_i$ files and $U_i$ users per cache with access degree $d_i$, and a cache memory of $M$, the following rate%
\footnote{This expression of the rate is a slight approximation that we use here for simplicity as it is more intuitive.
An exact and complete description of the achievable rate is given in Section~\ref{sec:multi-user-achievability}.}
is achievable:
\[
R^\text{MU}(M) \approx \sum_{h\in H} KU_h
+ \frac{ \left( \sum_{i\in I} \sqrt{N_iU_i} \right)^2 }{ M - \sum_{j\in J}N_j/d_j }
- \sum_{i\in I}d_iU_i,
\]
where $(H,I,J)$ is a particular type of partition of the set of levels called an $M$-feasible partition.%
\footnote{This type of partition is defined in Definition~\ref{def:m-feasible} and elaborated upon in Section~\ref{sec:multi-user-achievability}.}
\end{theorem}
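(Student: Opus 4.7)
The plan is to apply the memory-sharing strategy directly: we allocate fractions $\alpha_i\in[0,1]$ of the cache memory $M$ to each level $i$, with $\sum_i\alpha_i=1$, and treat each level as an independent single-level subsystem as in Theorem~\ref{thm:single-level}. The total achievable rate is therefore
\[
R^\text{MU}(M)=\sum_{i=1}^L R^\text{SL}(\alpha_iM,K,N_i,U_i,d_i),
\]
and the entire content of the theorem amounts to choosing the $\alpha_i$'s wisely, guided by the $M$-feasible partition $(H,I,J)$.

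First I would fix the allocation suggested by the partition: set $\alpha_h=0$ for $h\in H$, set $\alpha_jM=N_j/d_j$ for $j\in J$ (the threshold at which level~$j$'s single-level rate vanishes, because a rate-$1/d_j$ MDS-like placement across the $d_j$ accessed caches suffices), and spread the remaining memory $M'=M-\sum_{j\in J}N_j/d_j$ over the levels in $I$. Feeding these choices into the single-level expression \eqref{eq:single-level} gives three clean contributions:
\[
h\in H:\ R^\text{SL}(0,K,N_h,U_h,d_h)=KU_h,\qquad j\in J:\ R^\text{SL}(N_j/d_j,K,N_j,U_j,d_j)=0,
\]
and, assuming we are in the regime where $\min\{N_i/(\alpha_iM),K\}=N_i/(\alpha_iM)$ for every $i\in I$ (this is exactly what $M$-feasibility of $(H,I,J)$ will guarantee),
\[
i\in I:\ R^\text{SL}(\alpha_iM,K,N_i,U_i,d_i)=\frac{N_iU_i}{\alpha_iM}-d_iU_i.
\]

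Next I would optimize over the split of $M'$ among $I$. Writing $x_i=\alpha_iM$ subject to $\sum_{i\in I}x_i=M'$, the problem reduces to minimizing $\sum_{i\in I}N_iU_i/x_i$. A single Lagrange multiplier (or Cauchy--Schwarz in the form $\bigl(\sum_i\sqrt{N_iU_i}\bigr)^2\le\bigl(\sum_ix_i\bigr)\bigl(\sum_iN_iU_i/x_i\bigr)$) yields the optimum at $x_i\propto\sqrt{N_iU_i}$, with minimum value $\bigl(\sum_{i\in I}\sqrt{N_iU_i}\bigr)^2/M'$. Summing the three contributions gives precisely the expression in the theorem.

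The main obstacle is not the optimization but establishing that the proposed allocation actually lies in the correct regime of the piecewise single-level rate. Concretely, the formula $N_iU_i/(\alpha_iM)-d_iU_i$ is only valid when $\alpha_iM\ge N_i/K$ (so the ``min'' picks $N/M$) and when $\alpha_iM\le N_i/d_i$ (so the one-shot term $1-dM/N$ is nonnegative and the level is not already fully cached). This is exactly the role of the $M$-feasibility condition in Definition~\ref{def:m-feasible}: it ensures that the partition $(H,I,J)$ is chosen so that, at the optimum $x_i\propto\sqrt{N_iU_i}$, every level in $I$ falls strictly in the interior of its usable memory range, every level in $H$ falls below the $N/K$ threshold, and every level in $J$ exceeds the $N/d$ threshold. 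Verifying these inequalities, and in particular showing that one can always pick at least one $M$-feasible partition for any given $M$ (using the level-separation regularity condition \eqref{eq:mu-reg2} to rule out degenerate situations), is the technical heart of the argument and is what I would devote the bulk of Section~\ref{sec:multi-user-achievability} to; once it is in place, the achievable rate in the theorem follows by plug-in.
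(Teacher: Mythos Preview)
Your plan matches the paper's approach: memory-share with $\alpha_h=0$ for $h\in H$, $\alpha_jM=N_j/d_j$ for $j\in J$, split the residual $M'=M-T_J$ over $I$ proportionally to $\sqrt{N_iU_i}$, and read off the Cauchy--Schwarz optimum. Two points where your expectations diverge from what Definition~\ref{def:m-feasible} actually delivers, however.

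First, $M$-feasibility as defined there only guarantees $\alpha_iM\in[0,N_i/d_i]$ for $i\in I$, \emph{not} $\alpha_iM\ge N_i/K$. The paper in fact uses the shifted allocation $\alpha_iM=\sqrt{N_iU_i}\,\tilde M-N_i/K$ with $\tilde M=(M-T_J+V_I)/S_I$, and then \emph{refines} $I$ into $I_0,I',I_1$ (equation~\eqref{eq:refined-partition}) to handle the boundary regimes separately: for $i\in I_0$ the single-level rate is bounded by $KU_i$, which is then shown to be at most $2S_I\sqrt{N_iU_i}/(M-T_J+V_I)$ using the very inequality that places $i$ in $I_0$; for $i\in I_1$ (near-maximal memory) a linear bound is used instead. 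This is the content of Lemma~\ref{lemma:multi-user-achievability}, and it is why the exact achievable bound carries a factor~2 and the $V_I$ correction rather than the clean Cauchy--Schwarz expression. Second, the existence of an $M$-feasible partition (Lemma~\ref{lemma:m-feasible-existence}) does \emph{not} rely on the level-separation condition~\eqref{eq:mu-reg2}; that condition enters only to show $|I_1|\le1$ (Proposition~\ref{prop:i1-size}), which matters for the gap analysis in Theorem~\ref{thm:multi-user-gap} rather than for achievability per se. Neither correction changes the headline expression you derive, since the theorem is explicitly stated as an approximation, but your sentence ``this is exactly what $M$-feasibility will guarantee'' is not literally true and would trip you up if you tried to verify it against Definition~\ref{def:m-feasible}.
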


Intuitively, since a level $h\in H$ receives no cache memory, all requests from its $KU_h$ users must be handled directly from the broadcast.
Since, by regularity condition \eqref{eq:mu-reg1}, we have $N_i\ge KU_i$ for all levels $i$, then in the worst case a total of $KU_h$ distinct files must be completely transmitted for the users requesting files from level $h$.
The users in set $J$ require no transmission as the files are completely stored in all the caches; however, it does affect the rate through the memory available for levels in $I$.
This is apparent in the expression $M-\sum_{j\in J}N_j/d_j$.
Finally, the levels in $I$, having received some memory, require a rate that is inversely proportional to the effective memory and that depends on the level-specific parameters $N_i$, $U_i$, and $d_i$.

The structure of the $(H,I,J)$ partition that we have chosen allows us to efficiently compute it for all values of the cache memory $M$.
Indeed, we provide an algorithm in Section~\ref{sec:multi-user-achievability} that can find this partition, as well as the corresponding memory-sharing parameters $\alpha_i$, for every $M$ in $\Theta(L^2)$ running time.
Briefly, as $M$ is increased, levels get ``promoted'' from the set $H$ to $I$ to $J$.
The sequence of these promotions is directly determined by the popularity of the levels.

We now discuss the order-optimality of the memory-sharing scheme in the multi-user setup.
We develop new, non-cut-set lower bounds on the optimal rate, which use sliding-window entropy inequalities \cite{Liu14}, and show that the scheme achieves a rate that is within a constant factor of the optimal.
Note that this constant is independent of all the problem parameters except the largest AP access degree $D$.

\begin{theorem}\label{thm:multi-user-gap}
For all valid values of the problem parameters $K$, $L$,
$\{N_i,U_i,d_i\}_i$, and $M$, we have:
\[
\frac{R^\text{MU}(M)}{R^\ast(M)} \le c D,
\]
where $R^\text{MU}(M)$ is the rate achieved by memory-sharing, $R^\ast(M)$ is the optimal rate over all strategies, $D$ is the largest AP access degree $D=\max_i d_i$, and $c=9909$ is a constant (independent of all problem parameters).
\end{theorem}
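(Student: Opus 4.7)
The plan is to derive an information-theoretic lower bound on $R^\ast(M)$ that matches each of the terms appearing in the achievable rate of Theorem~\ref{thm:multi-user-achievability} up to a constant factor $cD$. Writing the achievable rate as $R^\text{MU}(M) = T_H + T_I$, where $T_H = \sum_{h\in H}KU_h$ captures the unserved levels and $T_I = \bigl(\sum_{i\in I}\sqrt{N_iU_i}\bigr)^2/\bigl(M - \sum_{j\in J}N_j/d_j\bigr) - \sum_{i\in I}d_iU_i$ captures the shared-memory levels, it is enough to show $R^\ast(M) \gtrsim T_H/D$ and $R^\ast(M) \gtrsim T_I/D$ separately, since the sum of two nonnegative quantities is at most twice their maximum.

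The bound on $T_H$ follows from a cut-set-style argument. For the levels $h\in H$ that receive no cache memory under the scheme, I would fix a worst-case demand in which each of the $KU_h$ users asks for a distinct file of level $h$; by regularity \eqref{eq:mu-reg1} this is feasible. Since every user can access at most $D$ caches carrying at most $DMF$ bits of side information, a standard cut separating these users from the rest of the network yields an inequality of the form $R^\ast(M) + DM \gtrsim \sum_{h\in H}KU_h$. The $M$-feasibility of the partition $(H,I,J)$ guarantees that the rate term dominates the memory term for these low-popularity levels, giving $R^\ast(M) \gtrsim T_H/D$. The role of set $J$ is slightly different: the $J$-levels contribute nothing to $T_I$'s numerator but absorb $\sum_{j\in J}N_j/d_j$ of the memory budget. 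An analogous cut-set argument across worst-case demands for $J$-users shows that any strategy caching substantially less than $N_j/d_j$ per cache for some $j\in J$ would pay a prohibitive broadcast cost of order $KU_j$, which by $M$-feasibility exceeds any savings; this justifies replacing $M$ by $M_{\text{eff}} = M - \sum_{j\in J}N_j/d_j$ in the $T_I$ converse without loss.

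The main obstacle is the bound on $T_I$, which requires a genuinely non-cut-set argument built around the sliding-window entropy inequality of~\cite{Liu14}. Heuristically, for each level $i\in I$ one constructs a family of worst-case demand instances parametrized by a window of $s_i$ consecutive caches, inside which the broadcast messages must jointly encode roughly $U_iN_i/s_i$ bits of new information beyond what the cache memory can absorb. A careful entropy count using the sliding-window inequality then yields a per-level bound of the shape $s_i R^\ast(M) + s_i^2 M_{\text{eff}} \gtrsim U_iN_i/D$; combining these bounds across $i\in I$ and choosing the windows to match the global memory constraint reduces, via Cauchy--Schwarz, to $R^\ast(M) \gtrsim (\sum_{i\in I}\sqrt{N_iU_i})^2/(D\,M_{\text{eff}})$, exactly matching $T_I$. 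The regularity condition \eqref{eq:mu-reg2} is essential here: the level-separation factor $\beta = 1/198$ guarantees that the popularities $\sqrt{U_i/N_i}$ are sufficiently well-separated for the sliding windows of distinct levels to combine additively rather than collapse into a single merged-level bound, which would only produce the weaker rate of a lumped level. Tracking the explicit constants through all three bounds, through the conversion from a maximum to a sum of three terms, and through the factor $D$ inherited from the multi-access reduction in Theorem~\ref{thm:single-level}, yields the claimed value $c = 9909$.
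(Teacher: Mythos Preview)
Your proposal has a genuine structural gap in how the lower bound is assembled.

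The step where you ``replace $M$ by $M_{\text{eff}} = M - \sum_{j\in J}N_j/d_j$'' is not a valid information-theoretic move. Your justification---that any strategy caching substantially less than $N_j/d_j$ for a level $j\in J$ would pay a broadcast cost of order $KU_j$---is an argument about what \emph{good} strategies must do, not a bound that holds for \emph{all} strategies. A converse cannot assume the optimal scheme allocates memory the way your achievability does. In the paper, the effective memory $M-T_J$ emerges inside a single inequality: the lower bound of Lemma~\ref{lemma:multi-user-converse} contains, for each $j\in J$, a positive term $N_j/(d_jb)$, and these combine with the global $-M/b$ term to produce $-(M-T_J)/b$. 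That is how $T_J$ gets subtracted, not by an external ``any good scheme must\ldots'' argument.

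Relatedly, your plan to derive ``per-level bounds of the shape $s_iR^\ast + s_i^2 M_{\text{eff}} \gtrsim U_iN_i/D$'' and then sum them via Cauchy--Schwarz cannot work as stated. If each level yields its own cut-set inequality with its own memory term, summing them multiplies the memory contribution by $|I|$ (or by $L$), and the resulting gap is no longer a constant. The whole point of the sliding-window inequality in this proof is that it is applied \emph{once}, nesting the levels inside a single entropy chain: one starts with $s_1$ caches for the level with smallest window, uses Lemma~\ref{lemma:sliding-window} to grow to $s_2$ caches, decodes level-$2$ files, and so on---producing a single inequality of the form
\[
R^\ast(M) \ge \sum_{i=1}^L \lambda_i \min\Bigl\{(s_it-d_i+1)U_i,\ \tfrac{N_i}{s_ib}\Bigr\} - \tfrac{t}{b}M,
\]
with \emph{one} memory term shared by all levels (Lemma~\ref{lemma:multi-user-converse}). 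Your description treats the sliding-window inequality as yielding separate per-level bounds, which misreads its role. Finally, the paper's proof requires a case analysis (small $K$; $I_1$ empty or not; $J$ empty or not) with different parameter choices $(t,b,\{s_i\})$ in each case, and the regularity condition \eqref{eq:mu-reg2} is used to bound $|I_1|\le 1$ and to control the $J$-terms in the comparison, not to prevent ``window collapse'' as you suggest. Without these ingredients the constants do not close.
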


The gap between the rate achieved by the memory-sharing strategy and the optimal rate is linear in $D$.
As we have argued earlier, we would not expect a situation where one user connects to too large a number of APs, and so $D$ can be thought of as a constant.

The lower bounds needed to prove Theorem~\ref{thm:multi-user-gap} have to include the effect of all the popularity levels on the transmission rate.
However, these effects can be very different, especially when some files are much more popular than others, something that cut-set bounds alone cannot account for.
Using sliding-window subset entropy inequalities \cite{Liu14}, we can combine multiple cut-set bounds that correspond to the different levels, without making any assumptions on the achievability scheme.
The resulting bounds bring out the necessity for memory-sharing.

Furthermore, the constant $c$ in Theorem~\ref{thm:multi-user-gap} is rather loose so as to simplify the analysis.
Numerics show that, in practice, this constant is much smaller.
For example, if we have $K=20$ caches, $L=3$ popularity levels consisting of $(N_1,N_2,N_3)=(200,20\,000,800\,000)$ files, $(U_1,U_2,U_3)=(10,5,1)$ users per cache, and access degrees of $(d_1,d_2,d_3)=(1,1,1)$, then the gap is less than $6.8$.

\subsection{Single-user setup}
\label{sec:single-user-results}

In the single-user setup, the scheme that we propose is quite different.
Instead of separating the levels, we \emph{cluster} a subset of them into a super-level that will be treated as essentially one level.
Specifically, we partition the levels into two subsets:
$H'$ and $I'$.
The set $I'$ will be clustered into one super-level, and all of the memory $M$ will be given to it, while $H'$ will be given no memory.

To understand how to choose $H'$ and $I'$, consider the following rough analysis.
Suppose that all levels except one (let's call it $j$) have been split into $H'$ and $I'$.
Then, the rate, using Theorem~\ref{thm:single-level}, would be:
\begin{IEEEeqnarray*}{rCl}
R
&=& \textstyle R^\text{SL}(0,\sum_{h\in H'}K_h,\sum_{h\in H'}N_h,1,1)\\
&& \textstyle {}+ R^\text{SL}(M,\sum_{i\in I'}K_i,\sum_{i\in I'}N_i,1,1)\\
&\approx& \sum_{h\in H'} K_h + \frac{\sum_{i\in I'}N_i}{M}.
\end{IEEEeqnarray*}
If we were to add level $j$ to $H'$, that would result in the addition of a $K_j$ term, since all $K_j$ requests would be completely served by the broadcast.
On the other hand, if it is added to $I'$, then we would get an additional $N_j/M$ term, since the total number of files in $I'$ would increase by $N_j$.
Clearly, it is beneficial to choose the smaller of the two quantities.

Though the above analysis is rough, its main idea still holds.
In general, we choose the partition $(H',I')$ as follows:
\begin{equation}
\label{eq:G-partition}
H' = \left\{ h\in\{1,\ldots,L\} : M < \frac{N_h}{K_h} \right\};
\quad I' = (H')^c.
\end{equation}
Then, by giving all of the memory to $I'$, we can apply a single-level caching-and-delivery scheme to obtain the rate in the following theorem.

\begin{theorem}
\label{thm:single-user-achievability}
Consider the multi-level, single-user setup with $L$ levels, $N_i$ files and $K_i$ users for each level $i$, and cache memory $M$.
Then, the following rate is achievable:
\[
R^\text{SU}(M)
=
\sum_{h\in H'} K_h +
\max\left\{ \frac{\sum_{i\in I'}N_i}{M} - 1 \,,\, 0 \right\},
\]
where $H'$ and $I'$ are as defined in \eqref{eq:G-partition}.
\end{theorem}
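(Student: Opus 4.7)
The plan is to instantiate the clustering-plus-memory-sharing strategy already sketched before the theorem: assign all $M$ units of cache memory to the super-level obtained by grouping every file in $I'$, and assign zero memory to each level in $H'$. The total broadcast rate then splits cleanly into an $H'$-part and an $I'$-part which are served independently, and I would verify that their sum equals $R^\text{SU}(M)$.

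For each level $h\in H'$, the caches hold no content from level $h$, so every level-$h$ request must be served entirely from the broadcast. By the regularity condition \eqref{eq:su-reg} we have $N_h\ge K_h$, so in the worst case the $K_h$ users requesting from level $h$ ask for $K_h$ distinct files. Transmitting each such file uncoded costs at most one unit of rate per request, contributing $K_h$ to the total rate; summing over $h\in H'$ gives the first term $\sum_{h\in H'}K_h$ in the statement.

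For the super-level $I'$, I would treat its $N':=\sum_{i\in I'}N_i$ files as a single popularity level with $K':=\sum_{i\in I'}K_i$ users, each at one of $K'$ distinct caches of memory $M$ (single-access is built into the single-user setup). This fits the Basic Setup of Lemma~\ref{lemma:single-level} exactly, yielding an achievable rate of $R^0(M,K',N')=\min\{N'/M,\,K'\}\cdot(1-M/N')$. To simplify, observe that the definition of $I'$ says each $i\in I'$ satisfies $M\ge N_i/K_i$, i.e.\ $K_i\ge N_i/M$; summing over $i\in I'$ gives $K'\ge N'/M$, so the minimum in $R^0$ selects $N'/M$ and the expression collapses to $N'/M-1$. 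The outer $\max\{\cdot,0\}$ in the statement absorbs the degenerate cases $I'=\emptyset$ and $M\ge N'$ (the latter meaning the entire super-level fits in every cache and is served at zero rate).

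The one subtle point in this plan is that, at placement time, we do not know which of the $K$ users will eventually request files from $I'$ versus $H'$. This is resolved by noting that Lemma~\ref{lemma:single-level}'s placement is request-oblivious random bit-sampling from the super-level files, and its delivery step applies to any realization of $K'$ users sitting at $K'$ distinct caches; hence a single placement supports every admissible user profile. I do not foresee any harder step — once the two independent subsystems are isolated the remainder is the algebraic reduction above.
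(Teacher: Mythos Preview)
Your proposal is correct and follows essentially the same approach as the paper's proof: both split the rate into an $H'$-part served by uncoded transmission and an $I'$-part served by the decentralized single-level scheme of Lemma~\ref{lemma:single-level}, use the defining inequality $M\ge N_i/K_i$ on $I'$ to collapse the $\min$ to $N'/M$, and explicitly address the subtlety that the placement (random sampling into \emph{all} $K$ caches) is request-oblivious so that a single placement supports every user profile. Your invocation of regularity condition~\eqref{eq:su-reg} to justify $K_h$ distinct worst-case requests is a touch more explicit than the paper, but otherwise the arguments coincide.
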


This scheme turns out to be order-optimal, as we state in the next theorem.

\begin{theorem}
\label{thm:single-user-gap}
The rate achieved by the clustering strategy in Theorem~\ref{thm:single-user-achievability} is within a constant multiplicative factor of the information-theoretic optimum:
\[
\frac{R^\text{SU}(M)}{R^\ast(M)} \le 72,
\]
where $R^\text{SU}(M)$ is the rate achieved by clustering, and $R^\ast(M)$ is the information-theoretically optimal rate.
\end{theorem}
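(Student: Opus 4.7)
My plan is to upper-bound $R^{SU}(M)$ by a constant multiple of three complementary lower bounds on $R^*(M)$, each obtained from a standard cut-set (Maddah-Ali--Niesen style) argument tailored to a different subfamily of levels. The core observation is that the two terms in $R^{SU}(M)$, namely $\sum_{h\in H'}K_h$ and $\sum_{i\in I'}N_i/M-1$, correspond to two different ``regimes'' on the rate-memory curve, and each admits its own matching cut-set lower bound.

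First, I would refine the partition by splitting $H'$ into $H''=\{h\in H':N_h/K_h\ge 2M\}$ and $H'''=H'\setminus H''$. For $h\in H'''$ we have $M< N_h/K_h<2M$, so $K_h\le N_h/M<2K_h$; this lets us ``move'' the $H'''$ contribution to $\sum K_h$ into the $\sum N_i/M$ form with only a constant loss. The general lower-bound tool I would use is the following: for any subset $\mathcal{U}$ of users that includes $k_\ell$ users from each level $\ell$, and any integer $T$ satisfying $T\le\lfloor N_\ell/k_\ell\rfloor$ for every represented $\ell$, one can construct $T$ valid demand vectors under which the users in $\mathcal{U}$ request $|\mathcal{U}|T$ distinct files across the $T$ rounds. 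The cut-set argument on $T$ broadcasts together with the $|\mathcal{U}|$ caches then yields
\[
R^*(M)\;\ge\;|\mathcal{U}|\left(1-\frac{M}{T}\right).
\]

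Next I would instantiate this three times, each with $T=\lfloor 2M\rfloor$ (so that $1-M/T\ge 1/2$):
\begin{itemize}
\item Take $\mathcal{U}$ to consist of all $K_h$ users from each $h\in H''$. Feasibility of $T$ follows from $N_h/K_h\ge 2M$. This gives $R^*(M)\gtrsim\sum_{h\in H''}K_h$.
\item Take $\mathcal{U}$ with $k_h\approx K_h/2$ users from each $h\in H'''$. Feasibility uses $N_h/k_h\approx 2N_h/K_h\ge 2M$. This gives $R^*(M)\gtrsim\sum_{h\in H'''}K_h$.
\item Take $\mathcal{U}$ with $k_i\approx\lfloor N_i/(2M)\rfloor$ users from each $i\in I'$, which is $\le K_i$ because $i\in I'$ means $M\ge N_i/K_i$. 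Feasibility of $T$ is automatic from the choice of $k_i$. This gives $R^*(M)\gtrsim\sum_{i\in I'}N_i/M$.
\end{itemize}
Since each of these lower bounds holds for the same $R^*(M)$, we have $R^*(M)\gtrsim\frac{1}{3}$ of their sum, which matches $R^{SU}(M)=\sum_{h\in H''}K_h+\sum_{h\in H'''}K_h+\max\{\sum_{i\in I'}N_i/M-1,0\}$ up to an absolute constant, producing the claimed ratio.

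The hard part is not the conceptual structure but the bookkeeping. Specifically: (i) the integer flooring in $T$ and in $k_i=\lfloor N_i/(2M)\rfloor$ forces a case split for small $M$ and for levels with $N_i$ only slightly larger than $2M$, where additive $O(1)$ or $O(L)$ losses can creep in; (ii) the edge case where $\sum_{i\in I'}N_i/M-1\le 0$ must be dispatched separately (the inequality becomes trivial once the additive constants are absorbed); (iii) the cut-set construction must honor the single-user, single-access, level-restricted demand model, so one has to verify that the $T$ rounds of demands can indeed be realized as genuine valid demand vectors of the problem, not merely as an information-theoretic fiction. Pushing all these absorbed constants through carefully is what inflates the clean ``$6$'' or ``$10$'' that the sketch suggests up to the stated $72$.
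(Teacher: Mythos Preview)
Your overall plan---matching each term of $R^\text{SU}(M)$ with a cut-set lower bound---is the right shape, but there is a genuine gap in your third bound. The per-level choice $k_i=\lfloor N_i/(2M)\rfloor$ collapses to zero whenever $N_i<2M$, so such a level contributes nothing to your lower bound while still contributing $N_i/M$ to the achievable rate. If there are many such levels (e.g.\ $L$ levels each with $N_i=1$, $K_i=1$, and $M\ge 1$), the achievable term is $L/M-1$ but your bound gives only $R^\ast(M)\ge 0$, and the ratio is unbounded in $L$. The ``additive $O(L)$ losses'' you flag under (i) are therefore fatal, not cosmetic: a constant-factor result cannot tolerate an $O(L)$ slack.

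The paper handles exactly this by carving out $J=\{j:M>N_j/6\}$ and treating it \emph{collectively} rather than level-by-level. It picks $s_J\approx N_J/(6M)$ caches (with $N_J=\sum_{j\in J}N_j$) and, across $b\approx 6M$ broadcasts, lets the users at those caches request files from \emph{different} levels of $J$ in different demand vectors---crucially exploiting that in the single-user setup a cache is not tied to any fixed level, so the same user can request from level $j_1$ in one round and level $j_2$ in another (subject only to the global $K_j$ budget per round). This aggregates all the small-$N_j$ levels into a single bound yielding $v_J\gtrsim N_J/M$, which your per-level tool (``$k_\ell$ users from level $\ell$ with $T\le N_\ell/k_\ell$'') is too rigid to produce. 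A secondary structural difference: the paper uses one \emph{unified} cut-set with a common $b$ and disjoint cache subsets across all levels, rather than three separate bounds combined via $\max\ge\frac13\sum$; this both avoids the factor-$3$ loss and is what makes the collective $J$ treatment natural.
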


Unlike in the multi-user case, cut-set bounds are sufficient to show order-optimality in this case.
Indeed, a single cut-set bound allows us to capture the fact that the user profile (i.e., the level of the file requested at each cache, as defined in Section~\ref{sec:setup-popularity}) is not determined beforehand.
At the same time, it brings out the necessity of clustering levels by mixing their demands.
As before, however, these bounds do not make any assumptions on the achievability strategy.

The scheme suggested in Theorem~\ref{thm:single-user-achievability} is similar to the results in \cite{Zcodedcaching} for Zipf popularity distributions and in \cite{ZhangArbitrary} for arbitrary distributions.
However, this is done for the multi-level popularity model, and Theorem~\ref{thm:single-user-gap} establishes a universal approximation for worst case rate, rather than average rate.

\subsection{Comparison with LFU}

One of the most common caching schemes conventionally in use is the Least-Frequently-Used (LFU) scheme.
This strategy stores the most popular files only, as many as the caches can hold.
In traditional LFU, no coding is done in the delivery phase.
We give an example for each of the two setups, to show how the respectively chosen schemes are superior to LFU in each context.

\subsubsection{Multi-user example}

Consider two levels ($L=2$), with $K=30$, $(N_1,N_2)=(600,1000)$, $(U_1,U_2)=(20,10)$, and $(d_1,d_2)=(1,1)$.
The rates achieved by the memory-sharing strategy and traditional LFU are plotted against memory $M$ in \figurename~\ref{fig:mu-ml-lfu}.
Memory-sharing performs up to $29$ times better than LFU in this example.

\begin{figure}
\centering
\includegraphics[width=\myplotwidth]{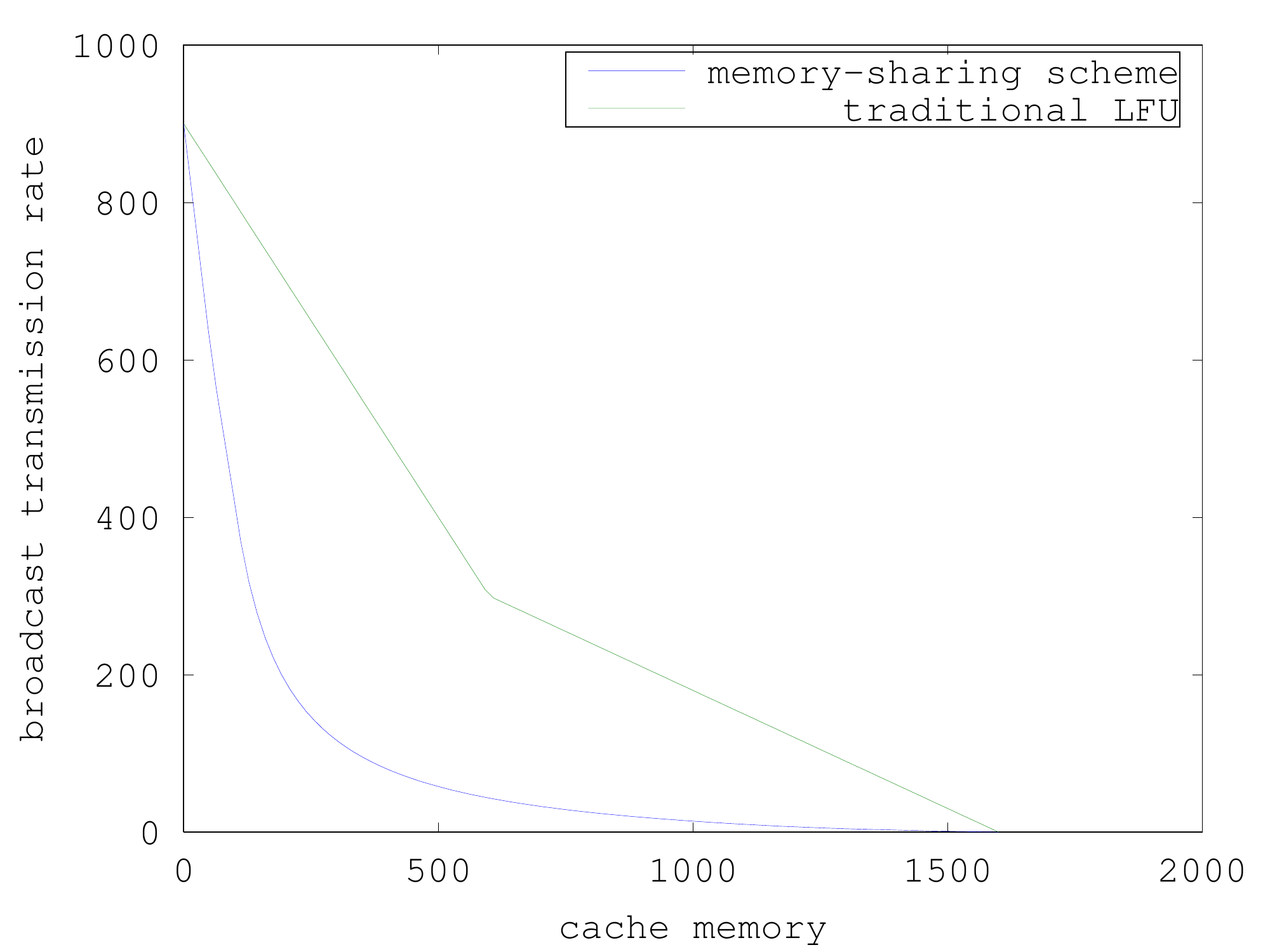}
\caption{Comparison of the memory-sharing scheme with traditional LFU, in the multi-user setup.}
\label{fig:mu-ml-lfu}
\end{figure}

\subsubsection{Single-user example}

Consider two levels ($L=2$), with $(N_1,N_2)=(500,1000)$ and $(K_1,K_2)=(30,15)$.
The rates achieved by the clustering strategy and traditional LFU are plotted against memory $M$ in \figurename~\ref{fig:su-ml-lfu}.
Clustering performs up to $22$ times better than LFU in this example.

\begin{figure}
\centering
\includegraphics[width=\myplotwidth]{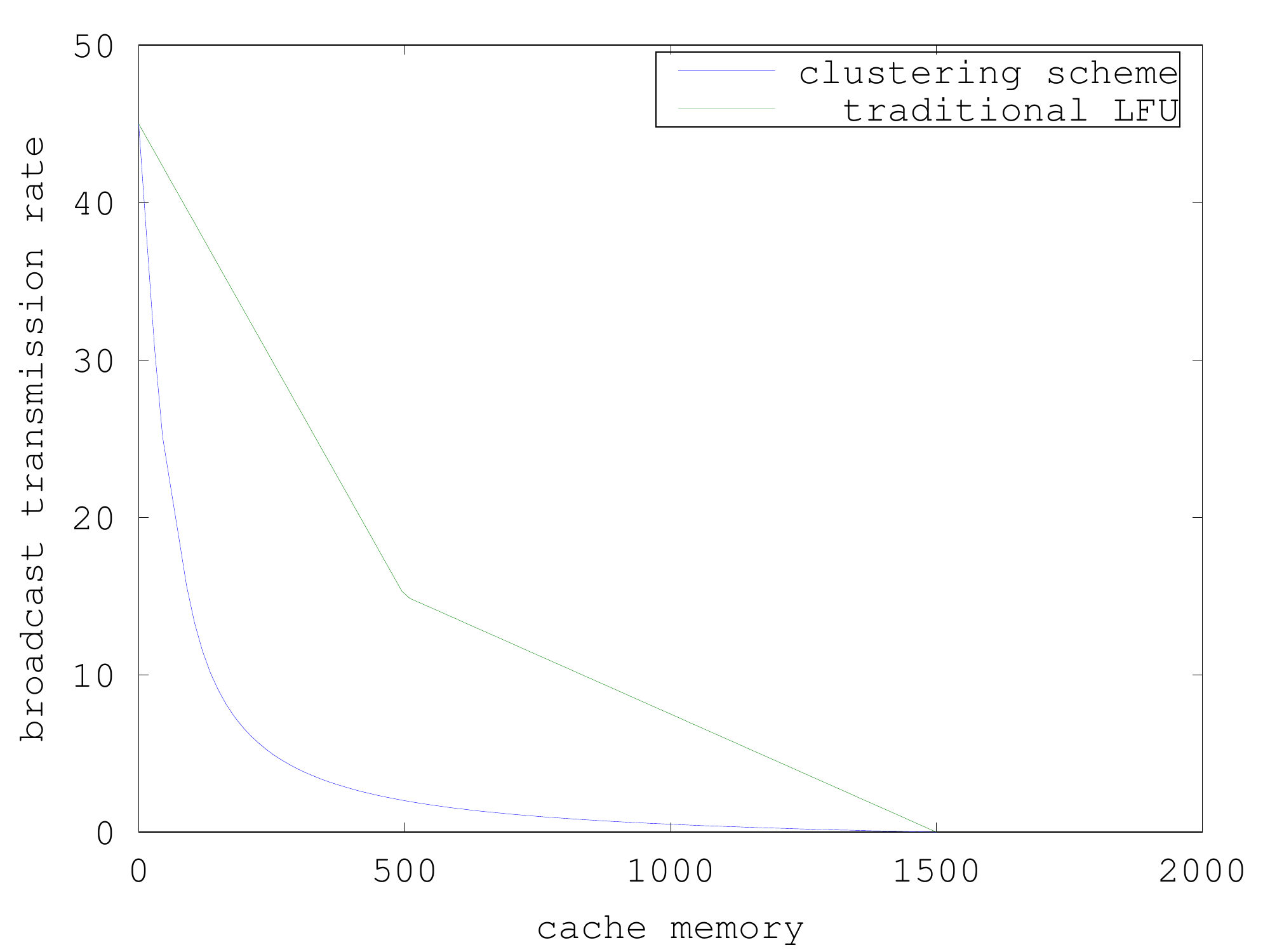}
\caption{Comparison of the clustering scheme with traditional LFU, in the single-user setup.}
\label{fig:su-ml-lfu}
\end{figure}

\section{The Multi-User Setup}
\label{sec:multi-user}

\subsection{Caching-and-delivery strategy: memory-sharing}
\label{sec:multi-user-achievability}

In this section, we describe the strategy used to achieve the rate approximated by the expression in Theorem~\ref{thm:multi-user-achievability}.
Moreover, we will give an exact upper bound on the achieved rate.

We will proceed in two steps.
First, we will discuss the $(H,I,J)$ partition of the set of levels as first described in Section~\ref{sec:results-multiuser}. 
This will be accompanied by an explanation of the memory-sharing parameters $\alpha_i$, which indicate what fraction of memory each level gets.
Second, we analyze the \emph{individual rate} achieved by every level $i$, after allocating $\alpha_iM$ memory to it, and combine all levels to produce the total rate achieved by the scheme.

While we actually define the $(H,I,J)$ partition based on the problem parameters, \emph{and then} choose the $\alpha_i$ values accordingly, in this paragraph we will proceed in the opposite order so that we expose the intuition behind the choices.
As discussed in Section~\ref{sec:results-multiuser}, the strategy involves finding a good partition $(H,I,J)$ of the set of levels, such that levels in $H$ are given no memory, levels in $J$ are given maximal memory, and levels in $I$ share the rest.
Thus, for all levels $h\in H$, we will assign $\alpha_hM=0$.
Similarly, every level $j\in J$ will receive $\alpha_jM=N_j/d_j$, since that is the amount of memory needed to completely store level $j$ and hence to require no BC transmission (i.e., $R^\text{SL}=0$; see Theorem~\ref{thm:single-level}).
What is left is to share the remaining memory $\left(M-\sum_{j\in J}N_j/d_j\right)$ among the levels in $I$.
More popular files should get more memory, and the popularity of a level $i$ is proportional to $U_i/N_i$.
Thus, we choose to give level $i$ a memory roughly $\alpha_iM\propto N_i\cdot\sqrt{U_i/N_i}$ (hence the memory \emph{per file} is proportional to $\sqrt{U_i/N_i}$).%
\footnote{The square root comes from minimizing an inverse function of $\alpha_i$.}

Intuitively, we want to choose $H$, $I$, and $J$ such that the above values of $\alpha_i$ are valid, i.e., $\alpha_i\in[0,1]$ for all $i$.
Based on the intuition, we get the partition described below.
\begin{definition}[$M$-feasible partition]
\label{def:m-feasible}
For any cache memory $M$, an $M$-feasible partition $(H,I,J)$ of the set of levels is a partition that satisfies:
\begin{IEEEeqnarray*}{lCl/rCcCl}
\forall h &\in& H, &
&& \tilde M &<& \frac1K \sqrt{\frac{N_h}{U_h}};\\
\forall i &\in& I, &
\frac1K \sqrt{\frac{N_i}{U_i}} &\le& \tilde M &\le& \left( \frac{1}{d_i}+\frac1K \right)\sqrt{\frac{N_i}{U_i}};\\
\forall j &\in& J, &
\left( \frac{1}{d_j} + \frac1K \right)\sqrt{\frac{N_j}{U_j}} &<& \tilde M,
\end{IEEEeqnarray*}
where $\tilde M = (M-T_J+V_I)/S_I$, and, for any subset $A$ of the levels:
\[
S_A = \sum_{i\in A} \sqrt{N_iU_i};
\quad
T_A = \sum_{i\in A} \frac{N_i}{d_i};
\quad
V_A = \sum_{i\in A} \frac{N_i}{K}.
\]
\end{definition}
We stress again that, while we used our own $\alpha_i$ values to determine $(H,I,J)$, this was done only for the intuition behind the choice.
The partition itself is defined solely based on the problem parameters, and not on our strategy.

Notice in Definition~\ref{def:m-feasible} how the different sets are largely determined by the quantity $\sqrt{N_i/U_i}$, for each level $i$.
This matches the idea that the most popular levels (i.e., those with the \emph{smallest} $N_i/U_i$) will be in $J$, while the least popular levels (those with the \emph{largest} $N_i/U_i$) will go to the set $H$.

After choosing an $M$-feasible partition, we share the memory among the levels using the following (precise) values of $\alpha_i$:
\begin{IEEEeqnarray*}{lCl"rCl}
\forall h &\in& H, & \alpha_hM &=& 0;\\
\forall i &\in& I, & \alpha_iM &=& \sqrt{N_iU_i}\cdot\tilde M - N_i/K;\\
\forall j &\in& J, & \alpha_jM &=& N_j/d_j.\IEEEyesnumber\label{eq:alpha-i}
\end{IEEEeqnarray*}
For completeness, the following proposition states the validity of this choice of memory-sharing parameters.
\begin{proposition}
The values of the memory-sharing parameters defined in \eqref{eq:alpha-i} satisfy:
\begin{enumerate}
\item $\alpha_i\ge0$ for all $i$;
\item $\sum_i\alpha_i=1$;
\item $\alpha_iM\le N_i/d_i$ for all $i$.
\end{enumerate}
Note that points 1 and 2 imply $\alpha_i\in[0,1]$.
\end{proposition}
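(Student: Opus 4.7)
The plan is to verify each of the three claims by direct substitution of the definitions of $\alpha_i M$ from~\eqref{eq:alpha-i} and $\tilde M=(M-T_J+V_I)/S_I$, and then use the inequalities that define an $M$-feasible partition (Definition~\ref{def:m-feasible}) to handle the non-trivial cases. The statement is essentially a consistency check: the $M$-feasibility conditions were engineered precisely so that the chosen $\alpha_i$ are legal memory-sharing parameters, so the argument should be short.

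First, for claim~1, the cases $h\in H$ and $j\in J$ are immediate from~\eqref{eq:alpha-i}. The only non-trivial case is $i\in I$, where I need $\sqrt{N_iU_i}\,\tilde M\ge N_i/K$, i.e.\ $\tilde M\ge \tfrac{1}{K}\sqrt{N_i/U_i}$; this is exactly the lower bound for $i\in I$ in Definition~\ref{def:m-feasible}. Similarly, for claim~3, the cases $h\in H$ and $j\in J$ are trivial (equality in the latter), and for $i\in I$ the required inequality $\sqrt{N_iU_i}\,\tilde M - N_i/K\le N_i/d_i$ rearranges to $\tilde M\le\bigl(\tfrac{1}{d_i}+\tfrac{1}{K}\bigr)\sqrt{N_i/U_i}$, which is precisely the upper bound for $i\in I$ in Definition~\ref{def:m-feasible}.

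For claim~2, equivalent to $\sum_i\alpha_iM=M$, I compute
\[
\sum_{i}\alpha_iM
= \sum_{i\in I}\Bigl(\sqrt{N_iU_i}\,\tilde M - \tfrac{N_i}{K}\Bigr) + \sum_{j\in J}\tfrac{N_j}{d_j}
= \tilde M\, S_I - V_I + T_J.
\]
Substituting $\tilde M=(M-T_J+V_I)/S_I$ collapses the right-hand side to $M$, as desired. Finally, since each $\alpha_iM\le N_i/d_i\le N_i$, dividing by $M$ yields $\alpha_i\le N_i/M$; combined with $\sum_i\alpha_i=1$ and $\alpha_i\ge0$, this gives $\alpha_i\in[0,1]$ as noted in the statement (the cleanest way to get the upper bound $\alpha_i\le 1$ is simply that $\alpha_i\le\sum_{k}\alpha_k=1$ by nonnegativity).

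There is no real obstacle: the three claims are structured so that each one corresponds to exactly one piece of the $M$-feasibility definition (claim~1 to the lower bound on $\tilde M$ for $I$, claim~3 to the upper bound on $\tilde M$ for $I$, claim~2 to the definition of $\tilde M$ itself). The only thing to be careful about is bookkeeping the $H$ and $J$ cases so that nothing is missed, and noting that $N_h/d_h\ge 0$ so claim~3 holds trivially for $h\in H$.
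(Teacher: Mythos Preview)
Your proof is correct and takes the same approach as the paper, which simply states that all three results follow directly from Definition~\ref{def:m-feasible}; you have merely spelled out the substitutions and verifications that the paper leaves implicit.
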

\begin{proof}
All three results follow directly from Definition~\ref{def:m-feasible}.
\end{proof}

The structure of the solution described above allows us to efficiently compute the $\alpha_i$ values.
Indeed, Algorithm~\ref{alg:m-feasible} finds $\alpha_i$ for all levels $i$ and \emph{for all memory values $M$} in $\Theta(L^2)$ running time, where $L$ is the total number of levels.
While a detailed description and analysis of the algorithm is given in Appendix~\ref{app:multi-user-extras}, we briefly go over it in this section.
It proceeds in three main steps.
In the first step, the algorithm identifies the sequence of $(H,I,J)$ partitions that will occur as $M$ increases, using only the problem parameters.
For example, if there are two levels, there are two possible sequences, denoted below as S1 and S2:\\

\begin{tabular}{|l|c|c|c|c|c|c|}
\hline
& & $M=0$ & & & & large $M$\\
\hline
  & $H$ & $\{1,2\}$ & $\{2\}$ & $\emptyset$ & $\emptyset$ & $\emptyset$\\
S1& $I$ & $\emptyset$ & $\{1\}$ & $\{1,2\}$ & $\{2\}$ & $\emptyset$\\
  & $J$ & $\emptyset$ & $\emptyset$ & $\emptyset$ & $\{1\}$ & $\{1,2\}$\\
\hline
  & $H$ & $\{1,2\}$ & $\{2\}$ & $\{2\}$ & $\emptyset$ & $\emptyset$\\
S2& $I$ & $\emptyset$ & $\{1\}$ & $\emptyset$ & $\{2\}$ & $\emptyset$\\
  & $J$ & $\emptyset$ & $\emptyset$ & $\{1\}$ & $\{1\}$ & $\{1,2\}$\\
\hline
\end{tabular}
~\\

Notice that the $M$-feasible partition is the same throughout an entire interval of values for $M$.
In fact, there are only $2L$ non-trivial intervals: each interval is distinguished from the previous one by the ``promotion'' of a level from $H$ to $I$ or from $I$ to $J$.
As a result, computing only $2L$ intervals (actually, $(2L+2)$ intervals, to include the boundary cases) is enough to determine the $M$-feasible partition for \emph{all} values of $M\ge0$.

While Step 1 determines which of the above two sequences (S1 and S2) will occur, it does not determine the boundaries of the different regimes, i.e., the memory values at which the $(H,I,J)$ partition changes (except for trivial boundaries like $M=0$).
The second step computes these boundaries, using not only the problem parameters, but also the partitions themselves.

For every $M$, the corresponding $M$-feasible partition is determined by the algorithm based on the boundaries calculated in Step 2.
For example, suppose the algorithm decided that sequence S1 should be in use, and that the boundaries of $(H,I,J)=(\emptyset,\{1,2\},\emptyset)$ are some values $m_1$ and $m_2$.
Then, any $M\in[m_1,m_2]$ has $(\emptyset,\{1,2\},\emptyset)$ as its $M$-feasible partition.

\begin{algorithm}
\caption{An algorithm that constructs an $M$-feasible partition for all $M$.}
\label{alg:m-feasible}
\begin{algorithmic}[1]
\Require Number of caches $K$ and parameters $\{N_i,U_i,d_i\}_i$ for $i=1,\ldots,L$.
\Ensure An $M$-feasible partition for all $M$.

\ForAll{$i\in\{1,\ldots,L\}$}
  \State $\tilde m_i \gets (1/K)\sqrt{N_i/U_i}$
  \State $\tilde M_i \gets (1/d_i+1/K)\sqrt{N_i/U_i}$
\EndFor
\State $(x_1,\ldots,x_{2L}) \gets \mathrm{sort}(\tilde m_1,\ldots,\tilde m_L,\tilde M_1,\ldots,\tilde M_L)$.
\State

\State \emph{Step 1: Determine $(H,I,J)$ for each interval $(x_t,x_{t+1})$.}
\State Set $H_0\gets\{1,\ldots,L\}$, $I_0\gets\emptyset$, $J_0\gets\emptyset$.
\For{$t\in\{1,\ldots,2L\}$}
  \If{$x_t=\tilde m_i$ for some $i$}
    \State \emph{Promote level $i$ from $H$ to $I$}
    \State $H_t \gets H_{t-1} \setminus \{i\}$
	\State $I_t \gets I_{t-1} \cup \{i\}$
	\State $J_t \gets J_{t-1}$
  \ElsIf{$x_t=\tilde M_i$ for some $i$}
    \State \emph{Promote level $i$ from $I$ to $J$}
    \State $H_t \gets H_{t-1}$
	\State $I_t \gets I_{t-1} \setminus \{i\}$
	\State $J_t \gets J_{t-1} \cup \{i\}$
  \EndIf
\EndFor
\State

\State \emph{Step 2: Compute the limits of the intervals as $[Y_t,Y_{t+1})$.}
\ForAll{$t\in\{1,\ldots,2L\}$}
  \State $Y_t \gets x_t \cdot S_{I_t} + T_{J_t} - V_{I_t}$
\EndFor
\State $Y_{2L+1}\gets\infty$ \Comment For convenience
\State

\State \emph{Step 3: Determine the $M$-feasible partition for all $M$.}
\ForAll{$t\in\{1,\ldots,2L\}$}
  \State Set $(H_t,I_t,J_t)$ as the $M$-feasible partition of \emph{all} $M\in[Y_t,Y_{t+1})$
\EndFor
\end{algorithmic}
\end{algorithm}

The following lemma presents two important properties of $M$-feasible partitions.
It will be proved in Appendix~\ref{app:multi-user-extras}.
\begin{lemma}
\label{lemma:m-feasible-existence}
For any cache memory $M$, an $M$-feasible partition $(H,I,J)$ always exists.
Furthermore, the set $I$ is never empty as long as $M\le\sum_iN_i/d_i$, i.e., as long as individual caches do not have enough memory to store everything.
\end{lemma}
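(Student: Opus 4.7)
The plan is constructive: treat $\tilde M$ as a free parameter, track how the partition evolves as $\tilde M$ grows, and then invert to solve for the $\tilde M$ matching a given $M$. For each $\tilde M \ge 0$, define
\[
H(\tilde M) = \{i : \tilde m_i > \tilde M\},\quad I(\tilde M) = \{i : \tilde m_i \le \tilde M \le \tilde M_i\},\quad J(\tilde M) = \{i : \tilde M_i < \tilde M\},
\]
where $\tilde m_i = (1/K)\sqrt{N_i/U_i}$ and $\tilde M_i = (1/d_i + 1/K)\sqrt{N_i/U_i}$. Since $\tilde m_i < \tilde M_i$ for every $i$, each level marches through $H \to I \to J$ as $\tilde M$ increases, switching at $\tilde M = \tilde m_i$ and then $\tilde M = \tilde M_i$. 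Setting $f(\tilde M) = \tilde M \cdot S_{I(\tilde M)} + T_{J(\tilde M)} - V_{I(\tilde M)}$, any solution of $f(\tilde M) = M$ is exactly a value of $\tilde M$ satisfying the relation in Definition~\ref{def:m-feasible}, and therefore yields an $M$-feasible partition.

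The central technical step is to show that $f$ is continuous and nondecreasing on $[0,\infty)$. Between consecutive thresholds, $f$ is linear in $\tilde M$ with slope $S_{I(\tilde M)} \ge 0$, so monotonicity is immediate. Continuity at each threshold reduces to the two algebraic identities
\[
\tilde m_i \sqrt{N_i U_i} = \frac{N_i}{K}, \qquad \tilde M_i \sqrt{N_i U_i} = \frac{N_i}{d_i} + \frac{N_i}{K},
\]
which follow directly from the definitions. When $\tilde M$ crosses $\tilde m_i$, the set $I$ gains $i$, so $S_I$ jumps by $\sqrt{N_i U_i}$ and $V_I$ by $N_i/K$; the net change in $f$ at that instant is $\tilde m_i \sqrt{N_i U_i} - N_i/K = 0$. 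An analogous cancellation occurs at $\tilde M = \tilde M_i$ using the second identity. Since $f(0) = 0$ and $f$ reaches $\sum_i N_i/d_i$ once all levels have moved into $J$ (at $\tilde M = \max_i \tilde M_i$), the intermediate value theorem produces a preimage $\tilde M$ for every $M \in [0, \sum_i N_i/d_i]$, proving existence in this regime. For $M > \sum_i N_i/d_i$, the trivial partition $(\emptyset,\emptyset,\{1,\ldots,L\})$ is $M$-feasible by direct verification (all conditions on $J$ are vacuously satisfied by any sufficiently large $\tilde M$).

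For the second claim, note that $f$ is strictly increasing on any interval where $I \ne \emptyset$, and constant on any interval where $I = \emptyset$. An $I = \emptyset$ interval can arise only when some level has been fully promoted to $J$ before a new level enters $I$ from $H$; along such an interval $f$ takes the single value $T_J$, which equals the left endpoint of the next interval (where $I$ becomes nonempty again by the $\tilde m_\cdot$ crossing). Hence, at this boundary value of $M$, we can choose the adjacent partition with $I \ne \emptyset$, guaranteeing that for every $M \le \sum_i N_i/d_i$ there is an $M$-feasible partition with $I \ne \emptyset$.

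The hardest part is the continuity verification of $f$, which I expect to be clean once the two algebraic identities are isolated; everything after that is standard monotonicity plus intermediate-value reasoning. A minor subtlety is selecting the correct partition at the boundary $M$-values to ensure $I \ne \emptyset$, which is handled by the adjacency argument above.
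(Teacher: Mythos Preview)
Your proposal is correct and takes essentially the same approach as the paper: both arguments parametrize the partition by $\tilde M$, track how levels move $H\to I\to J$ at the thresholds $\tilde m_i$ and $\tilde M_i$, and then invert back to $M$ via the map $\tilde M \mapsto \tilde M\, S_I + T_J - V_I$. The paper presents this as ``correctness of Algorithm~\ref{alg:m-feasible}'' (showing the boundary values $Y_t$ are well-defined and ordered, and that $I=\emptyset$ intervals collapse), whereas you phrase it more analytically by verifying continuity and monotonicity of $f$ and invoking the intermediate value theorem; the two cancellation identities you isolate are exactly what the paper uses when it says the same boundary value arises from either adjacent partition.
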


To properly analyze the achievable rate, we need to look more closely at the set $I$.
In the single-level, single-access scenario in \cite{maddah-ali2012,maddah-ali2013}, three regimes were identified, and they were analyzed separately.
Generalizing to the single-level, multi-access case, these regimes are: when $M<N/K$, when $M>cN/d$ for some constant $c\in(0,1)$, and the intermediate case.
We identify three similar regimes for each level in $i$.
Formally, define:
\begin{IEEEeqnarray*}{rCl}
I_0 &=& \left\{ i\in I : \tilde M < \frac2K\sqrt{\frac{N_i}{U_i}} \right\};\\
I_1 &=& \left\{ i\in I : \tilde M > \left(\frac{\beta}{d_i}+\frac1K\right)\sqrt{\frac{N_i}{U_i}} \right\};\\
I'  &=& I\setminus(I_0\cup I_1),
\IEEEyesnumber\label{eq:refined-partition}
\end{IEEEeqnarray*}
By choosing the $\alpha_i$ values in \eqref{eq:alpha-i}, these definitions are equivalent to: $I_0$ is the set of levels $i$ such that $\alpha_iM<N_i/K$; $I_1$ is such that $\alpha_iM>\beta N_i/d_i$ for all $i\in I_1$, where $\beta$ is the level-separation factor as defined in Regularity Condition \eqref{eq:mu-reg2}; and $I'$ consists of the remaining levels.

When $K\ge D/\beta$, then $I_0$, $I'$, and $I_1$ are mutually exclusive and form a partition of $I$.
For convenience, we call the resulting partition $(H,I_0,I',I_1,J)$ a \emph{refined $M$-feasible partition}.%
\footnote{The fact that this is a partition is important for the case $K\ge D/\beta$.
In the opposite case, this does not necessarily hold but it will not matter anyway.}

What follows is an important statement regarding the set~$I_1$.
\begin{proposition}[Size of $I_1$]
\label{prop:i1-size}
In any refined $M$-feasible partition $(H,I_0,I',I_1,J)$ as defined in Definition~\ref{def:m-feasible} and \eqref{eq:refined-partition}, the set $I_1$ contains at most one element.
\end{proposition}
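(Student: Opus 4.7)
The plan is to argue by contradiction: suppose $i,j\in I_1$ are two distinct levels. Since each level is characterized by its popularity, one of the two must be strictly more popular than the other; without loss of generality, say $i$ is more popular, so $U_i/N_i>U_j/N_j$. The regularity condition \eqref{eq:mu-reg2} then yields
\[
\sqrt{\frac{N_j}{U_j}}\;\ge\;\frac{D}{\beta}\,\sqrt{\frac{N_i}{U_i}}.
\]

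First, I would obtain a lower bound on $\tilde M$ by applying the defining condition of $I_1$ to the \emph{less popular} level $j$, and then using the regularity bound above to rewrite $\sqrt{N_j/U_j}$ in terms of $\sqrt{N_i/U_i}$; this gives
\[
\tilde M\;>\;\left(\frac{\beta}{d_j}+\frac{1}{K}\right)\sqrt{\frac{N_j}{U_j}}\;\ge\;\left(\frac{D}{d_j}+\frac{D}{K\beta}\right)\sqrt{\frac{N_i}{U_i}}.
\]
Second, I would obtain a matching upper bound on $\tilde M$ by applying the defining condition of the containing set $I\supseteq I_1$ to the \emph{more popular} level $i$, namely
\[
\tilde M\;\le\;\left(\frac{1}{d_i}+\frac{1}{K}\right)\sqrt{\frac{N_i}{U_i}}.
\]
Combining these and cancelling the common factor $\sqrt{N_i/U_i}$ reduces everything to showing that
\[
\frac{D}{d_j}+\frac{D}{K\beta}\;<\;\frac{1}{d_i}+\frac{1}{K}
\]
is impossible.

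This final check is elementary. Since $D=\max_\ell d_\ell\ge d_j$, we have $D/d_j\ge 1$, while $d_i\ge 1$ gives $1/d_i\le 1$, so the non-$K$ terms already satisfy $D/d_j\ge 1/d_i$; and $D/(K\beta)=198\,D/K>1/K$ because $D\ge 1$ and $\beta=1/198$. Together these force the left-hand side strictly above the right-hand side, delivering the contradiction. There is no real obstacle; the only bookkeeping subtlety is remembering to apply the $I_1$ condition to the less popular level and the $I$ condition to the more popular level, so that regularity can be invoked once to bring both bounds into a common $\sqrt{N_i/U_i}$ scale. The intuitive content is simply that the popularity gap $D/\beta=198D$ mandated by \eqref{eq:mu-reg2} is too wide to fit two levels inside the narrow band that characterizes membership in $I$, let alone the even narrower subband $I_1$.
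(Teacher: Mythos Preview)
Your proof is correct and follows essentially the same route as the paper: assume two levels $i,j\in I_1$, use the $I_1$ lower bound on $\tilde M$ for the less popular level and the $I$ upper bound for the more popular one, and combine to contradict the regularity separation \eqref{eq:mu-reg2}. The only cosmetic difference is that the paper rearranges the two $\tilde M$ bounds directly into $\sqrt{(U_i/N_i)/(U_j/N_j)}<D/\beta$, whereas you first substitute the regularity inequality and then compare coefficients; the content is identical.
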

\begin{proof}
Suppose that there exist two level $i,j\in I_1$ (and possibly others).
We will show that this violates regularity condition \eqref{eq:mu-reg2}.

Suppose without loss of generality that $i$ is more popular than $j$.
Since $i,j\in I_1$, then, by Definition~\ref{def:m-feasible} and \eqref{eq:refined-partition}:
\[
\left( \frac{\beta}{d_j}+\frac1K \right)\sqrt{\frac{N_j}{U_j}}
< M
\le \left( \frac{1}{d_i} + \frac1K \right)\sqrt{\frac{N_i}{U_i}}.
\]
However, this means:
\[
\sqrt{\frac{U_i/N_i}{U_j/N_j}}
< \frac{1/d_i+1/K}{\beta/d_j+1/K}
< \frac{d_j}{\beta d_i}
\le \frac{D}{\beta},
\]
which contradicts regularity condition \eqref{eq:mu-reg2}.
\end{proof}

Using the definition of a refined $M$-feasible partition, and the values of $\{\alpha_i\}_i$, we give upper bounds on the rates achieved individually for each level.
\begin{lemma}
\label{lemma:multi-user-achievability}
Given a refined $M$-feasible partition $(H,I_0,I',I_1,J)$, the individual rates of the levels are upper-bounded by:
\begin{IEEEeqnarray*}{lCl"rCl}
\forall h &\in& H, & R_h(M) &=& KU_h;\\
\forall i &\in& I_0\cup I', & R_i(M) &\le& \frac{2S_I\sqrt{N_iU_i}}{M-T_J+V_I};\\
\forall i_1 &\in& I_1, &
R_{i_1}(M) &\le& \frac1\beta d_{i_1}U_{i_1}\left( 1 - \frac{M-T_J}{N_{i_1}/d_{i_1}} \right)\\
&& & && {}+ \frac1\beta d_{i_1}U_{i_1}\frac{S_{I_0}+S_{I'}}{\sqrt{N_{i_1}U_{i_1}}};\\
\forall j &\in& J, & R_j(M) &=& 0.
\end{IEEEeqnarray*}
The total achieved rate is then:
\(
R(M) = \sum_{i=1}^L R_i(M).
\)
\end{lemma}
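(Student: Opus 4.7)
The plan is to apply Theorem~\ref{thm:single-level} separately to each level $i$ with memory $\alpha_iM$ as specified in \eqref{eq:alpha-i}, and then simplify the resulting single-level rate according to which cell of the refined partition $(H,I_0,I',I_1,J)$ contains $i$. Additivity of the total rate $\sum_iR_i(M)$ over levels is immediate because memory-sharing treats the levels as independent sub-systems. The cases $h\in H$ and $j\in J$ are direct substitutions: $\alpha_hM=0$ collapses $R^\text{SL}$ to $KU_h$, while $\alpha_jM=N_j/d_j$ kills the factor $(1-d_j\alpha_jM/N_j)$ and yields $R_j=0$.

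For $i\in I_0\cup I'$ the target upper bound is $2\sqrt{N_iU_i}/\tilde M=2S_I\sqrt{N_iU_i}/(M-T_J+V_I)$. If $i\in I_0$, the defining condition $\tilde M<(2/K)\sqrt{N_i/U_i}$ rewrites as $\alpha_iM<N_i/K$, so the $\min$ in Theorem~\ref{thm:single-level} equals $K$ and the trivial bound $R_i\leq KU_i$ is itself dominated by $2\sqrt{N_iU_i}/\tilde M$ via the same defining inequality. If $i\in I'$, the condition $\tilde M\geq(2/K)\sqrt{N_i/U_i}$ is equivalent to $\sqrt{N_iU_i}\tilde M\geq 2N_i/K$, which gives $\alpha_iM=\sqrt{N_iU_i}\tilde M-N_i/K\geq\tfrac12\sqrt{N_iU_i}\tilde M$; substituting the $N_i/(\alpha_iM)$ branch of the $\min$ and discarding the nonnegative subtractive term $-U_id_i$ produces exactly $2\sqrt{N_iU_i}/\tilde M$.

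The $i_1\in I_1$ case is the crux, and I expect the main obstacle to be hidden here. By Proposition~\ref{prop:i1-size} this set is the singleton $\{i_1\}$, so $S_I=S_{I_0}+S_{I'}+\sqrt{N_{i_1}U_{i_1}}$ and $V_I=V_{I_0}+V_{I'}+N_{i_1}/K$. I would choose the $N_{i_1}/(\alpha_{i_1}M)$ branch of the $\min$ and use the defining inequality $\alpha_{i_1}M>\beta N_{i_1}/d_{i_1}$ to bound the prefactor by $d_{i_1}/\beta$, obtaining
\[
R_{i_1}\leq\frac{d_{i_1}U_{i_1}}{\beta}\left(1-\frac{d_{i_1}\alpha_{i_1}M}{N_{i_1}}\right).
\]
Expanding $\alpha_{i_1}M=\sqrt{N_{i_1}U_{i_1}}(M-T_J+V_I)/S_I-N_{i_1}/K$, invoking the upper bound $\tilde M\leq(1/d_{i_1}+1/K)\sqrt{N_{i_1}/U_{i_1}}$ that comes from $i_1\in I$, and carefully collecting the $S_I$, $V_I$, and $T_J$ terms using the singleton decompositions above, one sees that the target inequality reduces to the clean statement
\[
\tfrac{1}{K}\sqrt{N_{i_1}/U_{i_1}}\,(S_{I_0}+S_{I'})\leq V_{I_0}+V_{I'}.
\]

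This final inequality is not purely algebraic; it holds termwise—$\sqrt{N_{i_1}/U_{i_1}}\sqrt{N_iU_i}\leq N_i$ for each $i\in I_0\cup I'$—exactly when $U_i/N_i\leq U_{i_1}/N_{i_1}$, i.e., when $i_1$ is at least as popular as every other level in $I$. This popularity ordering is precisely what the level-separation regularity condition \eqref{eq:mu-reg2} forces: if some $i\in I_0\cup I'$ were strictly more popular than $i_1$, then combining its upper bound $\tilde M\leq(1/d_i+1/K)\sqrt{N_i/U_i}$ with the lower bound $\tilde M>(\beta/d_{i_1}+1/K)\sqrt{N_{i_1}/U_{i_1}}$ for $i_1$, together with \eqref{eq:mu-reg2} and $d_i\leq D$, yields a numerical contradiction. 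The bookkeeping required in the rearrangement step, together with this popularity argument, is what makes the $I_1$ case non-routine, whereas the other four cells reduce to one-line estimates from their defining inequalities.
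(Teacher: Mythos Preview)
Your proposal is correct and follows essentially the same route as the paper's proof: apply Theorem~\ref{thm:single-level} level by level, handle $H$, $J$, $I_0$, $I'$ by direct substitution of the defining inequalities of the refined partition, and for $I_1$ bound the prefactor by $d_{i_1}/\beta$ and then control the difference $\Delta=(1-d_{i_1}\alpha_{i_1}M/N_{i_1})-(1-(M-T_J)/(N_{i_1}/d_{i_1}))$ using the upper bound $\tilde M\le(1/d_{i_1}+1/K)\sqrt{N_{i_1}/U_{i_1}}$. The paper carries out exactly the same $\Delta$-computation (its display \eqref{eq:delta}) and arrives at the same termwise inequality $\sqrt{N_{i_1}/U_{i_1}}\,\sqrt{N_{i'}U_{i'}}\le N_{i'}$, which it justifies by asserting that the unique level in $I_1$ is the most popular level in $I$; you go one step further and actually prove that assertion from the regularity condition \eqref{eq:mu-reg2} and the $I$/$I_1$ defining inequalities, which is a detail the paper leaves implicit.
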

We relegate the proof of Lemma~\ref{lemma:multi-user-achievability} to Appendix~\ref{app:multi-user-extras}.
What follows is a brief explanation of the individual rates.
\begin{itemize}
\item
Since levels $h\in H$ get no memory, no information about their files can be stored in the caches.
Hence, the server will have to transmit a complete copy of every file requested by the $KU_h$ users in level $h$.
Therefore, $R_h(M)=KU_h$ in the worst case.

\item
Since levels $j\in J$ get maximal memory, any file requested from $j$ can be fully recovered using the caches.
Thus there is no need for the server to send anything for $j$, and hence $R_j(M)=0$.

\item
Finally, the levels $i\in I$ get enough of the remaining memory $M-T_J$ so that they behave as in Theorem~\ref{thm:single-level}.
However, since $\alpha_iM\approx\sqrt{N_iU_i}/S_I \cdot (M-T_J)$, we get a rate of:
\[
R_i(M)
\approx \frac{N_iU_i}{\alpha_iM} - d_iU_i
\approx \frac{S_I\sqrt{N_iU_i}}{M-T_J} - d_iU_i.
\]
Notice how $R_i(M)$ is inversely proportional to the memory remaining after storing $J$.
This behavior is captured in the expressions for $I_0\cup I'$.
However, for the levels in $i_1\in I_1$, which get almost, but not quite, maximal memory, the individual rate $R_{i_1}(M)$ behaves more closely to a linear function of the memory.

\end{itemize}

\subsection{Outer bounds}
\label{sec:multi-user-converse}

In the single-level setup introduced in \cite{maddah-ali2012}, cut-set lower bounds on the optimal rate were sufficient for proving order-optimality of the scheme.
The idea was to choose cuts that include a certain number of caches $s$, as well as enough broadcast messages to decode almost all files when combined with the $s$ caches.
The choice of $s$ depended most importantly on the size of the memory $M$.
For example, for very large $M$, the bounds would reflect the fact that a single cache should contain enough information so that it need only be combined with a small broadcast message to decode one file.
In contrast, for small $M$, the bounds would instead show that a large broadcast message is needed to serve all $s$ users.

In our multi-level situation, the goal of our lower bounds is to show that different popularity levels require different amounts of memory, ranging from zero to maximal memory.
However, a single cut-set bound would force all levels to abide by the same number of caches $s$, a choice that would not reflect the plurality of memory allocation.
New lower bounds were thus required.

The lower bounds that we use allow choosing for each level $i$ the number of caches $s_i$ that corresponds to the amount of memory we expect it to get.
In essence, each level $i$ will exclusively be part of a cut-set bound with $s_i$ caches, which will give a lower bound that loosely corresponds to  the individual rate $R_i$ for this level.

The challenge is to incorporate all of these cut-set bounds together, such that the same caches appear in cut-set bounds for all levels.
This process is described next.
We start with a cut-set bound for the level with the smallest $s_i$.
Using Fano's inequality, we can derive lower bounds on its individual rate.
We then proceed to lower-bound the current quantity using a cut-set bound for the level with the next smallest $s_i$, and repeat the process.
The transition from one cut-set bound to the next can be achieved using sliding-window subset entropy inequalities \cite[Theorem~3]{Liu14}:%
\footnote{For these inequalities to work, we have to take not just one cut-set bound per level, but an averaging of $K$ similar cut-set bounds that ends up including all caches an equal number of times.}
\begin{lemma}[Sliding-window subset entropy inequality {\cite[Theorem~3]{Liu14}}]
\label{lemma:sliding-window}
Given $K$ random variables $(Y_1,\ldots,Y_K)$, we have, for every $s\in\{1,\ldots,K-1\}$:
\[
\frac1s\sum_{i=1}^K H\left(Y_i,\ldots,Y_{\langle i+s-1\rangle}\right)
\ge
\frac{1}{s+1}\sum_{i=1}^K
H\left( Y_i,\ldots,Y_{\langle i+s\rangle} \right),
\]
where we define $\langle i\rangle = i$ if $i\le K$ and $\langle i\rangle = i-K$ if $i>K$.
\end{lemma}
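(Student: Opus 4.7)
The plan is to reduce the sliding-window inequality to a single concavity statement about the cyclic-window entropy function
\[
f(k) \;=\; \sum_{i=1}^{K} H\bigl(Y_i, Y_{\langle i+1\rangle},\ldots,Y_{\langle i+k-1\rangle}\bigr),
\]
with the convention $f(0)=0$. The target inequality $\tfrac{1}{s}f(s)\ge\tfrac{1}{s+1}f(s+1)$ is exactly the statement that $f(k)/k$ is non-increasing in $k$, so it suffices to show (i) that $f$ is concave and (ii) that concavity together with $f(0)=0$ forces $f(k)/k$ to decrease.

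For step (i), I would exploit submodularity of entropy. Write $A_i^{(k)}=\{Y_i,Y_{\langle i+1\rangle},\ldots,Y_{\langle i+k-1\rangle}\}$ for the length-$k$ cyclic window starting at position $i$. The two adjacent length-$k$ windows $A_i^{(k)}$ and $A_{\langle i+1\rangle}^{(k)}$ satisfy
\[
A_i^{(k)} \cup A_{\langle i+1\rangle}^{(k)} \;=\; A_i^{(k+1)}, \qquad A_i^{(k)} \cap A_{\langle i+1\rangle}^{(k)} \;=\; A_{\langle i+1\rangle}^{(k-1)},
\]
so submodularity of joint entropy gives
\[
H\bigl(A_i^{(k)}\bigr) + H\bigl(A_{\langle i+1\rangle}^{(k)}\bigr) \;\ge\; H\bigl(A_i^{(k+1)}\bigr) + H\bigl(A_{\langle i+1\rangle}^{(k-1)}\bigr).
\]
Summing over $i=1,\ldots,K$ and using that the cyclic reindexing $i\mapsto\langle i+1\rangle$ merely permutes the summation, the left-hand side collapses to $2f(k)$ and the right-hand side to $f(k+1)+f(k-1)$. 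This yields $2f(k)\ge f(k+1)+f(k-1)$, i.e.\ the discrete concavity of $f$.

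For step (ii), let $d_k := f(k)-f(k-1)$. Concavity means $(d_k)$ is non-increasing, so $f(k)=\sum_{j=1}^{k}d_j \ge k\cdot d_{k+1}$. Rearranging, $(k+1)f(k)=k f(k)+f(k)\ge k f(k)+k d_{k+1}=k f(k+1)$, which is exactly $f(k)/k\ge f(k+1)/(k+1)$. Specializing to $k=s$ recovers the stated sliding-window inequality.

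The step I expect to require the most care is the cyclic bookkeeping in step (i): one must verify that when the submodularity bound is summed over $i$, both sides reindex cleanly under the cyclic shift, so that the intersection terms really do reassemble into $f(k-1)$ rather than some boundary-corrupted quantity. Once that reindexing is handled correctly, the rest of the argument is a standard concavity-implies-decreasing-average lemma applied to a non-negative, concave sequence anchored at $f(0)=0$.
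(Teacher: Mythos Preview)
Your argument is correct. The paper does not actually prove this lemma: it is quoted verbatim from \cite[Theorem~3]{Liu14} and used as a black box in the converse derivations. So there is no in-paper proof to compare against.

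That said, your route is essentially the standard one for this class of inequalities (Han-type inequalities for cyclic windows). The submodularity step is clean: for $1\le k\le K-1$, the two overlapping windows $A_i^{(k)}$ and $A_{\langle i+1\rangle}^{(k)}$ indeed satisfy $A_i^{(k)}\cup A_{\langle i+1\rangle}^{(k)}=A_i^{(k+1)}$ and $A_i^{(k)}\cap A_{\langle i+1\rangle}^{(k)}=A_{\langle i+1\rangle}^{(k-1)}$, and the cyclic shift $i\mapsto\langle i+1\rangle$ is a bijection on $\{1,\ldots,K\}$, so summing submodularity over $i$ gives $2f(k)\ge f(k+1)+f(k-1)$ with no boundary residue. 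Your step (ii) is the usual ``concave with $f(0)=0$ implies $f(k)/k$ nonincreasing'' lemma, and the chain of inequalities you wrote is valid. The only cosmetic point: when you invoke $d_j\ge d_{k+1}$ for all $j\le k$, you are using concavity at every index $1,\ldots,k$, which for $s=K-1$ requires the submodularity identity up through $k=K-1$; as noted, that case still works because the union is the full set $A_i^{(K)}$ and the intersection is $A_{\langle i+1\rangle}^{(K-2)}$.
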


The resulting lower bounds, in their final form, are given in the following lemma.
\begin{lemma}
\label{lemma:multi-user-converse}
Consider the multi-level, multi-user caching setup.
Let $b\in\mathbb{N}^+$ and $t\in\{1,\ldots,K\}$.
Furthermore, for every level $i$, let $s_i\in\mathbb{N}^+$ such that $s_it\in\{d_i,\ldots,\floor{K/2}\}$.
Then, for every memory $M$, the optimal rate can be bounded from below by:
\[
R^\ast(M)
\ge \sum_{i=1}^L
\lambda_i
\cdot \min\left\{ (s_it-d_i+1)U_i \,,\,\frac{N_i}{s_ib} \right\}
- \frac tb M,
\]
where $\lambda_i$ is a constant (introduced for technical reasons) defined as: $\lambda_i=1$ if $s_it=d_i$; and $\lambda_i=\frac12$ if $s_it>d_i$.
\end{lemma}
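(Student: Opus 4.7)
The plan is to derive the lower bound via cut-set arguments (using Fano's inequality together with the independence of distinct files) combined with the sliding-window subset entropy inequality of Lemma~\ref{lemma:sliding-window}. I would begin by fixing $b$ adversarially chosen demand vectors and letting $Z_1,\ldots,Z_K$ denote the cache contents and $X^{(1)},\ldots,X^{(b)}$ the corresponding broadcast messages. For any cyclic window of $w\ge d_i$ consecutive caches, exactly $(w-d_i+1)U_i$ level-$i$ users have all their $d_i$ access caches inside the window, and across $b$ worst-case demand vectors they collectively request at least $\min\{b(w-d_i+1)U_i,\,N_i\}$ distinct level-$i$ files. For each level $i$, applying Fano's inequality to every cyclic shift of a window of size $s_it$ and averaging over the $K$ shifts produces a bound on the average entropy $\tfrac{1}{K}\sum_k H(Z_k,\ldots,Z_{\langle k+s_it-1\rangle})$ in terms of $R$ and the per-level quantity $\min\{b(s_it-d_i+1)U_i,\,N_i\}$.

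The crucial step is combining the $L$ resulting bounds---one per level, each attached to a different window size $s_it$---into a single inequality whose coefficient of $M$ is $t$, rather than something like $\sum_i s_it$. This is where Lemma~\ref{lemma:sliding-window} is invoked telescopically. Ordering the levels by increasing $s_i$, each sliding-window step trades a $(w+1)$-cache average entropy for a $w$-cache average entropy rescaled by $w/(w+1)$, which is precisely what is needed to line up consecutive levels' cut-set bounds onto a common per-cache scale. By weighting the bound for level $i$ by $1/s_i$ and chaining these reductions, the cache-entropy terms telescope so that each cache is ultimately counted with aggregate coefficient $t$ while the broadcast rate picks up coefficient $b$. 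Dividing through by $b$ yields the claim. The factor $\lambda_i=1$ corresponds to the case $s_it=d_i$, where the window is already minimal and no sliding step is needed; the factor $\lambda_i=\tfrac12$ appears when $s_it>d_i$, reflecting a single merging of two consecutive cut-set bounds that cannot avoid a loss of one half.

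The main obstacle will be the combinatorial bookkeeping in the telescoping: guaranteeing that, after all sliding-window reductions and averagings, every cache contributes with exactly the stated coefficient $t$ to the combined inequality, and that the decoded-file estimates for all $L$ levels survive simultaneously on the right-hand side. The condition $s_it\le\floor{K/2}$ is essential here, since it prevents cyclic windows from overlapping themselves and ensures that Lemma~\ref{lemma:sliding-window} can be applied iteratively along the chain without additional boundary corrections. A subsidiary technical point will be verifying that replacing the sum of level-specific Fano bounds by the sum of their individual minima (rather than one joint $\min$) is legitimate, which follows from file independence across levels after conditioning appropriately.
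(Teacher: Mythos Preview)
Your overall architecture matches the paper's: order the levels by increasing $s_i$, start from single cache-group cut-sets, and at each stage apply Fano's inequality for the current level, condition on the decoded files, then invoke the sliding-window inequality to enlarge the window to the next $s_i$. The telescoping of the $1/s_i$ factors is precisely what collapses the cache-memory coefficient to $t$ rather than $\sum_i s_it$.

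However, your account of $\lambda_i$ is wrong, and this is a genuine gap. The factor $\tfrac12$ does \emph{not} arise from ``a single merging of two consecutive cut-set bounds'' in the sliding-window step; the sliding-window inequality itself incurs no multiplicative loss. The actual source is a demand-consistency constraint across the $K$ cyclic shifts you are averaging. Each of the $s_ib$ broadcast messages appears in several of the $K$ entropy terms (because consecutive windows overlap), and the \emph{same} (user, broadcast) pair must carry the \emph{same} requested file in every term where it appears. With overlapping cyclic windows it is not possible in general to assign demand vectors so that all $K$ windows simultaneously decode the full $p_i=\min\{(s_it-d_i+1)U_i\cdot s_ib,\,N_i\}$ distinct level-$i$ files. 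The paper shows one can assign demands so that at least $K-(s_it-d_i)$ of the $K$ windows attain the full count; the remaining windows are handled by mere conditioning. When $s_it\le\floor{K/2}$ this fraction is at least $\tfrac12$, whence $\lambda_i=\tfrac12$; when $s_it=d_i$ each window contains exactly one level-$i$ access set and there is no overlap conflict, giving $\lambda_i=1$. Thus the hypothesis $s_it\le\floor{K/2}$ exists to guarantee $\lambda_i\ge\tfrac12$, not (as you say) to make Lemma~\ref{lemma:sliding-window} applicable, which holds for all $s\le K-1$.

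A smaller point you leave vague: the coefficient $t$ on $M$ is obtained by first packaging the caches into blocks $\mathcal{Z}^t_k=(Z_k,\ldots,Z_{\langle k+t-1\rangle})$ of $t$ consecutive caches, each paired with a block $\mathcal{X}^b_k$ of $b$ broadcasts, and then applying the sliding-window inequality over these $K$ blocks rather than over individual caches. This is the concrete mechanism behind your ``each cache is ultimately counted with aggregate coefficient $t$,'' and it also explains why the per-level window is $s_it$ caches while the sliding-window lemma is invoked with parameter $s_i$.
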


The full proof of Lemma~\ref{lemma:multi-user-converse} is given in Appendix~\ref{app:multi-user-converse}.
However, we will here give a brief intuition behind the expression shown above.
Every term in the sum corresponds to a level $i$.
Consider a single level $i$.
Roughly, if we ignore all the other levels in the summation, and assuming $d_i=1$ for simplicity, then the inequality can be rearranged approximately as follows:
\[
s_ib R + s_it M \ge \min\left\{ s_it\cdot s_ib\cdot U_i \,,\, N_i \right\}.
\]
This expression is essentially a cut-set bound, saying that, with $s_it$ caches and $s_ib$ broadcast messages, up to $s_it\cdot s_ib\cdot U_i$ files from level $i$ can be decoded (since there are $U_i$ users per cache for this level), unless this number excedes the total number of files $N_i$.

\subsection{Approximate optimality}
\label{sec:multi-user-gap}

In order to prove order-optimality of the memory-sharing scheme, we must use Lemma~\ref{lemma:multi-user-converse} with appropriate parameters.
Our goal is to get a resulting lower bound on the optimal rate $R^\ast(M)$ such that the ratio between the achievable rate and $R^\ast(M)$---henceforth called the \emph{gap}---is minimized.
More specifically, the result we seek is a constant upper bound on the gap.

For technical reasons, several cases need to be considered for which different values are chosen for the parameters in Lemma~\ref{lemma:multi-user-converse}.
In particular, the values of choice in one case would violate the conditions imposed on them in another case.
In this section, we will illustrate the proof methodology by approximating one of these cases.
The complete and rigorous proof is given in Appendix~\ref{app:multi-user-gap}.

Recall the refined $M$-feasible partition introduced in Definition~\ref{def:m-feasible} and in \eqref{eq:refined-partition}.
As previously mentioned, this partition only depends on the problem parameters, not on the achievability strategy.
The cases of interest for the proof of order-optimality are:
\begin{itemize}
\item Case 1a: $I_1=\emptyset$ and $J\not=\emptyset$;
\item Case 1b: $I_1=\emptyset$ and $J=\emptyset$;
\item Case 2: $I_1\not=\emptyset$.
\end{itemize}
There is also a special Case 0 for when $K$ is small (specifically, $K<D/\beta$).
All the other cases assume $K$ is large.

The case that we will focus on in this section is Case 1a.
Note that $I_1=\emptyset\implies I=I_0\cup I'$.
For simplicity, we will assume that $d_i=1$ for all $i$.
Furthermore, most of the analysis will be approximate.

By Lemma~\ref{lemma:multi-user-achievability}, the achievable rate in Case 1a can be upper-bounded by:
\begin{equation}
\label{eq:approx-achievable-rate}
R(M) \le \sum_{h\in H} KU_h + \frac{2S_I^2}{M-T_J},
\end{equation}
since $S_I=\sum_{i\in I}\sqrt{N_iU_i}$ (see Definition~\ref{def:m-feasible}).
Now consider Lemma~\ref{lemma:multi-user-converse} with the following parameters:
\begin{IEEEeqnarray*}{lCl"rCl}
&& & t &=& 1;\\
\forall h&\in&H, & s_h &\approx& \frac12K;\\
\forall i&\in&I, & s_i &\approx& \frac{\sqrt{N_i/U_i}}{2\tilde M};\\
\forall j&\in&J, & s_j &=& 1;\\
&& & b &\approx& 4 \tilde M^2,
\end{IEEEeqnarray*}
where, as in Definition~\ref{def:m-feasible}, $\tilde M = (M-T_J+V_I)/S_I \approx (M-T_J)/S_I$.
The values of the $s_i$'s are very similar to those used in the single-level setup in \cite{maddah-ali2012}.
Indeed, the levels with the smallest memory ($h\in H$) are handled using cut-set bounds that consider a fraction of the total number of caches, while the levels with the largest memory ($j\in J$) are handled with bounds that consider only one cache.
Thus, the lemma lower-bounds the optimal rate by:
\begin{IEEEeqnarray*}{rCl}
R^\ast(M)
&\ge& \sum_{h\in H} \min\left\{ s_hU_h, \frac{N_h}{s_hb} \right\}
+ \sum_{i\in I} \min\left\{ s_iU_i, \frac{N_i}{s_ib} \right\}\\
&&{} + \sum_{j\in J} \min\left\{ s_jU_j, \frac{N_j}{s_jb} \right\}
- \frac{M}{b}.
\end{IEEEeqnarray*}
Substituting the values of the parameters, and utilizing the inequalities on $\tilde M$ that define the refined $M$-feasible partition, we get:
\begin{IEEEeqnarray*}{rCl}
R^\ast(M)
&\ge& \sum_{h\in H} \frac12 KU_h
+ \sum_{i\in I} \frac{\sqrt{N_iU_i}S_I}{2(M-T_J)}
+ \sum_{j\in J} \frac{N_j}{b}
- \frac{M}{b}\\
&=& \sum_{h\in H} \frac12KU_h
+ \frac{S_I^2}{2(M-T_J)}
- \frac{M-T_J}{b}\\
&=& \sum_{h\in H} \frac12KU_h
+ \frac{S_I^2}{2(M-T_J)}
- \frac{(M-T_J)S_I^2}{4(M-T_J)^2}\\
&=& \sum_{h\in H} \frac12KU_h
+ \frac{S_I^2}{4(M-T_J)}\\
&\overset{(a)}{\ge}& \frac{1}{8}\cdot R(M).
\end{IEEEeqnarray*}
The last inequality $(a)$ is due to \eqref{eq:approx-achievable-rate}.

Thus, the result is a constant multiplicative gap ($8$ in this example) between the achievable rate and the optimal rate for this regime.
The full proof in Appendix~\ref{app:multi-user-gap} derives such gaps for each one of the four regimes mentioned above (Cases 0, 1a, 1b, and 2), and finally combines them into one multiplicative gap that holds for all cases.

\section{The Single-User Setup}
\label{sec:single-user}

\subsection{Caching-and-delivery strategy: clustering}
\label{sec:single-user-achievability}

\begin{proof}[Proof of Theorem~\ref{thm:single-user-achievability}]
Recall from Section~\ref{sec:single-user-results} how the memory is divided among the sets $H'$ and $I'$, defined in \eqref{eq:G-partition}: all of the available memory is given to levels in $I'$, which is treated as one super-level.
As a result, all requests for files from $H'$ must be handled by complete file transmissions from the BS.
Since there are $\sum_{h\in H'}K_h$ users making such requests, the result is the same amount of transmissions in the worst case.
Therefore, the message sent to all users requesting from a level in $H'$ has the following rate:
\begin{equation}
\label{eq:su-ach-h}
R_{H'} = \sum_{h\in H'} K_h.
\end{equation}

For the set $I'$, now considered as one super-level, we use the single-level strategy from \cite{maddah-ali2013}.
Although only a subset of the caches is active in our setup, the same strategy still applies.
Indeed, the placement in \cite{maddah-ali2013} is a random sampling of the files in all the caches; we do the same placement in this case.
In the delivery phase, we now know the caches to which the users of $I'$ connected.
We perform a delivery as in \cite{maddah-ali2013}, assuming that only these caches were ever present in the system.

For illustration, suppose that there were $K=4$ caches.
Furthermore, assume the partition $(H',I')$ was performed such that $3$ users will request files from the set of levels $I'$.
In the placement phase, we store a random sample of the files in $I'$ in each of the four caches.
In the delivery phase, suppose the three users requesting files from $I'$ connect to caches $1$, $2$, and $4$.
In this case, the BS will send a complete file transmission for user $3$ (since he requests a file from $H'$), and will treat users $1$, $2$, and $4$ as though they were part of a single-level caching system with only three caches.

The rate required for $I'$ can be directly derived from Lemma~\ref{lemma:single-level}, using $\sum_{i\in I'} K_i$ caches, $\sum_{i\in I'} N_i$ files, and $1$ user per cache.
In addition, we have, from \eqref{eq:G-partition}, that $M\ge N_i/K_i$ for all $i\in I'$.
This implies
\(
M \ge (\sum_{i\in I'}N_i)/(\sum_{i\in I'}K_i),
\)
and hence the rate for $I'$ is:
\begin{equation}
\label{eq:su-ach-i}
R_{I'} = \max\left\{  \frac{\sum_{i\in I'}N_i}{M}-1 \,,\, 0 \right\}.
\end{equation}
The maximization with zero is needed because it is possible to have $M>\sum_{i\in I'}N_i$.

By combining \eqref{eq:su-ach-h} with \eqref{eq:su-ach-i}, we get a total broadcast rate of:
\[
R(M) = R_{H'} + R_{I'}
= \sum_{h\in H'} K_h
+ \max\left\{ \frac{\sum_{i\in I'}N_i}{M}-1 , 0 \right\}.
\]
This proves Theorem~\ref{thm:single-user-achievability}.
\end{proof}

It will be helpful for the later analysis to refine the partition $(H',I')$ as follows.
\begin{definition}
\label{def:G-partition}
Define the following partition $(G,H,I,J)$ of the set of levels:
\begin{IEEEeqnarray*}{rCl}
G &=& \left\{ g : M < N_g/K_g \text{ and } K_g\le5 \text{ and } M \le N_g/6 \right\};\\
H &=& \left\{ h : M < N_h/K_h \text{ and } K_h\ge6 \right\};\\
I &=& \left\{ i : N_i/K_i \le M \le N_i/6 \right\};\\
J &=& \left\{ j : M > N_j/6 \right\}.
\end{IEEEeqnarray*}
\end{definition}

We rewrite the achievable rate in terms of this new partition:
\begin{equation}
\label{eq:single-user-achievable-rate}
R(M)
\le 5\cdot|G|
+ \sum_{h\in H} K_h
+ \frac{\sum_{i\in I}N_i}{M}
+ \left[ \frac{\sum_{j\in J}N_j}{M} - 1 \right]^+,
\end{equation}
where $[x]^+=\max\{x,0\}$, and keeping in mind that $K_g\le5$ for $g\in G$.
If we define $N_J=\sum_{j\in J}N_j$, we can upper-bound the last term by:
\begin{equation}
\label{eq:su-ub-J}
\left[ \frac{N_J}{M} - 1 \right]^+
\le
\begin{cases}
N_J/M & \text{if $M<N_J/6$;}\\
6\left( 1 - M/N_J \right) & \text{if $N_J/6\le M < N_J$;}\\
0 & \text{if $M\ge N_J$.}
\end{cases}
\end{equation}
Effectively, the set $J$ now behaves as a unit.

Much of what defines the above partition hinges on the following question: What happens if we know the user profile \emph{a priori}?
If we do, we could imagine a strategy of total separation of the levels: every level $i$ is treated as a single-level system with $K_i$ caches and users, and $N_i$ files.
Each level can thus get the entirety of the memory of their specific $K_i$ caches.
Using Lemma~\ref{lemma:single-level}, we can calculate the total rate for this hypothetical situation as:
\[
R'(M)
= \sum_{i=1}^L
\min\left\{ K_i, \frac{N_i}{M} \right\} \left( 1 - \frac{M}{N_i} \right)^+,
\]
which can be approximated by:
\[
R'(M)
\le 5\cdot|G| + \sum_{h\in H} K_h + \sum_{i\in I} \frac{N_i}{M}
+ 6\sum_{j\in J}\left( 1-\frac{M}{N_j} \right)^+.
\]
Interestingly, the $G$, $H$, and $I$ terms in this rate expression are exactly the same as those in \eqref{eq:single-user-achievable-rate}.
Therefore, except for levels $j\in J$ where $M>N_j/6$, our scheme would not benefit from prior knowledge of the user profile.
However, the set $J$ is in general limited by the lack of this knowledge.
Indeed, if $M>N_j$ for all $j\in J$ but $M<N_J=\sum_{j\in J} N_j$, then it is possible to store any level in every cache, but it is not possible to store all levels in all caches.
Thus knowing which level is at which cache can bring the rate down to zero.

Finally, $G$ and $H$, which are both subsets of $H'$ (though not necessarily a partition; the definition of $J$ allows some of its levels to be in $H'$ too) were separated because levels behave differently when their number of caches is very small.
Specifically, a level in $H$ can transition into $I$ as the memory increases, but a level in $G$ immediately jumps to $J$ since $N_g/K_g>N_g/6$.

\subsection{Outer bounds and approximate optimality}
\label{sec:single-user-gap}

As mentioned in Section~\ref{sec:single-user-results}, we use a cut-set bound to lower-bound the optimal rate.
The idea is to send a certain number $b$ of broadcast messages $X_1,\ldots,X_b$ that serve certain requests.
We choose these requests as follows.
For every level $i\in G\cup H\cup I$, consider a certain number $s_i\le K_i$ of caches.
These caches are distinct across levels.
For all the $b$ broadcasts, the users connected to these $s_i$ caches will altogether request $s_ib$ distinct files from level $i$ if there are that many; otherwise they request all $N_i$ files.
For the levels in the set $J$, we collectively consider some $s_J$ caches (distinct from the rest).
The users at these $s_J$ caches will use all $b$ broadcasts to decode as many files from the set $J$ as possible, up to $s_Jb$ files.
Let $n_J$ denote this number; it will be determined later.

If we let $S=\sum_{i\not\in J}s_i + s_J$ be the total number of caches considered, then, by Fano's inequality:
\begin{IEEEeqnarray*}{rCl}
bR + SM
&\ge& H\left( Z_1,\ldots,Z_S,X_1,\ldots,X_b \right)\\
&\ge& \sum_{i\not\in J} \min\left\{ s_ib,N_i \right\} + n_J\\
R^\ast(M) &\ge& \sum_{i\not\in J} s_i\left( \min\left\{ 1,\frac{N_i}{s_ib} \right\} - \frac{M}{b} \right)\\
&& {} + s_J\left( \frac{n_J}{s_Jb} - \frac{M}{b} \right)\\
&=& \sum_{i\not\in J} v_i + v_J. \IEEEyesnumber\label{eq:single-user-lower-bounds}
\end{IEEEeqnarray*}

We will analyze each of the $v_i$ and $v_J$ terms separately.
We identify two cases for which the analysis is slightly different.

The first case is when $M<1/6$.
Because of regularity condition \eqref{eq:su-reg}, this implies $M<1/6<1\le N_i/K_i$ for all levels $i$, and thus the achievable rate can be bounded by:
\begin{equation}
\label{eq:su-ub-0}
R(M) \le \sum_{i=1}^L K_i.
\end{equation}
The second case, which is more interesting, is when $M\ge1/6$.

The details of the analysis are given in Appendix~\ref{app:single-user-gap}.
We will here give a brief outline of the procedure, with approximations to avoid burying the essence of the argument under technicalities.
The idea is to lower-bound each term $v_i$ or $v_J$ to match the corresponding term in the rate expression.
Let us focus on the case $M\ge1/6$, and suppose we choose $b\approx6M\ge1$.
For simplicity, we will only look at the sets $H$ and $I$.

Consider a level $h\in H$, and let $s_h\approx K_h/6$.
Then, by the definition of $H$:
\[
\frac{N_h}{s_hb} = \frac{N_h}{K_hM} \ge 1.
\]
As a result,
\[
v_h = \frac{K_h}{6}\left( 1 - \frac{M}{6M} \right) = \frac{5}{36}\cdot K_h,
\]
which matches the corresponding achievable rate term in \eqref{eq:single-user-achievable-rate}, up to the constant $(5/36)$.

Now consider a level $i\in I$, and let $s_i\approx N_i/(6M)$.
Then,
\[
\frac{N_i}{s_ib} = \frac{N_i}{(N_i/6M)\cdot6M} = 1,
\]
and hence:
\[
v_i = \frac{N_i}{6M}\left( 1 - \frac{M}{6M} \right) = \frac{5}{36}\cdot\frac{N_i}{M}.
\]
Again, this matches the corresponding achievable rate term in \eqref{eq:single-user-achievable-rate}, up to the constant $(5/36)$.

Applying a similar procedure for every level, we get matching lower bounds (up to a constant) and thus prove Theorem~\ref{thm:single-user-gap}.

\section{Comparison of the Two Setups}
\label{sec:comparison}

In this section, we first compare the memory-sharing and the
clustering strategies, and we explore the dichotomy among the two
setups that is emphasized by the difference between strategies.  We
will then discuss why such a dichotomy exists, and explain the need
for different lower bounds for each setup.  Finally, we explore a new
problem that combines both setups by including both multi-user and
single-user levels.

\subsection{Comparing the two caching-and-delivery strategies}

We have previously argued that memory-sharing is the best scheme to use in the multi-user case, while clustering is the near-optimal strategy in the single-user case.
However, could one (or both) of these schemes be good enough for both situations?
We will show, in this section, that it is not the case: in the single-user setup, memory-sharing can give a gap between its rate and the optimum that increases linearly with $L$; meanwhile, the rate achieved by clustering in the multi-user case can be arbitrarily far from the optimal rate.
We give examples of these two cases.

Consider a multi-user setup with two levels.
Suppose that there is enough memory so that both levels are to be partially stored in the caches.
With the memory-sharing scheme, this means $I=\{1,2\}$, which would, by Theorem~\ref{thm:multi-user-achievability}, give a rate of approximately:
\begin{IEEEeqnarray*}{rCl}
R
&\approx& \frac{\left( \sqrt{N_1U_1}+\sqrt{N_2U_2} \right)^2}{M}\\
&=& \frac1M\left[ N_1U_1 + N_2U_2 + 2\sqrt{N_1U_1N_2U_2} \right].
\end{IEEEeqnarray*}
On the other hand, if we had clustered the two levels into one, then this super-level would have $(N_1+N_2)$ files and $(U_1+U_2)$ users per cache, resulting in the following rate:
\begin{IEEEeqnarray*}{rCl}
R
&\approx& \frac{(N_1+N_2)(U_1+U_2)}{M}\\
&=& \frac1M \left[ N_1U_1 + N_2U_2 + N_1U_2 + N_2U_1 \right].
\end{IEEEeqnarray*}

However, we know that the geometric mean of any two numbers is always smaller than their arithmetic mean.
By considering the two numbers $N_1U_2$ and $N_2U_1$, we get:
\[
2\sqrt{N_1U_2N_2U_1} \le N_1U_2 + N_2U_1,
\]
and specifically the ratio between them can get arbitrarily large when the popularities of the two levels become significantly different (i.e., $U_2/N_2 \ll U_1/N_1$).
Intuitively, if the two levels had similar popularities, then memory-sharing gives them similar amounts of memory, effectively merging them.
However, if their popularities were very different, then they should be given drastically different portions of the memory.

Consider now the single-user case with $L$ levels, and suppose again that the memory is such that all levels will be partially stored.
Let us assume that $N_1=\cdots=N_L$.
Using the clustering scheme, we get the following approximate rate:
\[
R \approx \frac{N_1+\cdots+N_L}{M}
= \frac{LN_1}{M}.
\]
However, with memory-sharing, we would get:
\[
R \approx \frac{\left( \sqrt{N_1}+\cdots+\sqrt{N_L} \right)^2}{M}
= \frac{L^2N_1}{M},
\]
which is larger by a factor of $L$.
Essentially, we are sending $L$ broadcasts, one per level, when we could send just one broadcast for all $L$ levels.

\subsection{Analysis of the dichotomy between the setups}

The dichotomy between the two extremes is striking.
They require different strategies, and the strategy that is good for one setup is not so for the other.
This suggests a fundamental difference between the two setups.

To understand this difference, consider what happens when sending a coded broadcast message.
Each message targets a specific subset of users.
If, in this subset, there exist two users that are connected to the same cache, then these users have access to the exact same side information.
As a result, no coding can be done across these two users, and there is hence no benefit in including them in the same broadcast.

There are in fact two opposing forces at work on the caching-and-delivery strategy.
The first is a popularity-centric force: it pushes on the strategy to allocate more memory to the more popular files.
The second force is coding-centric: it encourages increasing the number of coding opportunities.
These two forces are at odds, since coding \emph{across levels} performs best when the files in these levels are given the same memory, regardless of popularity.

With that in mind, consider again \figurename~\ref{fig:setup-F} and \figurename~\ref{fig:setup-G}.
Notice how, in the multi-user setup, there are multiple rows of users, each of which consists of users from the \emph{same} popularity level.
Each such row is a complete set of users with no common caches: any additional users we add would have access to the same cache as some other user.
Thus, it is sufficient to consider them in a broadcast transmission that is separate from all other rows.
Since, as a result, no two levels will share the same broadcast message, it can only be beneficial to choose the best possible division of the memory, based on popularities.
In the single-user setup, however, there is only one row of users that contains all the users from all the levels.
It is hence possible to generate coding opportunities across levels.
Merging is thus a better option in this situation, and merging is most efficient when all levels receive equal memory per file.

\subsection{The difference in the lower bounds}

Complementing the difference in the achievable strategies between the two setups, we see a difference in the lower bounds to the optimal rate.
In the multi-user case, we use a combination of cut-set bounds, one for each popularity level.
However, we use a unified cut-set bound for all the levels in the single-user case.
We elaborate on this difference in this section.

In the single-level setup studied in \cite{maddah-ali2012}, cut-set bounds were given to lower-bound the optimal rate for every value of memory $M$.
Depending on the value of $M$, a certain number of caches (and hence users) were considered and used in the cut-set bounds.
Roughly speaking, about $M$ broadcast messages are sent to the users at $N/M$ distinct caches, allowing them to decode $M\cdot (N/M)=N$ files in total.
In the multi-level setups (both multi-user and single-user), the lower bounds retain this idea.
However, There are two crucial differences between the setups that force different choices of lower bounds.

The first difference is in the role of each cache vis-\`a-vis the popularity levels.
In both setups, different levels are given different memory values.
However, the \emph{same} cache must be simultaneously used for \emph{all} levels in the multi-user setup; in the single-user setup each cache is bound to a single level at any moment.
As a result, a single cut-set bound can still encompass all levels in the single-user setup, but will not be enough in the multi-user setup.

The second difference is in the uncertainty of the user profile.
In the single-user setup, there are situations where this uncertainty is significant enough to impact the achievable rate.
This is especially true for levels nearing their maximal storage (the set $J$ in Definition~\ref{def:G-partition}), as described in Section~\ref{sec:single-user-achievability}.
The lower bounds should incorporate this notion by considering demands from different levels at the same cache.
As a result, a single cut-set bound unifying all levels (at least the levels involved in this uncertainty) becomes necessary.

\subsection{Mixing the setups}

So far, we have looked at the two extremes: either all levels were represented at all the caches, or none of them were.
A natural problem arises: that of studying intermediate cases.
The simplest form such intermediate cases can take is one where levels of both types are present.

Specifically, there are two classes of popularity levels: $\mathcal{F}$ and $\mathcal{G}$.
The class $\mathcal{F}$ consists of levels $i$ that are represented by exactly $U_i$ users \emph{at every cache}.
In contrast, there is exactly one row of users that represents all the levels in the class $\mathcal{G}$: each level $i\in\mathcal{G}$ is represented by $K_i$ of those users.

The most natural strategy to employ in this situation would be to superpose the multi-user and the single-user strategies.
In particular, we divide the memory $M$ into $\gamma M$ and $(1-\gamma)M$, for some $\gamma\in[0,1]$.
We give the first part to $\mathcal{F}$ and the second part to $\mathcal{G}$, and apply their respective strategies on their part of the memory.
We believe this to be the best strategy, but proving its order-optimality requires developing new lower bounds that consider levels of both classes at the same time; this is part of our on-going work.

\section{Discussion and Numerical Evaluations}
\label{sec:discussion}

In the previous sections, we presented theoretical results for any given set of popularity levels and associated user access structures.
However, in practice, what is available is a ``continuous'' popularity distribution over the entire set of $N$ files, and it is up to the designer to choose: \textsf{(a)} the number of popularity levels; \textsf{(b)} which files to assign to which level; and \textsf{(c)} the corresponding user access degree for each popularity level.
For each such choice, our theoretical results characterize the minimum broadcast transmission rate, and we study, in this section, the impact of these choices on the transmission rate.
Furthermore, while our theoretical model assumed that, for each popularity level, the number of users fixed, we relax this assumption here by allowing each user to randomly connect to one of the $K$ APs and request a file stochastically, according to the underlying popularity distribution.
Finally, we will also compare the performance of our scheme with that of the traditional LFU approach, as well as the information-theoretic lower bounds presented before.
These evaluations will only focus on the multi-user setup, as it is less discussed in the literature.
Since the single-user setup utilizes a scheme similar to what is already in the literature for arbitrary distributions \cite{ZhangArbitrary,Zcodedcaching}, we do not feel it is necessary to include it in this discussion.
However, we do provide a brief comparison of the clustering scheme with the literature in Section~\ref{sec:discussion-clustering}

We use a YouTube dataset \cite{YoutubeRepository} for our evaluations.
This dataset provides the number of views of videos in a set of $N\approx500\,000$, over some period of time.
Thus, these views can be thought of as approximating the popularity of the videos.
\figurename~\ref{fig:youtube} shows the popularity distribution of the videos (normalized number of views), which resembles a Zipf distribution similar to those commonly observed for multimedia content \cite{breslau1999web}.

When using the YouTube dataset in the following sections, we will often omit the total number of users considered.
This is because the total broadcast rate is always directly proportional to the total number of users, when the \emph{fraction} of users per level is fixed.
In this situation, the fraction is determined by the distribution in \figurename~\ref{fig:youtube} and the levels into which the files were split, thus the total number of users will not affect the behavior of the system.

\subsection{Discretizing a continuous popularity distribution}
\label{sec:discretizing}

Our first step is to divide the files in the YouTube dataset into a certain number of levels, based on the popularity profile in \figurename~\ref{fig:youtube}.
Let there be $K=75$ caches.
We consider small, moderate, and large values of $M / N$ ($0.03$, $0.2$, and $0.7$) and set the user access degree $d_i = 1$ for every level $i$, so as to study the impact of the number of levels on the broadcast rate in isolation.
For increasing values of $L$, we find the division of the files into $L$ levels that minimizes the rate achieved by the memory-sharing scheme using a brute-force search.
We plot the minimum achievable rate versus $L$ in \figurename~\ref{fig:youtube-R-vs-L}.
As is easily apparent, while there is a significant gain in performance between treating all files as one level and dividing them into two levels, the gain    decreases with diminishing returns as $L$ increases.
This shows the importance of dividing files into multiple levels, but also suggests that 3--4 levels are sufficient to derive most of the benefits. 

\begin{figure}
\centering
\includegraphics[width=\myplotwidth]{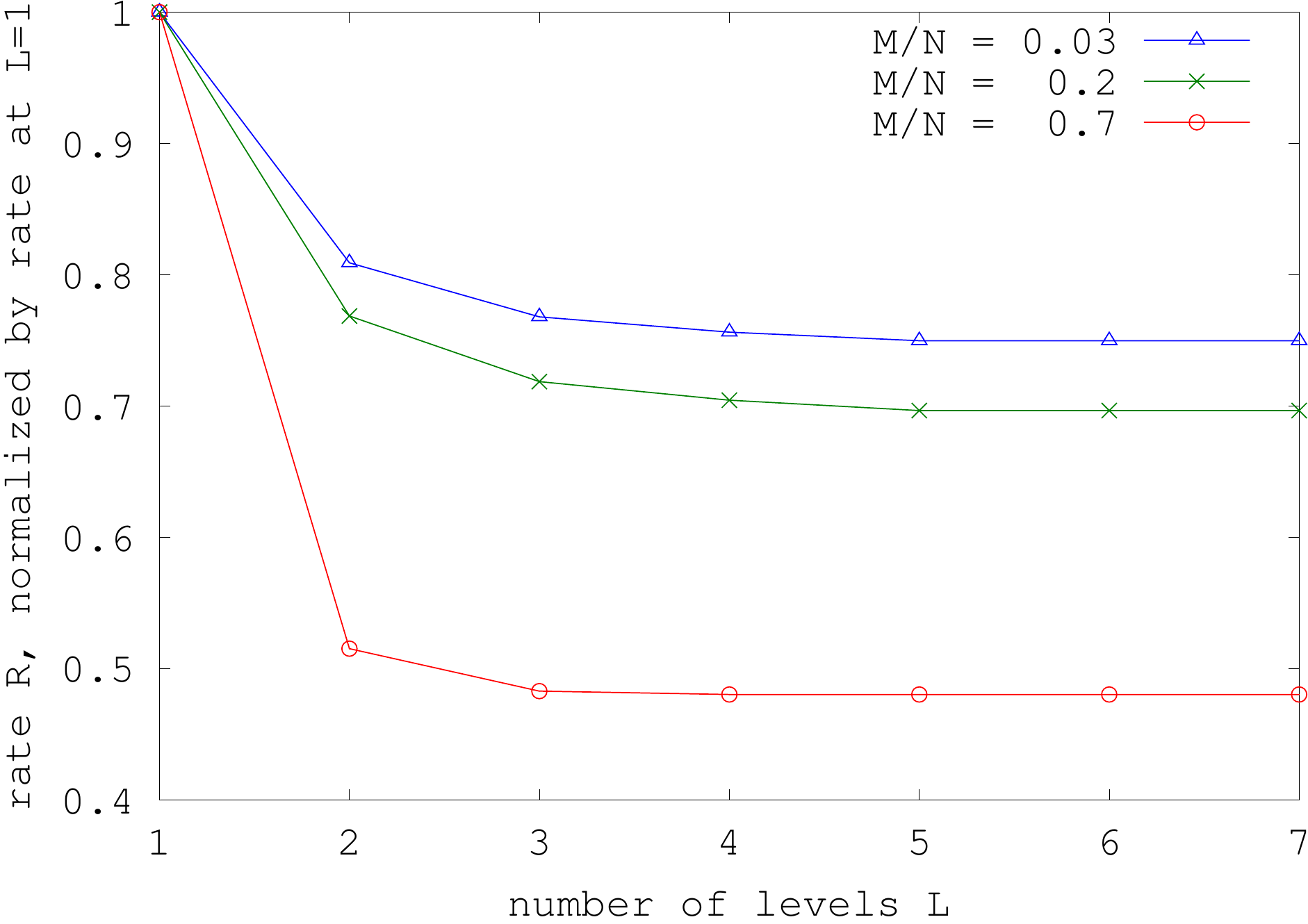}
\caption{Rate achieved by the memory-sharing scheme vs.\ number of levels, for different values of cache memory.
For each $L$, we choose the $L$ levels that minimize the achievable rate (using brute-force search).
For ease of comparison, the rate values have been normalized by the rate at $L=1$.}
\label{fig:youtube-R-vs-L}
\end{figure}

We remind the reader that the popularity profile in \figurename~\ref{fig:youtube} is purely empirical.
It is based on the number of views of the videos in the dataset, collected over some period of time.
By the very nature of the data, if two videos have received, let's say, 1000 and 1100 views, then this does not really imply that the first file is more popular than the other.
This is especially true of videos with very small number of views.
By grouping files into popularity classes, we acknowledge the difference in popularity of very different files, while simultaneously not distinguishing between files whose \emph{empirical} popularities are close.

\subsection{Impact of multi-access on the achievable rate}
To study the effect of multi-access in isolation, we will fix the partition we use to divide the files into different levels and then look at different multi-access structures.
Suppose again $K=75$, and consider $L = 2$ levels, with $N_1 = 0.2N$, and $N_2 = 0.8N$ files.
We plot in \figurename~\ref{fig:youtube-multi-access} the broadcast rate of our scheme as a function of the normalized memory $M / N$, for four different access structures $(d_1, d_2)$: $(1,1)$, $(1, 2)$, $(2,1)$, and $(2,2)$.

As one would expect, allowing for multi-access greatly improves the transmission rate.
For example, the rate for the multi-access system with $(d_1 = 2, d_2 = 2)$ is smaller than the rate for the single-access system with $(d_1 = 1, d_2 = 1)$.
The cases $(d_1 = 1, d_2 = 2)$ and $(d_1 = 2, d_2 = 1)$ provide a more interesting comparison.
For small memory size $M$, the former gives a lower rate since the cache memory mainly contains files from level $1$, and so giving higher access to level $1$ is more beneficial in reducing the rate.
On the other hand, as $M$ grows and files from level $2$ start occupying a significant portion of the memory, it becomes more efficient to give higher access to level $2$ since it has many more files than level $1$. 

While greater cache access helps reduce the rate, there is also a cost associated with it in terms of the increased delay in gathering data from multiple APs, as well as a reduced rate as a user connects to farther APs.
In general, for a given multi-level setup with parameters $L$, $K$, $\{N_i,U_i\}$, and $M$, such a cost can be included in the rate optimization framework
as one or more inequalities of the form $\mathrm{cost}_j(K,\{U_i,d_i\}_i)\le C_j$, for some maximum cost $C_j$.
The above optimization problem can be numerically solved by a designer in order to identify the optimal access structure for the multi-level system under consideration.
However, to derive some intuition about how the costs impact the optimal multi-access structure, let us consider a setup with $L = 3$ levels, and with $N_1 = 0.04N$, $N_2 = 0.13N$, and $N_3 = 0.83N$ files in the three levels.
Say we want to include both a maximum degree constraint $d_i \le 3$ for each level $i$, as well as an average degree constraint $(\sum_i U_i d_i ) / U \le 2$.
Then, \figurename~\ref{fig:youtube-dvM} plots the optimal access structure vs.\ the normalized memory size.
As before, when the memory is small, the optimal access structure is one which satisfies $d_1\ge d_2\ge d_3$, but this relation becomes reversed as the memory increases.

\begin{figure}
\centering
\includegraphics[width=\myplotwidth]{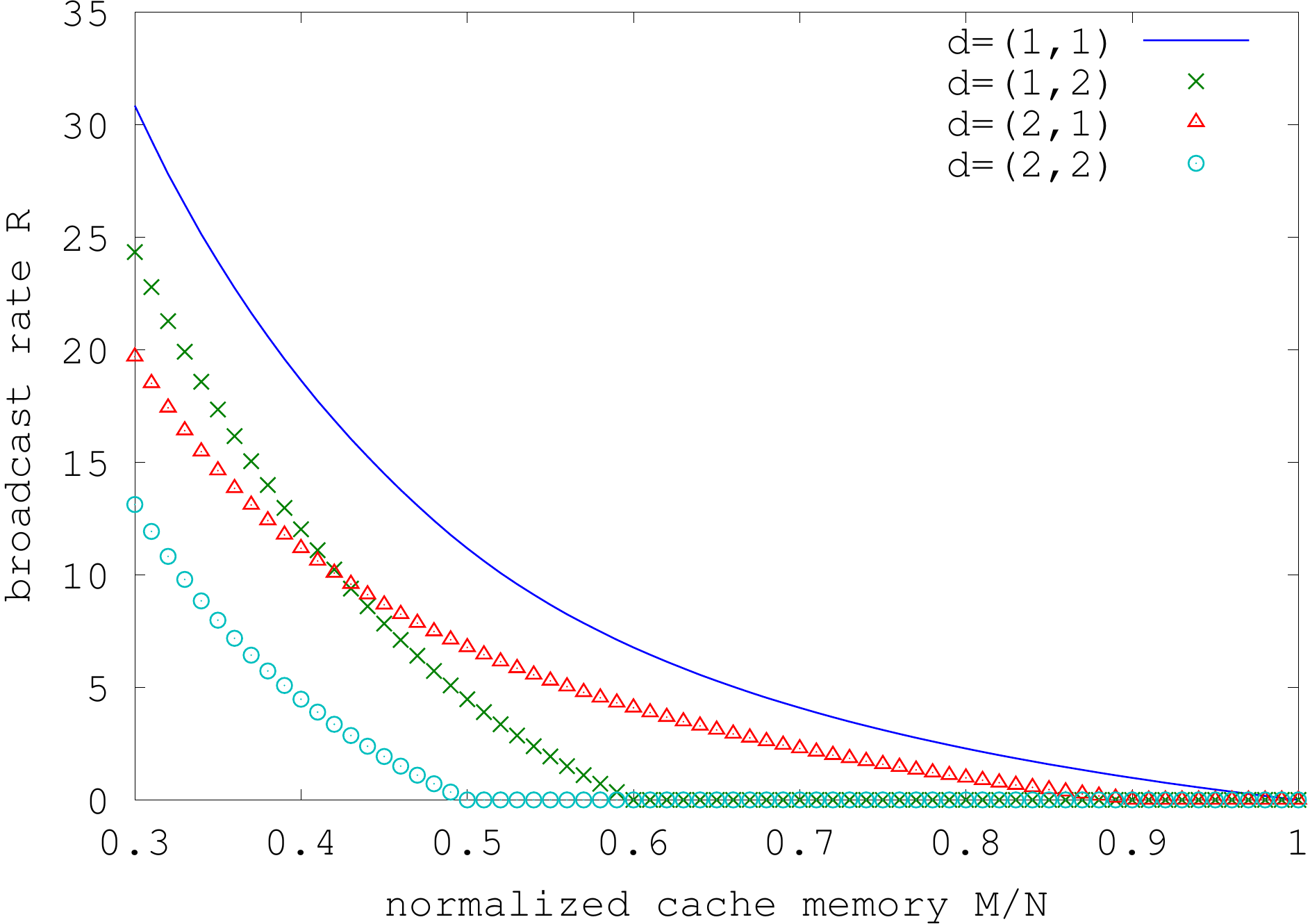}
\caption{Achievable rate vs.\ cache memory in a two-level setup, for different access structures.}
\label{fig:youtube-multi-access}
\end{figure}

\begin{figure}[t]
\centering
\includegraphics[width=\myplotwidth]{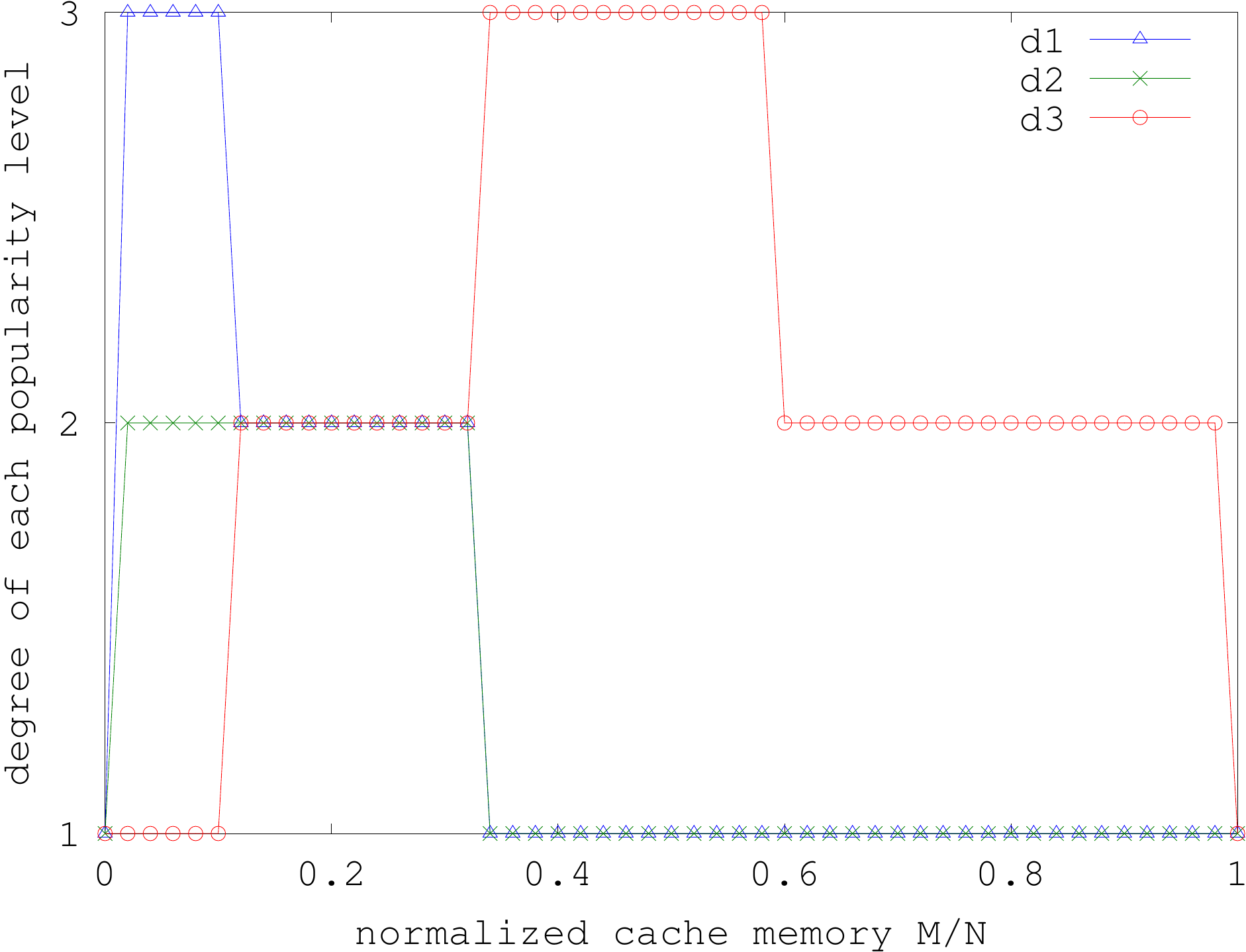}
\caption{Optimal access structure vs.\ memory, with $d_{\max}=3$, $d_\mathrm{avg}=2$.}
\label{fig:youtube-dvM}
\end{figure}

\subsection{Stochastic variations in user profiles}

The theoretical setup and results presented in the previous sections assumed a symmetric and deterministic user profile across all the APs.
In particular, exactly $U_i$ users are assigned to each AP to request files from level $i$.
This section aims at evaluating the robustness of memory-sharing to asymmetry and stochasticity in the user profiles across caches.

We consider a setup where each of the $KU$ users in the system randomly connects to one of the $K$ APs and requests a file stochastically, according to the YouTube popularity distribution in \figurename~\ref{fig:youtube}.
The scheme we use here is a simple variation of the one for the worst-case setup: the files are split into two levels, and the placement is done using memory-sharing based on their average popularity.
The delivery phase is almost identical to the worst-case delivery, with the exception that, because of a lack of determinism in the user profile, not all caches will have the same number of users per level.
This is handled by simply trying to group as many users of the same level as possible in each broadcast transmission.

We ran simulations for this setup using the above strategy, and we plot here the empirically achieved rate against the cache memory in \figurename~\ref{fig:simulations}.
For comparison, we also show the rate predicted by our theoretical model, which splits the files into two levels and assumes a symmetric user profile across the caches.
Clearly, the theory very closely predicts the empirical results for a random user profile, thus demonstrating the robustness of our theoretical results to stochastic variations across APs and justifying their utility in practice.  

\begin{figure}
\centering
\includegraphics[width=\myplotwidth]{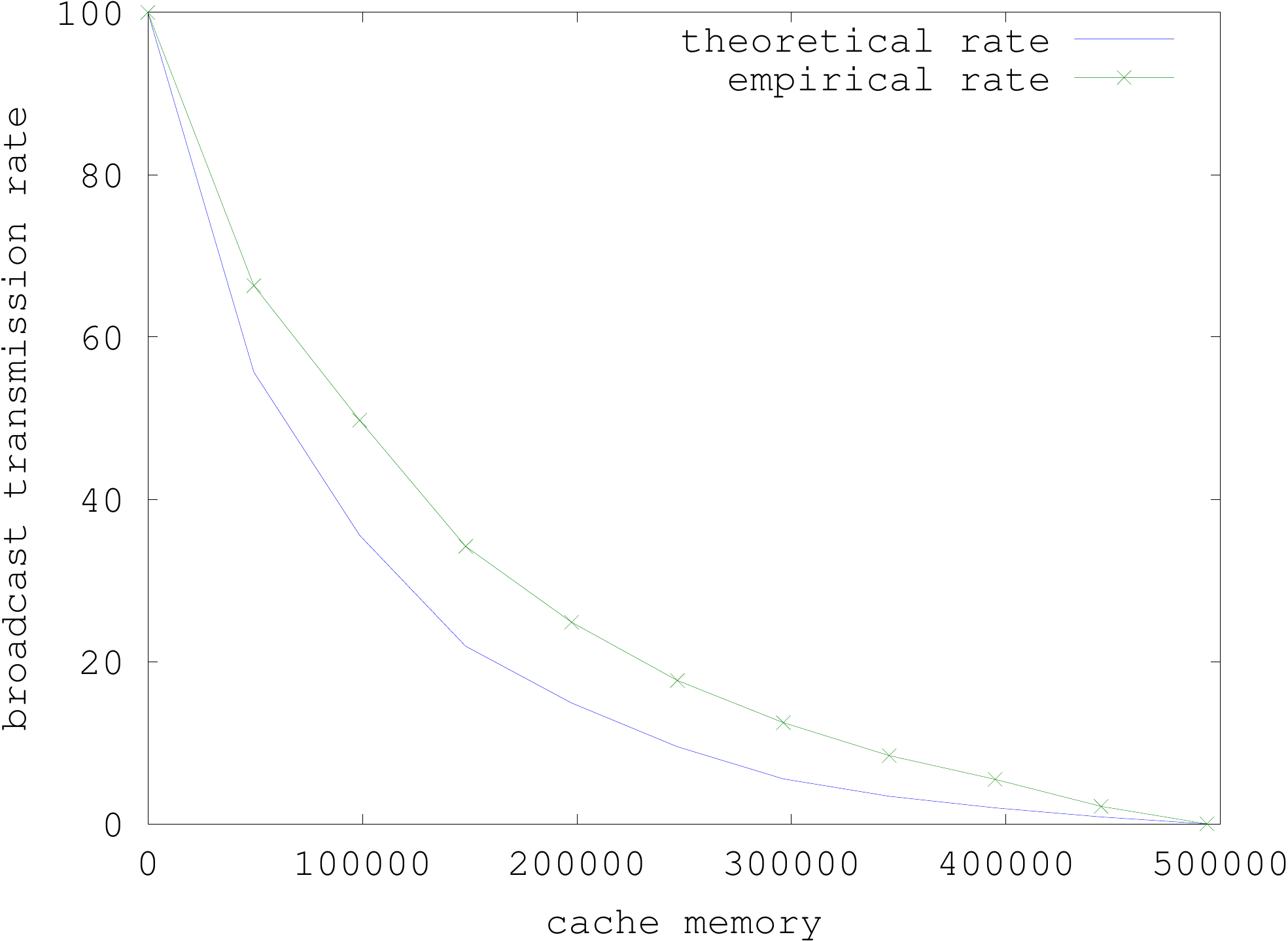}
\caption{Comparison of the theoretical rate with the empirical rate, based on simulations of demands over the YouTube dataset, with $5$ caches and $100$ total users.
The theoretical rate is off by a factor of up to $2.8$ from the empirical.}
\label{fig:simulations}
\end{figure}

\subsection{Comparison with Least-Frequently Used (LFU)}

In this section, we compare the performance of memory-sharing with that of the traditional LFU scheme using simulations on the YouTube data.
For any memory size $M$, LFU fully stores the $M$ most popular files, so that requests for more popular files are completely served from the cache, and requests for less popular files are fully handled by the BS transmission.
The results, given in \figurename~\ref{fig:lfu-comparison}, show the superiority of memory-sharing.

\begin{figure}
\centering
\includegraphics[width=\myplotwidth]{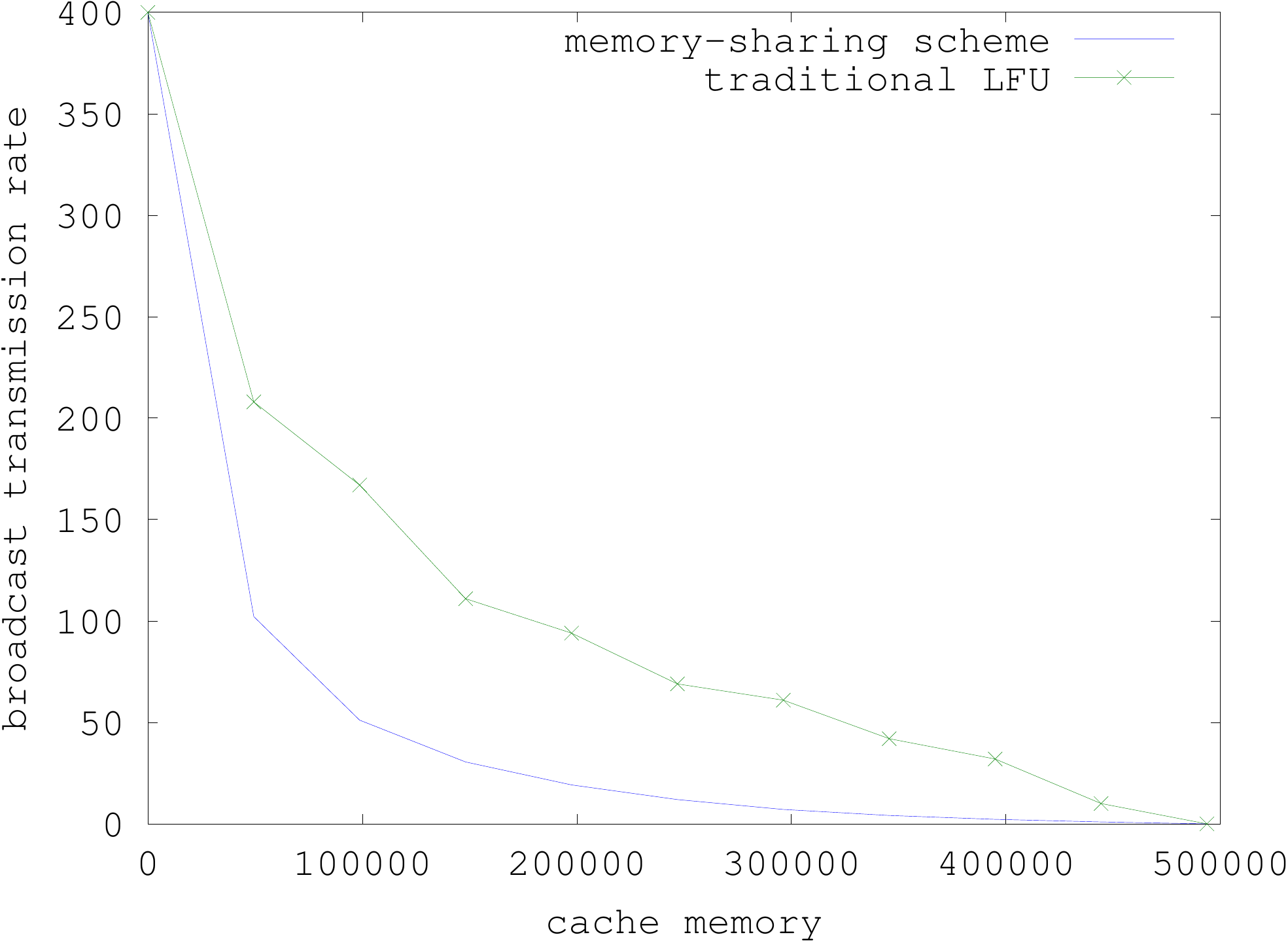}
\caption{Comparison of the memory-sharing scheme with traditional LFU.
The memory-sharing scheme achieves up to a factor-$14.5$ in gain over LFU.}
\label{fig:lfu-comparison}
\end{figure}

\subsection{Comparison with different memory-sharing strategies}

The memory-sharing scheme proposed for the multi-user setup relies on a very specific division of the memory among the levels.
A natural question that arises is if this memory-sharing is necessary.
In this section, we compare its behavior with that of different, perhaps more natural, memory-sharing schemes.
In particular, we compare with:
\begin{itemize}
\item \textbf{Coded LFU:}
This scheme gives as much memory as possible for the most popular levels.
We call it ``coded LFU'' because it allocates memory in a similar way to traditional LFU, but delivers a coded broadcast message to the users.
\item \textbf{Uniform memory-sharing:}
This scheme ignores all popularities and gives the same amount of memory to all files.
\end{itemize}

Our memory-sharing scheme---which we will call ``optimal memory-sharing'' in this section to distinguish it from the others---performs better than the other two.
We note, however, that coded LFU and uniform memory-sharing are both special cases of optimal memory-sharing.
Indeed, if all levels have exactly the same popularities, then optimal memory-sharing will give all files equal memory.
On the other hand, when the level popularities are separted enough, then optimal memory-sharing will end up always prioritizing the most popular levels, and thus it reduces to coded LFU.
In the middle case, it provides benefits that both other schemes lack.

As an example, consider the following setup.
There are $K=100$ caches, and $L=3$ levels with $(N_1,N_2,N_3)=(2000,5000,50\,000)$.
We have $(U_1,U_2,U_3)=(20,10,5)$, and single-access degree $(d_1,d_2,d_3)=(1,1,1)$.
Then, optimal memory-sharing performs more than $6$ times better than both other strategies.

\subsection{Numerical gap}\label{sec:numerics-gap}

As discussed in Section~\ref{sec:results}, the multiplicative gap in Theorem~\ref{thm:multi-user-gap} results from many generous approximations in bounding the achievable rate.
Numerical results suggest that this gap is in fact much smaller.
We here give a few examples of these results.
\begin{itemize}
\item
If $K=10$, $L=3$, $(N_1,N_2,N_3)=(500,1500,8000)$, $(U_1,U_2,U_3)=(9,5,1)$, and $(d_1,d_2,d_3)=(1,3,5)$, then we get a gap of approximately $6$.

\item
Increasing the number of caches and files: if $K=20$, $L=3$, $(N_1,N_2,N_3)=(200,20\,000,800\,000)$, $(U_1,U_2,U_3)=(10,5,1)$, and $(d_1,d_2,d_3)=(1,1,1)$, then we get a gap of approximately $6.8$.

\item
The same setup as the previous point, with access degrees of $(d_1,d_2,d_3)=(1,2,3)$ gives a gap of about $7.6$.
\end{itemize}

In most examples we have tried, the gap was in the range $5$-$10$, regardless of the access degree.
In the worst case, the largest gap our numerics have shown was about $45$, in a situation with $L=3$ and $D=5$.

\subsection{Relation of the single-user setup to the literature}
\label{sec:discussion-clustering}

As previously mentioned, the clustering scheme we proposed for the single-user setup is similar to strategies discussed in the literature for similar setups with stochastic demands \cite{ZhangArbitrary,Zcodedcaching}.
In these setups, users request files based on a probability distribution, and the average rate is analyzed (as opposed to the worst-case rate in our case).
In both cases, the proposed strategy is to divide the files into two sets, based on some threshold, and store only the most popular files.

Recall that our strategy clusters the levels $i$ whose popularity is such that $M\ge N_i/K_i$.
The popularity $p_i$ of a file of level $i$ is proportional to $K_i/N_i$, and hence if we normalize so that the sum of popularities is 1, we get:
\[
p_i = \frac{K_i/N_i}{\sum_{j=1}^L N_j\cdot K_j/N_j} = \frac{K_i}{KN_i}.
\]
Thus, the condition $M\ge N_i/K_i$ can be rewritten as $p_i\ge1/KM$.
This is exactly the threshold used in \cite{ZhangArbitrary} to determine which files to cluster and store in the caches, and which files to leave out of the caches.

\bibliographystyle{IEEEtran}
\bibliography{caching}

\appendices
\section{Proofs for the multi-user setup}
\label{app:multi-user}

\subsection{Proof of Lemma~\ref{lemma:multi-user-converse}}
\label{app:multi-user-converse}

\subsubsection{A small example for illustration}

Before we give the full proof of Lemma~\ref{lemma:multi-user-converse}, we will start with a simple example for illustration.
Consider a multi-level, multi-user caching system with $K=6$ caches and $L=3$ levels.
Suppose that $U_1=U_2=U_3=1$, and let $N_1$, $N_2$, and $N_3$ be some large numbers (their exact value is not important for this example).
Finally, assume a single-access structure for all users, i.e., $d_1=d_2=d_3=1$.

As we have discussed in Section~\ref{sec:multi-user-converse}, the lower bounds on the optimal rate that we wish to obtain are a sum of $L$ cut-set bounds, each pertaining to a single level.
A cut-set bound for level $i$ consists of a certain number of caches and broadcast messages, such that the users at these caches can use the broadcast to cooperatively decode a set of files from level~$i$.
For example, if we consider the $sU_i$ users of level $i$ connected to some $s$ caches, we can send $b$ broadcast messages, tailored for the correct request vectors, so that the users can collectively decode $\min\{sU_i\cdot b,N_i\}$ files of level $i$.
Furthermore, the values of $s$ and $b$ are usually chosen so that the final bound matches the achieved individual rate of level $i$.
Thus, they depend on value of the rate and hence on the memory available to the level.

In our example, suppose that the memory we expect each level to receive dictates the following number of caches to consider: a single cache for level $1$; two caches for level $2$; and three caches for level $3$.
We define $s_1=1$, $s_2=2$, and $s_3=3$ to be these numbers.
In order to group together the cut-set bounds of all the levels with their different numbers of caches, we resort to the sliding-window subset entropy inequality presented in Lemma~\ref{lemma:sliding-window}.

In this example, we will consider six separate broadcast messages.
To avoid burying the essence of the argument under technicalities, we omit the request vectors from the discussion of the example.
For simplicity, we will assume that the messages are chosen so that, any time we encounter a group of $p$ users and $q$ messages, the users are able to decode a total of $pq$ messages.
This relaxation will be discarded when we discuss the general proof.

Let the caches be labeled with $Z_1,\ldots,Z_6$ and the broadcast messages with $X_1,\ldots,X_6$.
We start with the level with the smallest $s_i$, in this case level $1$.
Since $s_1=1$, this means its cut-set bound will consider a single cache.
In fact, the cut-set bound should have the following form by Fano's inequality:
\[
H(Z_1,X_1) \ge H(Z_1,X_1|\mathcal{W}^1) + 1\cdot F,
\]
where $\mathcal{W}^1$ represents the set of files of level $1$ that got decoded.
In this case, the single level-$1$ user at cache $Z_1$ can only decode one file when given just $X_1$, hence the $1\cdot F$ term.
However, to take full advantage of the sliding-window entropy inequality, we will take the average over six cut-set bounds for the same level, one for each cache:
\begin{IEEEeqnarray*}{rCl}
RF+MF
&\ge& \frac16\left[ H(Z_1,X_1)+\cdots+H(Z_6,X_6) \right]\\
&\ge& \frac16\left[ H(Z_1,X_1|\mathcal{W}^1)+\cdots+H(Z_6,X_6|\mathcal{W}^1) \right]\\
&&{}+ 1\cdot F.
\end{IEEEeqnarray*}

After completing the cut-set bound for level $1$, we now attempt to transition to the cut-set bound for level $2$.
Because we have taking the average of six instances of cut-set bounds, we can use Lemma~\ref{lemma:sliding-window} to obtain $6$ new cut-set bounds with $s_2=2$ caches each (for ease of notation, we write $Y_k=(Z_k,X_k)$):
\begin{IEEEeqnarray*}{rCl}
RF+MF
&\ge& \frac16\left[ H(Y_1|\mathcal{W}^1)+H(Y_2|\mathcal{W}^1)+H(Y_3|\mathcal{W}^1) \right.\\
&& \left.{} +H(Y_4|\mathcal{W}^1)+H(Y_5|\mathcal{W}^1)+H(Y_6|\mathcal{W}^1) \right]\\
&& {} + 1\cdot F\\
&\overset{(a)}{\ge}& \frac16\cdot\frac12\left[ H(Y_1,Y_2|\mathcal{W}^1)+H(Y_2,Y_3|\mathcal{W}^1) \right.\\
&& {} + H(Y_3,Y_4|\mathcal{W}^1) + H(Y_4,Y_5|\mathcal{W}^1)\\
&& \left. {} + H(Y_5,Y_6|\mathcal{W}^1) + H(Y_6,Y_1|\mathcal{W}^1) \right]\\
&& {} + 1\cdot F\\
&\overset{(b)}{\ge}& \frac16\cdot\frac12\left[ H(Y_1,Y_2|\mathcal{W}^1,\mathcal{W}^2)
+H(Y_2,Y_3|\mathcal{W}^1,\mathcal{W}^2) \right.\\
&& {} + H(Y_3,Y_4|\mathcal{W}^1,\mathcal{W}^2) + H(Y_4,Y_5|\mathcal{W}^1,\mathcal{W}^2)\\
&& \left. {} + H(Y_5,Y_6|\mathcal{W}^1,\mathcal{W}^2) + H(Y_6,Y_1|\mathcal{W}^1,\mathcal{W}^2) \right]\\
&& {} + 1\cdot F + 4\cdot F.
\end{IEEEeqnarray*}
Here, inequality $(a)$ uses Lemma~\ref{lemma:sliding-window}, while inequality $(b)$ uses Fano's inequality on the level-$2$ cut-set bounds.
Since each bound involves two users and two broadcast messages, the total number of files decoded is $4$, hence the $4\cdot F$ term.
The set of decoded level-$2$ files is denoted by $\mathcal{W}^2$.

We proceed again with the transition from level $2$ to level $3$.
Just like before, we first apply Lemma~\ref{lemma:sliding-window} to obtain entropy terms with the correct number of caches $s_3$, and then apply Fano's inequality to decode files from level $3$, labeled $\mathcal{W}^3$.
\begin{IEEEeqnarray*}{rCl}
RF+MF
&\ge& \frac16\cdot\frac12\left[ H(Y_1,Y_2|\mathcal{W}^1,\mathcal{W}^2)
+H(Y_2,Y_3|\mathcal{W}^1,\mathcal{W}^2) \right.\\
&& {} + H(Y_3,Y_4|\mathcal{W}^1,\mathcal{W}^2) + H(Y_4,Y_5|\mathcal{W}^1,\mathcal{W}^2)\\
&& \left. {} + H(Y_5,Y_6|\mathcal{W}^1,\mathcal{W}^2) + H(Y_6,Y_1|\mathcal{W}^1,\mathcal{W}^2) \right]\\
&& {} + 1\cdot F + 4\cdot F\\
&\overset{(a)}{\ge}& \frac16\cdot\frac13
\left[
H(Y_1,Y_2,Y_3|\mathcal{W}^1,\mathcal{W}^2) \right.\\
&&{} + H(Y_2,Y_3,Y_4|\mathcal{W}^1,\mathcal{W}^2)\\
&&{} + \cdots\\
&& \left. {} + H(Y_6,Y_1,Y_2|\mathcal{W}^1,\mathcal{W}^2)
\right]\\
&&{} + 1\cdot F + 4\cdot F\\
&\overset{(b)}{\ge}& \frac16\cdot\frac13
\left[
H(Y_1,Y_2,Y_3|\mathcal{W}^1,\mathcal{W}^2,\mathcal{W}^3) \right.\\
&&{} + H(Y_2,Y_3,Y_4|\mathcal{W}^1,\mathcal{W}^2,\mathcal{W}^3)\\
&&{} + \cdots\\
&& \left. {} + H(Y_6,Y_1,Y_2|\mathcal{W}^1,\mathcal{W}^2,\mathcal{W}^3)
\right]\\
&&{} + 1\cdot F + 4\cdot F + 9\cdot F.
\end{IEEEeqnarray*}
Again, inequality $(a)$ uses Lemma~\ref{lemma:sliding-window}, and inequality $(b)$ uses Fano's inequality on level $3$, which is considering groups of $3$ users and $3$ broadcast messages, allowing the decoding of a total of $9$ files.

Finally, by the non-negativity of entropy, the lower bound becomes:
\begin{IEEEeqnarray*}{rCl}
RF+MF &\ge& (1+4+9)\cdot F\\
R+M &\ge& 14.
\end{IEEEeqnarray*}

\subsubsection{The general proof}

The process involved in the general case is similar to the one shown in the example above.
We start with the level with the smallest $s_i$ and write an average of $K$ cut-set bounds for this level, one for each sequence of consecutive $s_i$ caches.
After applying Fano's inequality, we use Lemma~\ref{lemma:sliding-window} to transition to the level with the next smallest $s_i$.
In this discussion, the popularity of the levels is not important, but the order of their $s_i$'s is.
Specifically, if $s_i<s_j$, it does not matter which of $i$ and $j$ is more popular.
Thus, we can assume without loss of generality that $s_1\le\cdots\le s_L$.

In the example, we started with cut-set bounds with a single cache each.
For technical reasons, this will not give us good enough bounds in general, and so the initial cut-set bounds will consist of $t$ caches, for a general $t\in\{1,\ldots,K\}$.
In fact, every consecutive $t$ caches will be clustered into an inseperable group.
The group consisting of the $t$ caches that start with cache $k$ is labeled as:
\[
\mathcal{Z}^t_k = \left(
Z_k, \ldots, Z_{\langle k+t-1\rangle}
\right),
\]
where $\langle m\rangle$ is defined for integers $m$ as in Lemma~\ref{lemma:sliding-window}, i.e., $\langle m\rangle=m$ if $m\le K$ and $\langle m\rangle=m-K$ if $m>K$.
To every cache group $\mathcal{Z}^t_k$, we associate a broadcast-message group $\mathcal{X}^b_k$, which consists of $b$ messages serving different user demands.

Recall that a level-$i$ user needs to connect to $d_i$ \emph{consecutive} caches in order to decode whichever file he has requested.
Applying Lemma~\ref{lemma:sliding-window} should hence keep only consecutive caches in the same cut-set bounds, which will allow for the maximum number of users to be active in the decoding of the files (and thus produce a larger number of files from the same cut-set bound).
While this was fairly simple to ensure in the example above, we must show that we can still do it even after introducing the $t$-groups of caches.
This is done in the next paragraph.

Let $g$ be the GCD of $t$ and $K$.
Then, the following sequence:
\[
\left( \mathcal{Z}^t_1, \mathcal{Z}^t_{\langle t+1\rangle}, \ldots, \mathcal{Z}^t_{\langle (K/g)t+1\rangle} \right),
\]
starts at cache $1$ and ends at cache $K$.
Each cache $Z_k$ appears exactly $(t/g)$ times in the sequence.
Furthermore, every pair of caches that are consecutive in the sequence are also consecutive in the system.
For example, suppose $t=4$ and $K=6$.
Then their GCD would be $g=2$, with $K/g=3$, and the sequence would be:
\[
\Bigl(
(Z_1,Z_2,Z_3,Z_4),
(Z_5,Z_6,Z_1,Z_2),
(Z_3,Z_4,Z_5,Z_6)
\Bigr).
\]
Notice that every cache appears in the sequence $t/g=2$ times, and that consecutive caches remain so in the sequence.

We are now ready to prove the lower bounds.
Start with an average of $K/g$ cut-set bounds consisting of one group of $t$ caches (and their associated broadcast messages) each:
\[
bRF+tMF
\ge \frac{g}{K} \sum_{k=1}^{K/g} H\left( \mathcal{Z}^t_{(k-1)g+1}, \mathcal{X}^b_{(k-1)g+1} \right).
\]
This follows from $bR+tM\ge H(\mathcal{Z}^t_k,\mathcal{X}^b_k)$ for all $k$.
Notice how all the indices $[(k-1)g+1]$ are the same modulo $g$.
We can include all other caches to get:
\[
bRF+tMF
\ge \frac{1}{K} \sum_{k=1}^K H\left( \mathcal{Z}^t_k, \mathcal{X}^b_k \right).
\]

The first step is to use Lemma~\ref{lemma:sliding-window} to obtain cut-set bounds with $s_1t$ caches for level $1$.
\begin{IEEEeqnarray*}{rCll}
bRF+tMF
&\ge& \IEEEeqnarraymulticol{2}{l}{
  \frac{1}{K} \sum_{k=1}^1 H\left( \mathcal{Z}^t_k, \mathcal{X}^b_k \right)
}\\
&\ge& \frac{1}{K} \cdot \frac{1}{s_1}
\sum_{k=1}^K & H\left( \mathcal{Z}^t_k,\mathcal{Z}^t_{\langle k+t\rangle},\ldots,\mathcal{Z}^t_{\langle k+(s_1-1)t\rangle}, \right.\\
&& &\qquad\left. \mathcal{X}^b_k,\mathcal{X}^b_{\langle k+t\rangle},\ldots,\mathcal{X}^b_{\langle k+(s_1-1)t\rangle} \right)
\end{IEEEeqnarray*}
Each entropy term in the sum now consists of $s_1t$ \emph{consecutive} caches and $s_1b$ broadcast messages.
For simplicity, we write it as $\hat H_k(s_1t,s_1b)$, for $k\in\{1,\ldots,K\}$, and its conditional version as $\hat H_k(s_1t,s_1b|Q)$ for any random variable $Q$.
Let $\mathcal{W}^1$ denote the set of level-$1$ files that can be decoded in every cut-set bound in the sum, and let $p_1$ be its size, to be determined later.
For technical reasons, we will only apply Fano's inequality on a fraction of the cut-set bounds; the rest are simply lower-bounded by conditioning.
The meaning and value of this fraction is discussed below, but for now we will call it $\lambda_1$.
Then, we use Fano's inequality to get:
\begin{IEEEeqnarray*}{rCl}
bRF+tMF
&\ge& \frac{1}{K}\cdot\frac{1}{s_1}\sum_{k=1}^K \hat H_k(s_1t,s_1b)\\
&\ge& \frac{1}{K}\cdot\frac{1}{s_1}\sum_{k=1}^K \hat H_k(s_1t,s_1b|\mathcal{W}^1) + \frac{\lambda_1p_1}{s_1}\cdot F.
\end{IEEEeqnarray*}
By applying Lemma~\ref{lemma:sliding-window} again, we obtain cut-set bounds pertaining to level $2$.
We lower-bound a fraction $\lambda_2$ of them in turn using Fano's inequality, and repeat for all levels.
Thus, we have:
\begin{IEEEeqnarray*}{rCl}
bRF+tMF
&\ge& \frac{1}{K}\cdot\frac{1}{s_1}\sum_{k=1}^K \hat H_k(s_1t,s_1b|\mathcal{W}^1) + \frac{\lambda_1p_1}{s_1}\cdot F\\
&\overset{(a)}{\ge}& \frac{1}{K} \cdot \frac{1}{s_2} \sum_{k=1}^K \hat H_k(s_2t,s_2b|\mathcal{W}^1) + \frac{\lambda_1p_1}{s_1}\cdot F\\
&\overset{(b)}{\ge}& \frac{1}{K} \cdot \frac{1}{s_2} \sum_{k=1}^K \hat H_k(s_2t,s_2b|\mathcal{W}^1,\mathcal{W}^2)\\
&& {} + \frac{\lambda_1p_1}{s_1}\cdot F + \frac{\lambda_2p_2}{s_2}\cdot F\\
&\ge& \cdots\\
&\ge& \sum_{i=1}^L \frac{\lambda_ip_i}{s_i}\cdot F.
\IEEEyesnumber\label{eq:mu-lower-bounds-steps}
\end{IEEEeqnarray*}
where the inequality marked with $(a)$ uses Lemma~\ref{lemma:sliding-window} and the one marked with $(b)$ uses Fano's inequality.

For any $m<K$ \emph{consecutive} caches, the number of level-$i$ users that are connected to $d_i$ of those caches is exactly $(m-d_i+1)U_i$.
Therefore, given $s_it<K$ consecutive caches and $s_ib$ broadcast messages, the users at these caches should, in principle, decode up to $p_i=\min\left\{ (s_it-d_i+1)U_i\cdot s_ib, N_i \right\}$ files from level $i$.
This requires choosing the broadcast messages for the correct user demands.
However, consider a pair of broadcast message $X'$ and user $u$ that appears in multiple cut-set bounds.
In other words, the user $u$ is expected to decode a file using the message $X'$ on multiple occasions.
We must ensure that there are no contradictions, i.e., that the message $X'$ always delivers the same file to user $u$.

It turns out it is not always possible to do that and still be able to decode the maximal number of files at every cut-set bound.
However, because $s_it\le K/2$ (as constrained in the statement of Lemma~\ref{lemma:multi-user-converse}), we can get around this issue by ``ignoring'' half of the cut-set bounds.
To understand this, suppose we wish to determine the user demand vector of each broadcast message to be sent.
Starting with $\hat H_1(s_it,s_ib)$: there are a total of $(s_it-d_i+1)U_i$ relevant users, and $s_ib$ broadcasts.
We can design these broadcasts to allow these users to decode all $p_i$ files.
Next, we move on to $\hat H_2(s_it,s_ib)$.
In this entropy term, exactly $tU_i$ users from the previous term are replaced by $tU_i$ \emph{brand new} users; the other users are still the same.
Thus, if the broadcast messages serve the same files to the new users as they did to the old, the same total number of files can be decoded.
This can go on as long as every step introduces new $tU_i$ users, which is true for all but the last $(s_it-d_i)$ steps.
Indeed, the term $\hat H_{K-(s_it-d_i)+1}(s_it,s_ib)$ re-introduces users that previously appeared in $\hat H_1(s_it,s_ib)$.
Thus, for every level $i$, only a fraction $\frac{K-(s_it-d_i)}{K}$ of the cut-set bounds can decode all $p_i$ files using Fano's inequality.
We bound this fraction by:
\[
\frac{K-(s_it-d_i)}{K}
\ge \begin{cases}
\frac12 & \text{if $s_it>d_i$};\\
1 & \text{if $s_it=d_i$},
\end{cases}
\]
and we define $\lambda_i$ to be the right-hand side of the inequality.

By substituing $p_i$ for its value in \eqref{eq:mu-lower-bounds-steps}, we get the final bounds:
\begin{IEEEeqnarray*}{rCl}
bRF+tMF
&\ge& \sum_{i=1}^L \frac{\lambda_i}{s_i} \cdot \min\left\{ (s_it-d_i+1)U_i \cdot s_ib, N_i \right\} \cdot F\\
R &\ge& \sum_{i=1}^L \lambda_i \cdot \min\left\{ (s_it-d_i+1)U_i, \frac{N_i}{s_ib} \right\} - \frac{t}{b}M,
\end{IEEEeqnarray*}
which concludes the proof of Lemma~\ref{lemma:multi-user-converse}.

\subsection{Proof of approximate optimality (Theorem~\ref{thm:multi-user-gap})}
\label{app:multi-user-gap}

In this section, we will use the information-theoretic lower bounds determined in Lemma~\ref{lemma:multi-user-converse} to give an upper bound on the ratio between the rate achieved by the memory-sharing scheme and the optimal rate (the ``gap'').
In order to do that, we must consider a few different cases, in each of which different values are chosen for the parameters $b$, $t$, and $\{s_i\}_i$ defined in Lemma~\ref{lemma:multi-user-converse}.
When these parameters are chosen, we will need to evaluate the minimization seen in the expression of the lower bound, for every level.
For simplicity, we define, for each level $i$:
\begin{equation}
\label{eq:ai}
A_i = \min\left\{ (s_it-d_i+1)U_i \,,\, \frac{N_i}{s_ib} \right\}.
\end{equation}
To find or bound the value of $A_i$, we must evaluate the following comparision:
\begin{equation}
\label{eq:comparison}
bs_i(s_it-d_i+1) \overset{?}{\lessgtr} \frac{N_i}{U_i}.
\end{equation}

Before the main analysis, we consider the case where the number of caches is bounded.
Specifically, we consider $K < k_0 = D/\beta$, and we call this ``Case 0''.
Then, we consider the more interesting case where $K$ is unbounded, and divide that into two main regimes.
The first, ``Case 1'', is when the set $I_1$, defined in Definition~\ref{def:m-feasible} and \eqref{eq:refined-partition}, is empty; the second, ``Case 2'', is when it is not empty.

For convenience, we will assume, without loss of generality, that the levels are numbered from most popular to least popular.
In other words:
\begin{equation}
\label{eq:level-ordering}
U_1/N_1 \ge \cdots \ge U_L/N_L.
\end{equation}

\subsubsection{Case 0: $K < k_0 = D/\beta$}

Recall the refined $M$-feasible partition described in Definition~\ref{def:m-feasible} and \eqref{eq:refined-partition}.
By \eqref{eq:alpha-i}, we know that the memory given to a level $i_0\in I_0$ is at most:
\[
\alpha_{i_0}M < (2/K)\sqrt{N_{i_0}/U_{i_0}}.
\]
However, if $K < D/\beta$, then, for any level $i\in I'$:
\begin{IEEEeqnarray*}{rCl}
\alpha_iM
&\le& \left( \frac1K + \frac{\beta}{d_i} \right) \sqrt{\frac{N_i}{U_i}}\\
&=& \frac1K \left( 1 + \frac{\beta K}{d_i} \right) \sqrt{\frac{N_i}{U_i}}\\
&<& \frac1K \left( 1 + \frac{D}{d_i} \right) \sqrt{\frac{N_i}{U_i}}\\
&\le& \frac{D+1}{K} \sqrt{\frac{N_i}{U_i}}\\
&=& \frac{D+1}{2} \cdot \frac{2}{K}\sqrt{\frac{N_i}{U_i}}.
\end{IEEEeqnarray*}
This suggests that the memory given to a level in $I'$ will not be much larger than that given to a level in $I_0$.
As a result, levels in $I'$ are expected to behave similarly to those in $I_0$: they receive so little memory that their impact on the overall transmission rate is not much different from that of the levels in $H$, which get zero memory.
The effective result is that the subset $I_1$ ``dominates'' the set $I$.

With these observations in mind, we next describe a near-equivalent formulation of the achievability scheme, which is more suitable for the case $K<D/\beta$.
This formulation will emphasize the fact that $I_1$ is the dominant subset of $I$, by essentially reducing $I$ to $I_1$ and relegating $I_0$ and $I'$ to $H$.

A more precise explanation follows.
Find the (unique) level $i^\ast$ such that:
\[
\sum_{i=1}^{i^\ast-1} \frac{N_i}{d_i}
\le M
\le \sum_{i=1}^{i^\ast} \frac{N_i}{d_i}.
\]
Recall that the levels are numbered from most popular to least popular, as seen in \eqref{eq:level-ordering}.

Partition the set of levels into three sets $(H,I,J)$, which will serve the same purpose as the partition described in Definition~\ref{def:m-feasible}.
We set $H=\{i^\ast+1,\ldots,L\}$, $I=\{i^\ast\}$, and $J=\{1,\ldots,i^\ast-1\}$.
We then proceed as usual: the levels in $J$ are fully stored in the caches; the levels in $H$ are given no memory; and the single level in $I$, $i^\ast$, is given the remaining memory $M-T_J$.
The conventional scheme is applied to each level and its corresponding memory, resulting in the following transmission rate:
\begin{IEEEeqnarray*}{rCl}
R
&=& \sum_{h\in H} KU_h
+ KU_{i^\ast} \left( 1 - \frac{M-T_J}{N_{i^\ast}/d_{i^\ast}} \right)\\
&\le& k_0 \left[
\sum_{h\in H} U_h
+ U_{i^\ast} \left( 1 - \frac{M-T_J}{N_{i^\ast}/d_{i^\ast}} \right)
\right]. \IEEEyesnumber \label{eq:mu0-ach}
\end{IEEEeqnarray*}

For the lower bounds, consider Lemma~\ref{lemma:multi-user-converse}, with $t=1$, $s_i=d_i$ for all $i$, and $b=\ceil{N_{i^\ast}/d_{i^\ast}U_{i^\ast}}$.
Thus $\lambda_i=1$ for all $i$.
We will first analyze the comparisons in \eqref{eq:comparison} for every level.

For $i^\ast$, we have:
\[
bs_{i^\ast}(s_{i^\ast}t-d_{i^\ast}+1)
= bd_{i^\ast}
\ge \frac{N_{i^\ast}}{U_{i^\ast}}.
\]

For $j\in J$, we have:
\[
bs_j(s_jt-d_j+1)
= bd_j
\ge \frac{N_{i^\ast}}{U_{i^\ast}} \cdot \frac{d_j}{d_{i^\ast}}
\ge \frac{N_{i^\ast}}{U_{i^\ast}D}
\ge \frac{N_j}{U_j},
\]
by regularity condition \eqref{eq:mu-reg2}.

Finally, for $h\in H$, we have:
\[
bs_h(s_h-d_h+1)
= bd_h
\le 2\frac{N_{i^\ast}}{U_{i^\ast}} \cdot \frac{d_h}{d_{i^\ast}}
\le \frac{N_h}{U_h},
\]
again by the regularity condition \eqref{eq:mu-reg2}.

Putting these together, we determine the value of $A_i$, defined in \eqref{eq:ai}, for each $i$ and get the following lower bound on the optimal rate:
\begin{IEEEeqnarray*}{rCl}
R^\ast(M)
&\ge& \sum_{h\in H} U_h
+ \frac{N_{i^\ast}}{d_{i^\ast}b}
+ \sum_{j\in J} \frac{N_j}{d_jb}
- \frac{M}{b}\\
&=& \sum_{h\in H} U_h
+ \frac{N_{i^\ast}/d_{i^\ast} - (M-T_J)}{b}\\
&\ge& \sum_{h\in H} U_h
+ \frac{N_{i^\ast}/d_{i^\ast}
- (M-T_J)}{2N_{i^\ast}/d_{i^\ast}U_{i^\ast}}\\
&\ge& \frac12\left[
  \sum_{h\in H} U_h
+ U_{i^\ast} \left(
1 - \frac{M-T_J}{N_{i^\ast}/d_{i^\ast}}
\right)
\right]. \IEEEyesnumber \label{eq:mu0-conv}
\end{IEEEeqnarray*}

By combining \eqref{eq:mu0-ach} with \eqref{eq:mu0-conv}, we get the following bound on the gap:
\begin{equation}
\label{eq:mu0-gap}
\frac{R(M)}{R^\ast(M)} \le 2k_0 = 2D/\beta = 396 D.
\end{equation}

\subsubsection{Case 1: $I_1=\emptyset$}

When $I_1=\emptyset$, the rate achieved by the memory-sharing scheme is, according to Lemma~\ref{lemma:multi-user-achievability}:
\begin{equation}
\label{eq:mu1-ach}
R(M)
\le \sum_{h\in H} KU_h
+ \frac{2S_I^2}{M-T_J+V_I}.
\end{equation}

For technical reasons, the lower bounds analysis for this case has to be broken down into two subcases, depending on whether or not the set $J$ is empty.

Since we will be dealing with many floors and ceilings, here are a few remarks on these operations.
\begin{itemize}
\item
If $n\ge1$ is an integer, then $x\ge n \implies \floor{x}\ge n$;
\item
For all $x>0$, $\floor{x}\ge x-1$ and $\ceil{x}\le x+1$;
\item
If $x\ge1$, then $\floor{x}\ge x/2$ and $\ceil{x}\le 2x$;
\item
If $a\le x\le b$, then $\floor{a}\le\floor{x}\le\floor{b}$ and $\ceil{a}\le\ceil{x}\le\ceil{b}$.
\end{itemize}

\paragraph{Case 1a: $J\not=\emptyset$}

Recall that $\tilde M=(M-T_J+V_I)/S_I$.
Consider Lemma~\ref{lemma:multi-user-converse}, with the following parameters:
\begin{IEEEeqnarray*}{lCl"rCl}
&& & t &=& 1;\\
\forall h&\in&H, &
  s_h &=& \floor{\frac18 K};\\
\forall i&\in&I, &
  s_i &=& \floor{\frac{\sqrt{N_i/U_i}}{8\tilde M}};\\
\forall j&\in&J, &
  s_j &=& d_j;\\
&& & b &=& \floor{\delta \tilde M^2},
\end{IEEEeqnarray*}
where $\delta = D/\beta$.
Notice that $\lambda_j=1$ for all $j\in J$, and $\lambda_i\ge\frac12$ for all other levels $i$.

The first step is to verify that these parameters satisfy all of the constraints imposed on them in Lemma~\ref{lemma:multi-user-converse}.
For ease of reference, we repeat the constraints here:
\begin{itemize}
\item $t\in\{1,\ldots,K\}$;
\item $s_i\in\mathbb{N}^+$ such that $s_it\in\{d_i,\ldots,\floor{K/2}\}$, for any level~$i$;
\item $b\in\mathbb{N}^+$.
\end{itemize}
The parameters $t$ and $s_j$, $j\in J$, can be easily seen to satisfy theirs.
Also, it follows from Definition~\ref{def:m-feasible} and from regularity condition \eqref{eq:mu-reg1} that, for some $j\in J$:
\begin{IEEEeqnarray*}{rCl}
\delta \tilde M^2
&\ge&
  \delta
  \cdot \left( \frac{1}{d_j}+\frac1K \right)^2
  \frac{N_j}{U_j}
\ge
  D/\beta 
  \cdot \left( \frac{1}{d_j}+\frac1K \right)^2
  \cdot K\\
&\ge&
  k_0
  \cdot \left( \frac{1}{D} \right)^2
  \cdot k_0
= \left( \frac{k_0}{D} \right)^2
\ge 1,
\end{IEEEeqnarray*}
and therefore $b\ge1$.

As for $s_h$, $h\in H$: we have $s_h\le \floor{K/2}$ trivially, and $s_h\ge d_h$ is true because $K\ge k_0=D/\beta\ge 8d_h$.

Finally, consider $i\in I$.
First, recall that $I_1=\emptyset$, and therefore $i\notin I_1$.
Then, using Definition~\ref{def:m-feasible}, we have:
\begin{IEEEeqnarray*}{rCl}
\frac{\sqrt{N_i/U_i}}{8\tilde M}
&\ge& \frac18\cdot\frac{1}{\frac{\beta}{d_i}+\frac1K}
= \frac{1}{8(\beta+d_i/K)}\cdot d_i\\
&\ge& \frac{1}{16\beta}\cdot d_i
\ge d_i,
\end{IEEEeqnarray*}
which implies $s_i\ge d_i$.
Also, we have:
\begin{IEEEeqnarray*}{rCl}
s_i
&=& \floor{ \frac{\sqrt{N_i/U_i}}{8\tilde M} }
\le \floor{ \frac18\cdot\frac{1}{\frac1K} }
\le \floor{K/2}.
\end{IEEEeqnarray*}
Thus all the parameters satisfy their constraints.

Next, we will compute bounds on the $A_i$ terms by evaluating the comparison \eqref{eq:comparison} for all levels $i$.

For $h\in H$, the comparison \eqref{eq:comparison} gives:
\[
bs_h(s_ht-dh+1)
\le bs_h^2t
\le \delta\tilde M^2 \cdot \frac{1}{64}K^2
\le \frac{\delta}{64}\cdot\frac{N_h}{U_h}.
\]
Thus:
\begin{IEEEeqnarray*}{rCl}
A_h
&\ge& \min\left\{ 1 , \frac{64}{\delta} \right\}
  \cdot (s_ht-d_h+1)U_h\\
&=& \min\left\{ 1 , \frac{64\beta}{D} \right\}
  \cdot (s_ht-d_h+1)U_h\\
&\ge& \frac{64\beta}{D}
  \cdot \left(\frac18 K - d_h\right) U_h\\
&\ge& \frac{64\beta}{D}
  \cdot \left( \frac18 - \beta \right)
  \cdot KU_h\\
&\ge& \frac{8\beta}{D}
  \cdot \left( 1 - 8\beta \right)
  \cdot KU_h.
\end{IEEEeqnarray*}

Consider now $i\in I$.
The comparison \eqref{eq:comparison} gives:
\[
bs_i(s_it-d_i+1)
\le bs_i^2t
\le \delta\tilde M^2 \cdot \frac{1}{64}\frac{N_i/U_i}{\tilde M^2}
= \frac{\delta}{64}\cdot\frac{N_i}{U_i}.
\]
Therefore,
\begin{IEEEeqnarray*}{rCl}
A_i
&\ge& \min\left\{ 1 , \frac{64}{\delta} \right\}
  \cdot (s_it-d_i+1)U_i\\
&=& \frac{64\beta}{D}
  \cdot (s_it-d_i+1)U_i\\
&\ge& \frac{64\beta}{D}
  \cdot \left( \frac{\sqrt{N_i/U_i}}{8\tilde M} - d_i \right) U_i\\
&=& \frac{64\beta}{D}
  \cdot \left( \frac18
    - d_i\cdot\frac{\tilde M}{\sqrt{N_i/U_i}} \right)
  \cdot \frac{\sqrt{N_iU_i}}{\tilde M}\\
&\ge& \frac{64\beta}{D}
  \cdot \left( \frac18
    - \left( \beta+\frac{d_i}{K} \right) \right)
  \cdot \frac{\sqrt{N_iU_i}}{\tilde M}\\
&\ge& \frac{64\beta}{D}
  \cdot \left( \frac18 - 2\beta \right)
  \cdot \frac{\sqrt{N_iU_i}}{\tilde M}\\
&\ge& \frac{8\beta}{D}
  \cdot \left( 1 - 16\beta \right)
  \cdot \frac{\sqrt{N_iU_i}}{\tilde M}.
\end{IEEEeqnarray*}

Finally, let us look at $j\in J$.
The comparison in \eqref{eq:comparison} gives:
\begin{IEEEeqnarray*}{rCl}
bs_j(s_jt-d_j+1)
&=& bd_j\\
&\ge& \frac12 \delta\tilde M^2 \cdot d_j\\
&\ge& \frac12 (D/\beta) \left( \frac{1}{d_j}+\frac1K \right)^2\frac{N_j}{U_j} d_j\\
&\ge& \frac{1}{2\beta}d_j^2\cdot \left( \frac{1}{d_j} \right)^2 \cdot \frac{N_j}{U_j}\\
&\ge& \frac{N_j}{U_j}.
\end{IEEEeqnarray*}
Therefore:
\[
A_j = \frac{N_j}{d_jb}.
\]

We can now use the values of $\{A_i\}_i$ to lower-bound the optimal rate as in Lemma~\ref{lemma:multi-user-converse}.
Recall also the values of $\lambda_i$ mentioned at the beginning of this section.
\begin{IEEEeqnarray*}{rCl}
R^\ast(M)
&\ge&
  \sum_{h\in H}
    \frac12 \cdot \frac{8\beta}{D}
	\cdot (1-8\beta)
	\cdot KU_h\\
&&{} +
  \sum_{i\in I}
    \frac12 \cdot \frac{8\beta}{D}
	\cdot (1-16\beta)
	\cdot \frac{\sqrt{N_iU_i}}{\tilde M}\\
&&{} +
  \sum_{j\in J}
    \frac{N_j}{d_jb}
  - \frac{M}{b}\\
&=&
  \sum_{h\in H}
    \frac{4\beta}{D}
	\cdot (1-8\beta)
	\cdot KU_h\\
&&{} +
    \frac{4\beta}{D}
	\cdot (1-16\beta)
	\cdot \frac{S_I\cdot\sum_{i\in I}\sqrt{N_iU_i}}{M-T_J+V_I}\\
&&{} -
  \frac{M-T_J}{b}\\
&\ge&
  \sum_{h\in H}
    \frac{4\beta}{D}
	\cdot (1-8\beta)
	\cdot KU_h\\
&&{} +
    \frac{4\beta}{D}
	\cdot (1-16\beta)
	\cdot \frac{S_I^2}{M-T_J+V_I}\\
&&{} -
  \frac{M-T_J+V_I}{\frac12\delta(M-T_J+V_I)^2/S_I^2}\\
&=&
  \sum_{h\in H}
    \frac{4\beta}{D}
	\cdot (1-8\beta)
	\cdot KU_h\\
&&{} +
    \left[
	  \frac{4\beta}{D}
	  \cdot (1-16\beta)
	  - \frac{2}{\delta}
	\right]
    \cdot \frac{S_I^2}{M-T_J+V_I}\\
&=&
  \sum_{h\in H}
    \frac{\beta}{D}
	\cdot (4-32\beta)
	\cdot KU_h\\
&&{} +
    \frac{\beta}{D}
	\left(1-32\beta\right)
    \cdot \frac{2S_I^2}{M-T_J+V_I}.
\IEEEyesnumber\label{eq:mu1a-conv}
\end{IEEEeqnarray*}

Combining \eqref{eq:mu1a-conv} with \eqref{eq:mu1-ach}, we get:
\begin{equation}
\label{eq:mu1a-gap}
\frac{R(M)}{R^\ast(M)}
\le
\frac{D/\beta}{1-32\beta} \le 237 D.
\end{equation}

\paragraph{Case 1b: $J=\emptyset$}

We know that $I$ is never an empty set.
Since $J=\emptyset$, then the most popular level is in $I$.
By \eqref{eq:level-ordering}, that level is level $1$, and hence $1\in I$.
Moreover, we note that $T_J=0$ when $J=\emptyset$, and thus $\tilde M=(M+V_I)/S_I$.

Consider again Lemma~\ref{lemma:multi-user-converse}, and define the following parameters:
\begin{IEEEeqnarray*}{lCl"rCl}
&& & t &=& \floor{\frac{1}{32}\frac{\sqrt{N_1/U_1}}{\tilde M}};\\
\forall h&\in&H, &
  s_h &=& \floor{2\frac{K\tilde M}{\sqrt{N_1/U_1}}};\\
\forall i&\in&I, &
  s_i &=& \floor{2\sqrt{\frac{N_i/U_i}{N_1/U_1}}};\\
&& & b &=& \floor{8\tilde M\sqrt{N_1/U_1}}.
\end{IEEEeqnarray*}
Thus, we can only say that $\lambda_i\ge\frac12$ for all levels $i$.

As before, we must first verify that these parameters satisfy all their constraints.
Let us start with $b$:
\[
8\tilde M\sqrt{N_1/U_1}
\ge 8 \cdot \frac1K\sqrt{N_1/U_1}\cdot\sqrt{N_1/U_1}
=   \frac{8N_1}{KU_1}
\ge 1,
\]
by regularity condition \eqref{eq:mu-reg1}, and therefore $b\ge1$.

Next, we will examine $t$ and the $\{s_i\}_i$.
It suffices to show $t\ge1$, $s_i\ge1$, $s_it\ge d_i$, and $s_it\le\floor{K/2}$ for all $i$.
We can see that $t\ge1$ because:
\[
\frac{\sqrt{N_1/U_1}}{32\tilde M}
\ge \frac{1/32}{\frac{\beta}{d_1}+\frac1K}
\ge \frac{1}{64\beta}\cdot d_1
\ge 1,
\]
where the first inequality follows from $1\in I\setminus I_1$ (recall $I_1=\emptyset$) and Definition~\ref{def:m-feasible} combined with \eqref{eq:refined-partition}.

For $h\in H$, we have $2K\tilde M/\sqrt{N_1/U_1}\ge1$ directly from Definition~\ref{def:m-feasible}, and thus $s_h\ge1$.
Moreover,
\[
s_ht
\ge
  \frac{1}{64} \cdot \frac{\sqrt{N_1/U_1}}{\tilde M}
  \cdot \frac{K\tilde M}{\sqrt{N_1/U_1}}
= \frac{K}{64} \ge D \ge d_h,
\]
and,
\[
s_ht
\le \frac{1}{32}\cdot\frac{\sqrt{N_1/U_1}}{\tilde M}
  \cdot 2\frac{K\tilde M}{\sqrt{N_1/U_1}}
= K/16 \le \floor{K/2}.
\]

Consider now $i\in I$.
Because level $1$ is the most popular level, then $N_i/U_i \ge N_1/U_1$ for all $i$, and hence $s_i\ge1$.
Moreover:
\[
s_it
\ge \frac{\sqrt{N_i/U_i}}{64\tilde M}
\ge \frac{1/64}{\frac{\beta}{d_i} + \frac1K}
\ge \frac{1}{128\beta} d_i
\ge d_i.
\]
Finally,
\[
s_it
\le \frac{\sqrt{N_i/U_i}}{16\tilde M}
\le K/16 \le \floor{K/2}.
\]
Thus, all the parameters satisfy their conditions.

We now move to the comparisons in \eqref{eq:comparison} which allow us to give bounds on $A_i$ for all $i$.

Consider $h\in H$.
Note that:
\begin{IEEEeqnarray*}{rCl}
bs_h(s_ht-d_h+1)
&\le& bs_h^2t\\
&\le& 8\tilde M\sqrt{N_1/U_1}
  \cdot \left[ \frac{2K\tilde M}{\sqrt{N_1/U_1}} \right]^2
  \cdot \frac{\sqrt{N_1/U_1}}{32\tilde M}\\
&=& \left[ K\tilde M \right]^2\\
&\le& \frac{N_h}{U_h}.
\end{IEEEeqnarray*}
Therefore,
\begin{IEEEeqnarray*}{rCl}
A_h
&=& (s_ht-d_h+1)U_h\\
&\ge& \left[
  \left( \frac{K\tilde M}{\sqrt{N_1/U_1}} \right)
  \left( \frac{\sqrt{N_1/U_1}}{32\tilde M} - 1 \right)
  - (d_h-1)
\right] U_h\\
&\ge& \left[
  \left( \frac{1}{32} - \frac{\tilde M}{\sqrt{N_1/U_1}} \right)
  - \frac{D-1}{k_0}
\right] KU_h\\
&\overset{(a)}{\ge}& \left[ 
  \frac{1}{32} - \left( \frac{\beta}{d_1} + \frac{1}{K} \right) - \frac{D}{k_0} - \frac{1}{k_0}
\right] KU_h\\
&\overset{(b)}{\ge}& \left(
  \frac{1}{32} - 2\beta
\right) \cdot KU_h.
\end{IEEEeqnarray*}
Here, $(a)$ uses Definition~\ref{def:m-feasible} with \eqref{eq:refined-partition} and $1\in I\setminus I_1$, and $(b)$ uses $d_1\ge1$ and $D/k_0=\beta$.

Now consider $i\in I$.
We have:
\begin{IEEEeqnarray*}{rCl}
bs_i(s_it-d_i+1)
&\le& bs_i^2t\\
&\le& 8\tilde M\sqrt{N_1/U_1}
  \cdot 4\frac{N_i/U_i}{N_1/U_1}
  \cdot \frac{\sqrt{N_1/U_1}}{32\tilde M}\\
&=& \frac{N_i}{U_i}
\end{IEEEeqnarray*}

Therefore,
\begin{IEEEeqnarray*}{rCl}
A_i
&=& (s_it-d_i+1)U_i\\
&\ge& \left[ 
  \sqrt{\frac{N_i/U_i}{N_1/U_1}}
  \cdot \left( \frac{\sqrt{N_1/U_1}}{32\tilde M} - 1 \right)
  - (d_i-1)
\right] U_i\\
&=& \left[ 
  \frac{1}{32}
  - \frac{\tilde M}{\sqrt{N_1/U_1}}
  - (d_i-1)\cdot\frac{\tilde M}{\sqrt{N_i/U_i}}
\right] \frac{\sqrt{N_iU_i}}{\tilde M}\\
&\ge& \left[ 
  \frac{1}{32}
  - \left( \frac{\beta}{d_1}+\frac1K \right)
  - (d_i-1)\cdot\left( \frac{\beta}{d_i}+\frac1K \right)
\right] \frac{\sqrt{N_iU_i}}{\tilde M}\\
&=& \left[
  \frac{1}{32}
  - 2\beta - \left( \frac{1}{d_1}-\frac{1}{d_i} \right)\beta
\right] \frac{\sqrt{N_iU_i}}{\tilde M}\\
&\ge& \left(
  \frac{1}{32} - 3\beta
\right) \frac{\sqrt{N_iU_i}}{\tilde M}.
\end{IEEEeqnarray*}

Finally, we will also give an upper bound to the following useful quantity:
\begin{IEEEeqnarray*}{rCl}
\frac{tM}{b}
&\le&
    \frac{\sqrt{N_1/U_1}}{32\tilde M}
  \cdot \frac{2}{8\tilde M\sqrt{N_1/U_1}}
  \cdot M\\
&\le& \frac{S_I^2}{128(M+V_I)^2} \cdot (M+V_I)\\
&=& \frac{S_I^2}{128(M+V_I)}.
\end{IEEEeqnarray*}

Combining the $A_i$ terms, along with the $\lambda_i$ factors, we get the following lower bound:
\begin{IEEEeqnarray*}{rCl}
R^\ast(M)
&\ge&
  \sum_{h\in H}
    \frac12 \cdot \left( \frac{1}{32} - 2\beta \right)
	\cdot KU_h\\
&&{} +
  \sum_{i\in I}
    \frac12 \cdot \left( \frac{1}{32} - 3\beta \right)
	\cdot \frac{\sqrt{N_iU_i}}{\tilde M}\\
&&{} -
  \frac{tM}{b}\\
&\ge&
  \sum_{h\in H}
    \frac12 \cdot \left( \frac{1}{32} - 2\beta \right)
	\cdot KU_h\\
&&{} +
    \frac12 \cdot \left( \frac{1}{32} - 3\beta \right)
	\cdot \frac{S_I^2}{M+V_I}\\
&&{} - \frac{S_I^2}{128(M+V_I)}\\
&=&
  \sum_{h\in H}
    \frac12 \cdot \left( \frac{1}{32} - 2\beta \right)
	\cdot KU_h\\
&&{} +
    \left[
	  \frac12 \cdot \left( \frac{1}{32} - 3\beta \right)
	  - \frac{1}{128}
	\right]
	\cdot \frac{S_I^2}{M+V_I}.\\
&=&
  \sum_{h\in H}
    \frac{1}{64} (1-64\beta)
	\cdot KU_h\\
&&{} +
    \frac{1}{256}(1-192\beta)
	\cdot \frac{2S_I^2}{M+V_I}.
\IEEEyesnumber\label{eq:mu1b-conv}
\end{IEEEeqnarray*}

Combining \eqref{eq:mu1b-conv} with \eqref{eq:mu1-ach}:
\begin{IEEEeqnarray*}{rCl}
\frac{R(M)}{R^\ast(M)}
&\le&
\frac{256}{1-192\beta}
= 8448.
\IEEEyesnumber\label{eq:mu1b-gap}
\end{IEEEeqnarray*}

\subsubsection{Case 2: $I_1\not=\emptyset$}

In this section, there exists at least one level in $I_1$.
However, by Proposition~\ref{prop:i1-size}, the set $I_1$ cannot contain more than one level.
Therefore, in this section we have $I_1=\{i_1\}$ for some level $i_1$.

Using Lemma~\ref{lemma:multi-user-achievability}, we can write the achievable rate as follows:
\begin{IEEEeqnarray*}{rCl}
R(M)
&\le&
  \sum_{h\in H} KU_h
  + \frac{2S_IS_{I\setminus I_1}}{M-T_J+V_I}\\
&&{} +
  \frac1\beta d_{i_1}U_{i_1}
    \left( 1-\frac{M-T_J}{N_{i_1}/d_{i_1}} \right)\\
&&{} +
  \frac1\beta d_{i_1}U_{i_1} \cdot
	\frac{S_{I\setminus I_1}}{\sqrt{N_{i_1}U_{i_1}}}.
\IEEEyesnumber\label{eq:mu2-ach}
\end{IEEEeqnarray*}

Let $\gamma_1=2.965$ and $\gamma_2=0.482$.
Consider Lemma~\ref{lemma:multi-user-converse} with the following parameters.
\begin{IEEEeqnarray*}{lCl"rCl}
&& & t &=& 1;\\
\forall h&\in&H, &
  s_h &=& \floor{2\beta K};\\
\forall i&\in&I\setminus I_1, &
  s_i &=& \floor{\gamma_1\beta \frac{\sqrt{N_i/U_i}}{\tilde M}
    + \gamma_2 d_{i_1}\sqrt{\frac{U_{i_1}/N_{i_1}}{U_i/N_i}}};\\
&& & s_{i_1} &=& d_{i_1};\\
\forall j&\in&J, &
  s_j &=& d_j;\\
&& & b &=& \ceil{\frac{N_{i_1}}{d_{i_1}U_{i_1}}}.
\end{IEEEeqnarray*}
Thus we have $\lambda_{i_1}=\lambda_j=1$ for all $j\in J$, and $\lambda_i,\lambda_h\ge\frac12$ for $i\in I$ and $h\in H$.

Recall the following inequalities from Definition~\ref{def:m-feasible} regarding $i\in I$:
\[
\frac1K\sqrt{\frac{N_i}{U_i}}
\le \tilde M
\le \left( \frac1K + \frac{1}{d_i} \right) \sqrt{\frac{N_i}{U_i}}.
\]
Thus, for any $i,j\in I$, we have:
\[
\sqrt{\frac{U_i/N_i}{U_j/N_j}} \le \frac{K}{d_i} + 1.
\]
Moreover, for $h\in H$ and $i_1\in I_1$:
\[
\left( \frac1K + \frac{\beta}{d_{i_1}} \right)
\le \tilde M
\le \frac1K\sqrt{\frac{N_h}{U_h}},
\]
and hence:
\[
\sqrt{\frac{U_{i_1}/N_{i_1}}{U_h/N_h}}
\ge \beta\frac{K}{d_{i_1}}+1.
\]

We will now verify that the parameters satisfy their constraints.
The parameters $t$, $s_{i_1}$, and $s_j$ trivially satisfy all their constraints.
Moreover, $N_{i_1}\ge KU_{i_1}\ge d_{i_1}U_{i_1}$, which implies $b\ge1$.
Regarding $s_h$:
\[
K/2 \ge 2\beta K \ge 2\beta\cdot D/\beta = 2D \ge d_h,
\]
and hence $d_h\le s_h\le \floor{K/2}$.
We are hence only left with $s_i$.

Consider the following:
\begin{IEEEeqnarray*}{rCl}
&& \gamma_1\beta\frac{\sqrt{N_i/U_i}}{\tilde M}
+ \gamma_2d_{i_1}\sqrt{\frac{U_{i_1}/N_{i_1}}{U_i/N_i}}\\
&\ge& \gamma_1\beta\frac{\sqrt{N_i/U_i}}{\tilde M}
\ge \frac{\gamma_1\beta}{\beta/d_i+1/K}\\
&\ge& \frac{\gamma_1\beta}{2\beta}d_i
\ge d_i,
\end{IEEEeqnarray*}
and hence $s_i\ge d_i$.
Furthermore,
\begin{IEEEeqnarray*}{rCl}
&& \gamma_1\beta\frac{\sqrt{N_i/U_i}}{\tilde M}
+ \gamma_2 d_{i_1}\sqrt{\frac{U_{i_1}/N_{i_1}}{U_i/N_i}}\\
&\le& \gamma_1\beta K
+ \gamma_2  d_{i_1} \left( \frac{K}{d_{i_1}} + 1 \right)\\
&=& \left[ \gamma_1 \beta + \gamma_2\left(1 + \frac{d_{i_1}}{K} \right)\right] K\\
&\le& \left[ \gamma_2 + (\gamma_1+\gamma_2)\beta \right] K\\
&\le& K/2,
\end{IEEEeqnarray*}
and hence $s_i=s_it\le \floor{K/2}$.

We will now evaluate the comparisons in \eqref{eq:comparison} to give bounds on the $A_i$ terms, for every $i$.

Let us start with $h\in H$.
We have $bs_h(s_ht-d_h+1)\le bs_h^2t=bs_h^2$, and:
\begin{IEEEeqnarray*}{rCl}
bs_h^2
&\le& \left( \frac{N_{i_1}}{d_{i_1}U_{i_1}} + 1 \right)
  \cdot 4\beta^2 K^2\\
&\le& \frac{N_{i_1}}{d_{i_1}U_{i_1}}
  \left( 1 + \frac{d_{i_1}}{K} \right)
  \cdot 4\beta^2 K^2\\
&\overset{(a)}{\le}& \frac{N_h}{d_{i_1}U_h}
  \frac{1+d_{i_1}/K}{(\beta+d_{i_1}/K)^2} \cdot \frac{d_{i_1}^2}{K^2}
  \cdot 4\beta^2 K^2\\
&\overset{(b)}{\le}& \frac{N_h}{U_h} \cdot \frac{1}{\beta^2} \cdot d_{i_1} \cdot 4\beta^2\\
&\le& 4D \cdot \frac{N_h}{U_h}.
\end{IEEEeqnarray*}
The inequality labeled $(a)$ comes from Definition~\ref{def:m-feasible} and \eqref{eq:refined-partition}, and:
\[
\left( \frac{\beta}{d_{i_1}}+\frac1K \right) \sqrt{\frac{N_{i_1}}{U_{i_1}}}
< M
\le \frac1K \sqrt{\frac{N_h}{U_h}}.
\]
As for the inequality $(b)$, it is due to:
\[
\forall x\ge0,
\quad \frac{1+x}{(\beta+x)^2} = \frac{1}{\beta+x} \cdot \frac{1+x}{\beta+x}
\le \frac1\beta\cdot\frac1\beta,
\]
because $x\to\frac{1+x}{\beta+x}$ is decreasing for $x\ge0$.
Hence,
\begin{IEEEeqnarray*}{rCl}
A_h
&\ge& \min\left\{ 1 , \frac{1}{4D} \right\}
  \cdot ( s_ht-d_h+1 ) U_h\\
&\ge& \frac{1}{4D}
  \cdot \left( 2\beta K - d_h \right) U_h\\
&\ge& \frac{1}{4D}
  \cdot \left( 2\beta - \beta \right) KU_h\\
&=& \frac{\beta}{4D} \cdot KU_h.
\end{IEEEeqnarray*}

Now we consider $i\in I\setminus I_1$.
We have:
\begin{IEEEeqnarray*}{rCl}
bs_i^2
&\le& \frac{N_{i_1}}{d_{i_1}U_{i_1}}
  \left( 1 + \frac{d_{i_1}}{K} \right)
  \left(
    \gamma_1\beta \frac{\sqrt{N_i/U_i}}{\tilde M}
	+ \gamma_2 d_{i_1} \sqrt{\frac{U_{i_1}/N_{i_1}}{U_i/N_i}}
  \right)^2\\
&\le& \frac{N_{i_1}}{d_{i_1}U_{i_1}}
  \left( 1 + \beta \right)
  \left(
    \gamma_1beta \frac{\sqrt{N_{i_1}/U_{i_1}}}{\tilde M}
	+ \gamma_2 d_{i_1}
  \right)^2
  \cdot \frac{U_{i_1}/N_{i_1}}{U_i/N_i}\\
&\le& \frac{N_i}{d_{i_1}U_i} (1+\beta)
  \left( \frac{\gamma_1\beta}{\beta/d_{i_1}+1/K}
    + \gamma_2 d_{i_1} \right)^2\\
&\le& \frac{N_i}{d_{i_1}U_i} (1+\beta)
  \left( \frac{\gamma_1\beta}{\beta}d_{i_1}
    + \gamma_2 d_{i_1} \right)^2\\
&=& \frac{N_i}{d_{i_1}U_i} (1+\beta) d_{i_1}^2
  \left( \gamma_1
    + \gamma_2 \right)^2\\
&\le& \frac{N_i}{U_i} \cdot 12 (1+\beta) D.
\end{IEEEeqnarray*}
Therefore, if we define $c=\frac{1}{12(1+\beta)D}$, then:
\begin{IEEEeqnarray*}{rCl}
A_i
&\ge& c
  \cdot (s_it-d_i+1)U_i\\
&\ge& c
  \cdot \left(
    \gamma_1\beta\frac{\sqrt{N_i/U_i}}{\tilde M}
	+ \gamma_2 d_{i_1} \sqrt{\frac{U_{i_1}/N_{i_1}}{U_i/N_i}}
	- d_i
  \right) U_i\\
&=& c \left[
  \left( \gamma_1\beta\frac{\sqrt{N_i/U_i}}{\tilde M}
    - d_i \right) U_i
  + \gamma_2 d_{i_1}\sqrt{\frac{U_{i_1}/N_{i_1}}{U_i/N_i}} \cdot U_i
\right]\\
&=& c \left[
  \left( 
    \gamma_1\beta
	- d_i\frac{\tilde M}{\sqrt{N_i/U_i}}
  \right) \frac{\sqrt{N_iU_i}}{\tilde M}
  + \gamma_2 d_{i_1}U_{i_1}
    \frac{\sqrt{N_iU_i}}{\sqrt{N_{i_1}U_{i_1}}}
\right]\\
&\ge& c \left[
  \left( 
    \gamma_1\beta
	- 2\beta
  \right) \frac{\sqrt{N_iU_i}}{\tilde M}
  + \gamma_2 d_{i_1}U_{i_1}
    \frac{\sqrt{N_iU_i}}{\sqrt{N_{i_1}U_{i_1}}}
\right]\\
&=& c \left[
  (\gamma_1-2) \beta \frac{\sqrt{N_iU_i}}{\tilde M}
  + \gamma_2 d_{i_1}U_{i_1}
    \frac{\sqrt{N_iU_i}}{\sqrt{N_{i_1}U_{i_1}}}
\right]\\
&\ge& c\cdot\min\left\{ \frac12\gamma_1-1,\gamma_2 \right\}\cdot\beta\\
&&{}
  \cdot \left[
  \frac{2\sqrt{N_iU_i}}{\tilde M}
  + \frac1\beta d_{i_1}U_{i_1}
    \frac{\sqrt{N_iU_i}}{\sqrt{N_{i_1}U_{i_1}}}
\right]\\
&=& 0.482 c\beta
  \cdot \left[
  \frac{2\sqrt{N_iU_i}}{\tilde M}
  + \frac1\beta d_{i_1}U_{i_1}
    \frac{\sqrt{N_iU_i}}{\sqrt{N_{i_1}U_{i_1}}}
\right].
\end{IEEEeqnarray*}

Finally, we have:
\[
bs_{i_1}(s_{i_1}t-d_{i_1}+1)
= bd_{i_1}
\ge \frac{N_{i_1}}{U_{i_1}},
\]
and, likewise for $j\in J$:
\[
bs_j(s_jt-d_j+1)
= bd_j
\ge \frac{d_j}{d_{i_1}} \frac{N_{i_1}}{U_{i_1}}
\ge \frac{N_j}{U_j},
\]
because of the regularity condition \eqref{eq:mu-reg2}.
Thus:
\[
A_{i_1} = \frac{N_{i_1}/d_{i_1}}{b};
\quad
A_j = \frac{N_j/d_j}{b}.
\]

Combining the $A_i$ terms, and keeping in mind the $\lambda_i$ terms, we get:
\begin{IEEEeqnarray*}{rCl}
R^\ast(M)
&\ge&
  \sum_{h\in H}
  	\frac12 \cdot \frac{\beta}{4D} \cdot KU_h\\
&&{} +
  \sum_{i\in I\setminus I_1}
    \frac12 \cdot \frac{0.482\beta}{12(1+\beta)D}
	\cdot \left[ 
	  \frac{2\sqrt{N_iU_i}}{\tilde M}
	  + \frac1\beta d_{i_1}U_{i_1} \frac{\sqrt{N_iU_i}}{N_{i_1}U_{i_1}}
	\right]\\
&&{} +
  \frac{ N_{i_1}/d_{i_1} + \sum_{j\in J} N_j/d_j - M }{b}\\
&\ge&
  \sum_{h\in H}
  	\frac{\beta}{8D} \cdot KU_h\\
&&{} +
    \frac{0.241\beta}{12(1+\beta)D}
	\left[ 
	  \frac{2S_IS_{I\setminus I_1}}{M-T_J+V_I}
	  + \frac1\beta d_{i_1}U_{i_1} \frac{S_{I\setminus I_1}}{N_{i_1}U_{i_1}}
	\right]\\
&&{} +
  \frac{ N_{i_1}/d_{i_1} + \sum_{j\in J} N_j/d_j - M }
       { 2N_{i_1}/(d_{i_1}U_{i_1}) }\\
&=&
  \sum_{h\in H}
  	\frac{\beta}{8D} \cdot KU_h\\
&&{} +
    \frac{0.241\beta}{12(1+\beta)D}
	\left[ 
	  \frac{2S_IS_{I\setminus I_1}}{M-T_J+V_I}
	  + \frac1\beta d_{i_1}U_{i_1} \frac{S_{I\setminus I_1}}{N_{i_1}U_{i_1}}
	\right]\\
&&{} +
  \frac12 U_{i_1}
    \left( 1 - \frac{M-T_J}{N_{i_1}/d_{i_1}} \right).
\IEEEyesnumber\label{eq:mu2-conv}
\end{IEEEeqnarray*}

Combining \eqref{eq:mu2-conv} with \eqref{eq:mu2-ach}, we get:
\begin{equation}
\label{eq:mu2-gap}
\frac{R(M)}{R^\ast(M)}
\le \max\left\{
  \frac{8D}{\beta},
  \frac{12(1+\beta)D}{0.241\beta},
  \frac{2D}{\beta}
\right\}
\le 9909 D.
\end{equation}

\subsubsection{Final gap}

By combining the inequalities from all the cases above, i.e., inequalities \eqref{eq:mu0-gap}, \eqref{eq:mu1a-gap}, \eqref{eq:mu1b-gap}, and \eqref{eq:mu2-gap}, we get the following final result:
\[
\frac{R(M)}{R^\ast(M)}
\le 9909 \cdot D,
\]
which concludes the proof of Theorem~\ref{thm:multi-user-gap}.

\subsection{Proofs of some useful results}
\label{app:multi-user-extras}


\begin{proof}[Proof of Lemma~\ref{lemma:multi-user-achievability}]

\begin{figure*}[!t]
\normalsize
\setcounter{MYtempeqncnt}{\value{equation}}
\setcounter{equation}{30}
\begin{IEEEeqnarray*}{rCl}
\Delta
&=& \left( 1-\frac{\alpha_iM}{N_i/d_i} \right)
- \left( 1 - \frac{M-T_J}{N_i/d_i} \right)
= \left( 1-\frac{ \frac{\sqrt{N_iU_i}}{S_I}(M-T_J+V_I) - \frac{N_i}{K} }{N_i/d_i} \right)
- \left( 1 - \frac{M-T_J}{N_i/d_i} \right)\\
&=& \frac{M-T_J}{N_i/d_i}
- \frac{\sqrt{N_iU_i}(M-T_J)}{S_IN_i/d_i}
- \frac{\sqrt{N_iU_i}V_I}{S_IN_i/d_i}
+ \frac{d_i}{K}
= \frac{M-T_J}{N_i/d_i} \left( 1 - \frac{\sqrt{N_iU_i}}{S_I} \right)
- \frac{\sqrt{N_iU_i}V_{I\setminus I_1}}{S_IN_i/d_i}
- \frac{\sqrt{N_iU_i}\cdot\frac{N_i}{K}}{S_IN_i/d_i}
+ \frac{d_i}{K}\\
&=& \frac{M-T_J}{N_i/d_i} \cdot \frac{S_{I\setminus I_1}}{S_I}
- \frac{d_iV_{I\setminus I_1}}{\sqrt{N_i/U_i}S_I}
+ \frac{d_i}{K} \left( 1 - \frac{\sqrt{N_iU_i}}{S_I} \right)
= \frac{d_i}{N_i}(M-T_J)\frac{S_{I\setminus I_1}}{S_I}
- \frac{d_iV_{I\setminus I_1}}{\sqrt{N_i/U_i}S_I}
+ \frac{d_i}{K} \cdot \frac{S_{I\setminus I_1}}{S_I}\\
&\overset{(a)}{\le}&
\frac{d_i}{N_i} \left[ \left( \frac1K+\frac{1}{d_i} \right) \sqrt{\frac{N_i}{U_i}}S_I - V_I \right] \frac{S_{I\setminus I_1}}{S_I}
- \frac{d_iV_{I\setminus I_1}}{\sqrt{N_i/U_i}S_I}
+ \frac{d_i}{K} \cdot \frac{S_{I\setminus I_1}}{S_I}\\
&=& \left( \frac{d_i}{K}+1 \right) \frac{S_{I\setminus I_1}}{\sqrt{N_iU_i}}
- \frac{d_iV_IS_{I\setminus I_1}}{N_iS_I}
- \frac{d_iV_{I\setminus I_1}\sqrt{N_iU_i}}{N_iS_I}
+ \frac{d_iS_{I\setminus I_1}}{KS_I}\\
&=& \left( \frac{d_i}{K}+1 \right) \frac{S_{I\setminus I_1}}{\sqrt{N_iU_i}}
- \frac{d_i \left( V_{I\setminus I_1}+\frac{N_i}{K} \right) S_{I\setminus I_1}}{N_iS_I}
- \frac{d_iV_{I\setminus I_1}\sqrt{N_iU_i}}{N_iS_I}
+ \frac{d_iS_{I\setminus I_1}}{KS_I}\\
&=& \left( \frac{d_i}{K}+1 \right) \frac{S_{I\setminus I_1}}{\sqrt{N_iU_i}}
- \frac{d_i V_{I\setminus I_1} S_I}{N_iS_I}
- \frac{d_iS_{I\setminus I_1}}{KS_I}
+ \frac{d_iS_{I\setminus I_1}}{KS_I}
= \frac{d_i}{K} \cdot \frac{S_{I\setminus I_1}}{\sqrt{N_iU_i}}
- \frac{d_iV_{I\setminus I_1}}{N_i}
+ \frac{S_{I\setminus I_1}}{\sqrt{N_iU_i}}\\
&=& \frac{d_i}{K} \sum_{i\in I\setminus I_1} \left( 
\frac{\sqrt{N_iU_i}}{\sqrt{N_iU_i}} - \frac{N_i}{N_i}
\right)
+ \frac{S_{I\setminus I_1}}{\sqrt{N_iU_i}}
= \frac{d_i}{K} \sum_{i\in I\setminus I_1}
\frac{ \sqrt{N_iU_i} + N_i\sqrt{U_i/N_i} }{ \sqrt{ N_iU_i } }
+ \frac{S_{I\setminus I_1}}{\sqrt{N_iU_i}}\\
&=& \frac{d_i}{K} \sum_{i\in I\setminus I_1}
\frac{ N_i\left( \sqrt{U_i/N_i} - \sqrt{N_i/U_i} \right) }{\sqrt{N_iU_i}}
+ \frac{S_{I\setminus I_1}}{\sqrt{N_iU_i}}\\
&\overset{(b)}{\le}& \frac{S_{I\setminus I_1}}{\sqrt{N_iU_i}}.
\IEEEyesnumber\label{eq:delta}
\end{IEEEeqnarray*}
\setcounter{equation}{\value{MYtempeqncnt}}
\hrulefill
\vspace*{4pt}
\end{figure*}
\addtocounter{equation}{1}

We will prove the lemma for each of the five sets in the refined $M$-feasible partition.
First, recall that the achievable rate $R^\text{SL}(\cdot)$ for the single-level setup is defined in Theorem~\ref{thm:single-level}.

For $h\in H$:
\[
R_h(M) = R^\text{SL}(0,K,N_h,U_h,d_h) = KU_h.
\]

For $i\in I_0$:
\begin{IEEEeqnarray*}{rCl}
R_i(M)
&=& R^\text{SL}(\alpha_iM,K,N_i,U_i,d_i)\\
&\le& R^\text{SL}(0,K,N_i,U_i,d_i)\\
&=& KU_i\\
&\le& \frac{2}{\tilde M}\sqrt{\frac{N_i}{U_i}}\cdot U_i\\
&=& \frac{2S_I\sqrt{N_iU_i}}{M-T_J+V_I}.
\end{IEEEeqnarray*}

For $i\in I'$:
\begin{IEEEeqnarray*}{rCl}
R_i(M)
&=& R^\text{SL}(\alpha_iM,K,N_i,U_i,d_i)\\
&\le& U_i\cdot \frac{N_i}{\alpha_iM} \cdot \left( 1 - \frac{\alpha_iM}{N_i/d_i} \right)\\
&\le& \frac{N_iU_i}{\alpha_iM}\\
&=& \frac{N_iU_i}{\sqrt{N_iU_i}\cdot\tilde M - N_i/K}\\
&\le& \frac{N_iU_i}{\sqrt{N_iU_i}\cdot\tilde M - \frac12\sqrt{N_iU_i}\cdot\tilde M}\\
&=& \frac{2S_I\sqrt{N_iU_i}}{M-T_J+V_I}.
\end{IEEEeqnarray*}

For $i\in I_1$:
\[
R_i(M) \le U_i\cdot\frac{N_i}{\alpha_iM}\cdot\left( 1-\frac{\alpha_iM}{N_i/d_i} \right)
\le \frac1\beta d_iU_i \left( 1 - \frac{\alpha_iM}{N_i/d_i} \right),
\]
since $i\in I_1\implies \alpha_iM\ge\beta N_i/d_i$ (follows from \eqref{eq:refined-partition}).

Consider the following difference:
\[
\Delta = \left( 1-\frac{\alpha_iM}{N_i/d_i} \right)
- \left( 1 - \frac{M-T_J}{N_i/d_i} \right).
\]
We can show that:
\[
\Delta \le \frac{S_{I\setminus I_1}}{\sqrt{N_iU_i}}.
\]
The full derivation is given in \eqref{eq:delta}.
In the calculations in \eqref{eq:delta}, $(a)$ is due to the definition of $I_1$, while $(b)$ is due to the fact that the (unique) level in $I_1$ is more popular than any other level in $i'\in I$, and thus:
\[
\sqrt{\frac{U_{i'}}{N_{i'}}} \le \sqrt{\frac{U_i}{N_i}}.
\]

Using the bound on $\Delta$, we can rewrite $R_i(M)$ as follows:
\begin{IEEEeqnarray*}{rCl}
R_i(M)
&\le& \frac1\beta d_iU_i \left( 1 - \frac{\alpha_iM}{N_i/d_i} \right)\\
&=& \frac1\beta d_iU_i \left( 1 - \frac{M-T_J}{N_i/d_i} + \Delta \right)\\
&\le& \frac1\beta d_iU_i \left( 1 - \frac{M-T_J}{N_i/d_i} \right)
+ \frac1\beta d_iU_i \cdot \frac{S_{I_0}+S_{I'}}{\sqrt{N_iU_i}}.
\end{IEEEeqnarray*}

Finally, for $j\in J$:
\[
R_j(M) = R^\text{SL}(N_j/d_j,K,N_j,U_j,d_j) = 0.
\]

This completes the proof of the lemma.
\end{proof}


\begin{proof}[Proof of Lemma~\ref{lemma:m-feasible-existence}]

We prove the existence of an $M$-feasible partition by construction.
In fact, we will use Algorithm~\ref{alg:m-feasible} presented in Section~\ref{sec:multi-user-achievability} as the proof.
In what follows, we prove the correctness of the algorithm, and then conclude the proof of Lemma~\ref{lemma:m-feasible-existence}.

The idea behind the algorithm is as follows.
When the memory is $0$, all levels are in the set $H$.%
\footnote{Technically, there will be exactly one level in the set $I$, but it will still get zero memory so this distinction is irrelevant.}
As the memory is increased, levels will start moving from $H$ to $I$ as they gain more memory, and then finally from $I$ to $J$ as they get completely stored in the caches.
Thus, the range of memory values $[0,\infty)$ can be divided into $2L+1$ intervals; in each interval, the partition $(H,I,J)$ will be the same.
The boundaries of these intervals are the memory values at which levels switch from one set to the next.
\figurename~\ref{fig:algorithm-example} shows two examples of how the $(H,I,J)$ partition might evolve with two levels.

\begin{figure}
\centering
\includegraphics[width=\myfigswidth]{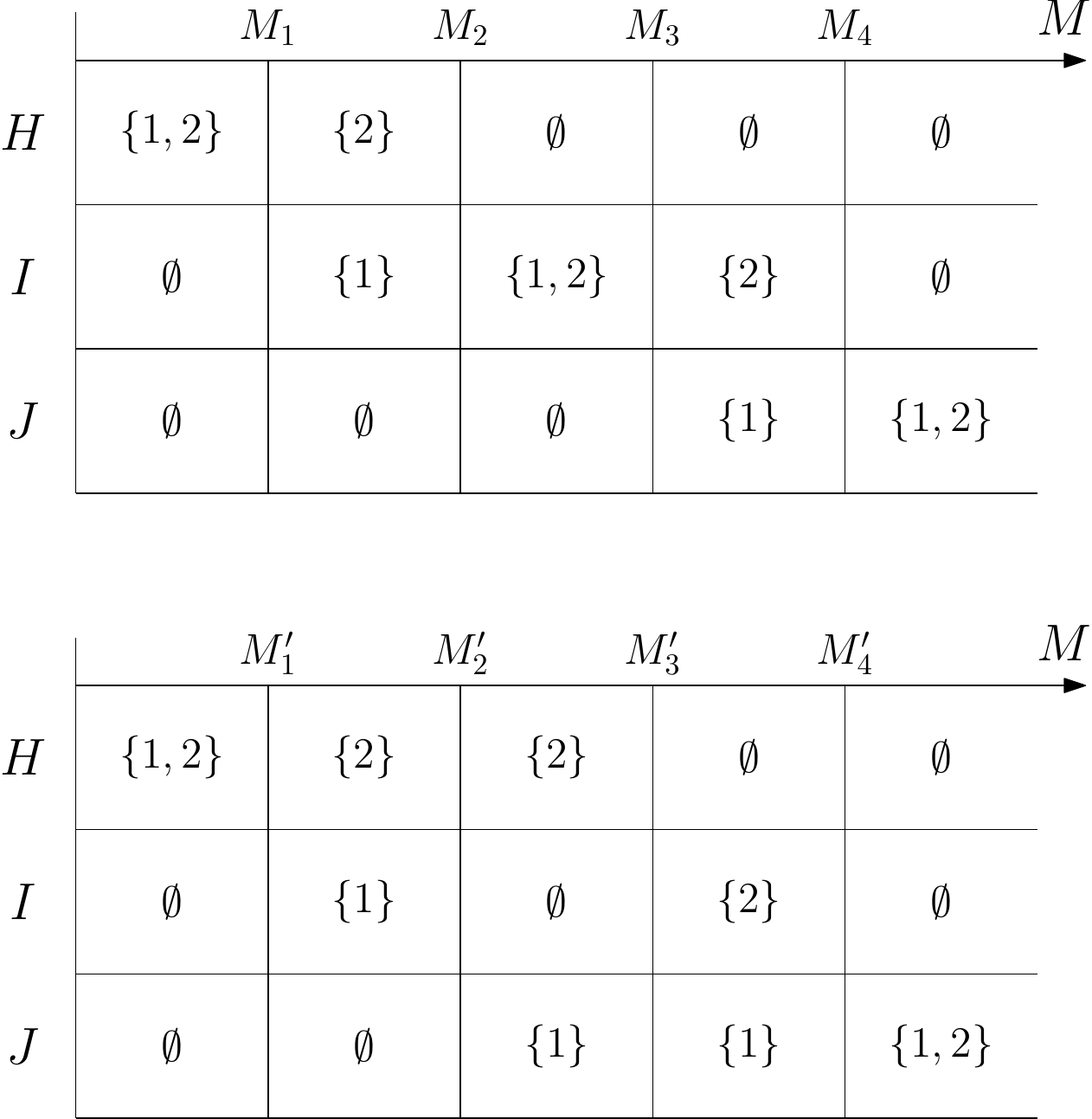}
\caption{Two examples of how an $M$-feasible partition $(H,I,J)$ could evolve when there are two levels $\{1,2\}$.
In the second case, it will turn out that $M_2'=M_3'$.
The reasoning behind this is that between $M_2'$ and $M_3'$, all levels have a fixed memory: level $2$ has memory $0$ while level $1$ has memory $N_1/d_1$.
Thus an increase in the overall memory between $M_2'$ and $M_3'$ would be wasted.
Moreover, $M_1=M_1'=0$ for similar reasons, and $M_4=M_4'=N_1/d_1+N_2/d_2$ because that is the point at which both levels can be completely stored.}
\label{fig:algorithm-example}
\end{figure}

The algorithm operates in three main steps.
In the first step, the sequence of $(H,I,J)$ partitions is determined.
For instance, we determine which of the two cases illustrated in \figurename~\ref{fig:algorithm-example} holds.
In the second step, the values of the boundaries between the different intervals are calculated; in \figurename~\ref{fig:algorithm-example}, those would be $\{M_1,M_2,M_3,M_4\}$ or $\{M_1',M_2',M_3',M_4'\}$.
Finally, the third step consists in determining the $M$-feasible partition $(H,I,J)$ based on steps 1 and 2, and based on the value of any given memory $M$.
For example, if we are in the first case in \figurename~\ref{fig:algorithm-example} and we are given a memory value between $M_3$ and $M_4$, then we know that $(H,I,J)=(\emptyset,\{2\},\{1\})$.

Recall from Definition~\ref{def:m-feasible} that the $M$-feasible partition is defined by the relation of $(M-T_J+V_I)/S_I$ to the two values $(1/K)\sqrt{N_i/U_i}$ and $(1/d_i+1/K)\sqrt{N_i/U_i}$, for every level $i$.
As long as the same relations are maintained, the same partition is chosen as the $M$-feasible partition.
Thus, the values of the boundary ranges must be of one of two forms:
\begin{IEEEeqnarray*}{rCl}
m_i^{I,J} &=& \frac1K\sqrt{\frac{N_i}{U_i}} \cdot S_I + T_J - V_I;\\
M_i^{I,J} &=& \left(\frac{1}{d_i}+\frac1K\right)\sqrt{\frac{N_i}{U_i}} \cdot S_I + T_J - V_I,
\end{IEEEeqnarray*}
for the apropriate partition $(H,I,J)$.
In fact, this partition has to be the one defined by either of the intervals that this boundary divides.
For example, $M_2$ in \figurename~\ref{fig:algorithm-example} is the boundary where level $2$ moves from $H$ to $I$.
Thus, it is of the form $m_2^{I,J}$, where $(H,I,J)=(\emptyset,\{1,2\},\emptyset)$, or $(H,I,J)=(\{2\},\{1\},\emptyset)$.
As it turns out, both choices give the same result.

Let us suppose we are looking at a particular interval with some $(H,I,J)$ partition, and we wish to know its upper boundary.
We know that it has to be either $m_h^{I,J}$, $h\in H$, or $M_i^{I,J}$, $i\in I$.
Specifically, it has to be the smallest of these values.
Since $I$ and $J$ are fixed, this is equivalent to determining the smaller of:
\begin{IEEEeqnarray*}{rCl}
\tilde m_h &=& \frac1K\sqrt{\frac{N_h}{U_h}};\\
\tilde M_i &=& \left( \frac{1}{d_i}+\frac1K \right)\sqrt{\frac{N_i}{U_i}}.
\end{IEEEeqnarray*}
But these values do not depend on the chosen partition $(H,I,J)$!
Thus, the sequence of $(H,I,J)$ can be uniquely determined solely by these $\tilde m_i$ and $\tilde M_i$ values.

For what follows, we refer the reader to Algorithm~\ref{alg:m-feasible} for the various symbols ($x_t$, $Y_t$).
It is worth mentioning the following two points.
First, the ordering between $x_t$ and $Y_t$ is preserved.
That is, $x_t<x_{t'} \implies Y_t<Y_{t'}$.
This is ensured by the observation made above that the ordering does not depend on the partition $(H,I,J)$.
Second, as mentioned earlier, when applying the function $x\mapsto x\cdot S_I+T_J-V_I$ on some $x=\tilde m_i$ or $x=\tilde M_i$, it does not matter whether we take the $(H,I,J)$ partition whose interval $x$ lower-bounds or upper-bounds.

Any interval in which $I$ is set to be empty automatically becomes an empty interval.
To prove this, we will show that $I=\emptyset$ implies that its upper and lower boundaries match.
This is done as follows.
Consider the aforementioned interval, and notice that it can only occur if some level moved from $I$ to $J$ at the lower boundary, and another level moved from $H$ to $I$ at the upper boundary.
See the bottom case of \figurename~\ref{fig:algorithm-example} for an example, interval $(M_2',M_3')$.
Thus, the sequence of $(H,I,J)$ partitions that occurs is of the form shown in \figurename~\ref{fig:algorithm-emptyI}.

\begin{figure}
\centering
\includegraphics[width=.3\textwidth]{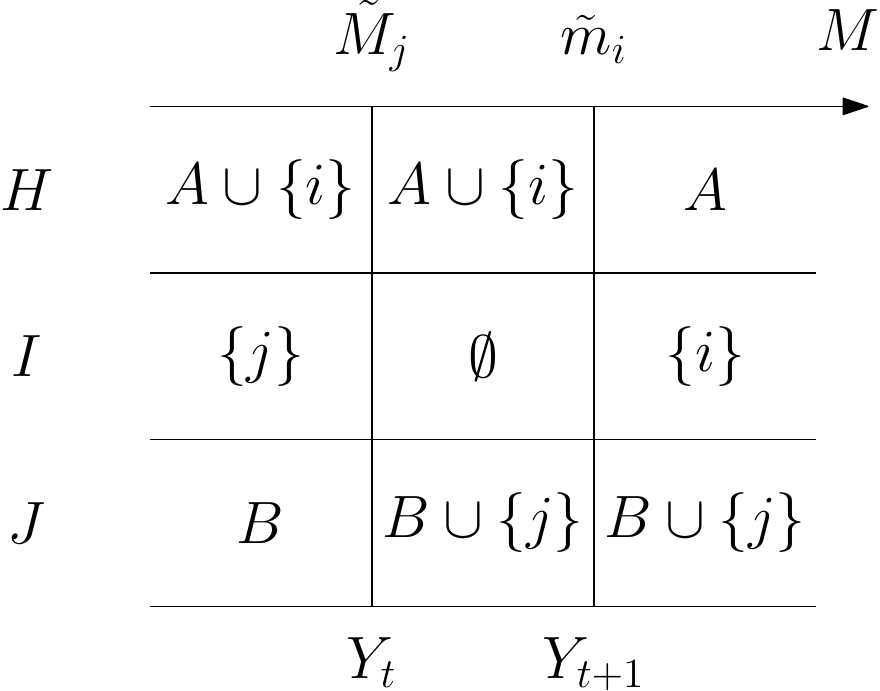}
\caption{The only situation where Algorithm~\ref{alg:m-feasible} results in an interval with $I=\emptyset$.
Here, the sets $A$, $B$, $\{i\}$, and $\{j\}$ are disjoint and together form the entire set of levels.}
\label{fig:algorithm-emptyI}
\end{figure}

Let us compute the interval boundaries $Y_t$ and $Y_{t+1}$.
Because of our previous observations, we can use $(H,I,J)=(A\cup\{i\},\emptyset,B\cup\{j\})$ for both.
Then,
\[
Y_t
= \tilde M_j \cdot S_I + T_J - V_I
= T_J
= T_B + N_j/d_j.
\]
Similarly,
\[
Y_{t+1}
= \tilde m_i \cdot S_I + T_J - V_I
= T_J
= T_B + N_j/d_j.
\]
Thus, $Y_t=Y_{t+1}$ and the interval is empty.

This concludes the proof of the lemma.
\end{proof}

\section{Proof of approximate optimality for the single-user setup (Theorem~\ref{thm:single-user-gap})}
\label{app:single-user}
\label{app:single-user-gap}

As discussed in Section~\ref{sec:single-user-gap}, we prove Theorem~\ref{thm:single-user-gap} by lower-bounding the optimal rate with expressions of the form shown in \eqref{eq:single-user-lower-bounds}.
Two cases must be considered.

\subsubsection{Case $M<1/6$}

When $M$ is this small, we choose $b=1$ broadcast message.
Recall that, because of regularity condition \eqref{eq:su-reg}, we have $N_i/K_i\ge1>1/6>M$.
The achievable rate in this case can be upper-bounded by the expression in \eqref{eq:su-ub-0}.

Consider now any level $i$.
Let $s_i=K_i$.
Then,
\begin{IEEEeqnarray*}{rCl}
v_i
&=& s_i\left( \min\left\{ 1 , \frac{N_i}{s_ib} \right\} - \frac{M}{b} \right)\\
&=& K_i \left( \min\left\{ 1 , \frac{N_i}{K_i} \right\} - M \right)\\
&\ge& (5/6)K_i. \IEEEyesnumber \label{eq:su-lb-gh}
\end{IEEEeqnarray*}

We can combine \eqref{eq:su-ub-0} with \eqref{eq:su-lb-gh} and \eqref{eq:single-user-lower-bounds} to get:
\begin{equation}
\label{eq:su-gap-0}
\frac{R(M)}{R^\ast(M)} \le \frac65.
\end{equation}

\subsubsection{Case $M\ge1/6$}

We will now choose $b=\ceil{6M}\ge1$.

\paragraph{Bound for $g\in G$}
Consider $s_g=1$.
Then,
\begin{IEEEeqnarray*}{rCl}
v_g
&=& \min\left\{ 1 , \frac{N_i}{\ceil{6M}} \right\} - \frac{M}{\ceil{6M}}\\
&\ge& \frac12 - \frac{M}{6M}\\
&=& \frac13, \IEEEyesnumber \label{eq:su-lb-g}
\end{IEEEeqnarray*}
because $s_gb=\ceil{6M}\le2\cdot6M\le2N_g$.

\paragraph{Bound for $h\in H$}
Consider $s_h=\ceil{K_h/6}\ge1$.
Then,
\begin{IEEEeqnarray*}{rCl}
v_h
&\ge& \frac{K_h}{6} \left( \min\left\{ 1 , \frac{N_h}{4K_hM} \right\} - \frac{M}{6M} \right)
= \frac{1}{72}\cdot K_h, \IEEEyesnumber \label{eq:su-lb-h}
\end{IEEEeqnarray*}
because $s_hb=\ceil{K_h/6}\cdot\ceil{6M} \le 4K_hM \le 4N_h$.

\paragraph{Bound for $i\in I$}
Consider $s_i=\ceil{N_i/6M}$.
Then,
\begin{IEEEeqnarray*}{rCl}
v_i
&\ge& \frac{N_i}{6M} \left( \min\left\{ 1 , \frac{N_i}{4N_i} \right\} - \frac{M}{6M} \right)
= \frac{1}{72}\cdot\frac{N_i}{M}, \IEEEyesnumber \label{eq:su-lb-i}
\end{IEEEeqnarray*}
because $s_ib=\ceil{N_i/6M}\cdot\ceil{6M} \le 4N_i$.

\paragraph{Bound for $J$}
First, if $M\ge N_J$, then the set $J$ contributes nothing to the upper bound on the rate in \eqref{eq:single-user-achievable-rate}; see \eqref{eq:su-ub-J}.
Thus we can ignore it, i.e., choose $s_J=0$ and thus $n_J=0$ and $v_J=0$.

So the interesting case is $M<N_J$.
Here, we must decode files from multiple levels collectively.
Consider $s_J=\ceil{N_J/6M}$, where $N_J=\sum_{j\in J}N_j$.
Notice that there are enough users and broadcasts to decode all files, because:
\[
s_Jb \ge \frac{N_J}{6M}\cdot6M = N_J.
\]
However, we must take care that no broadcast considers more than $K_j$ users at a time for any $j\in J$.
This can be ensured: since there are $b=\ceil{6M}$ broadcasts, and $b\ge N_j$ for all $j\in J$, then every broadcast need only consider at most one user per level.
Hence, all of the $N_J$ files can be decoded, and $n_J=N_J$.

If $M<N_J/6$, we have:
\begin{IEEEeqnarray*}{rCl}
v_J
&=& \frac{N_J - s_JM}{b}\\
&\ge& \frac{1}{12M}\left( N_J - \frac{N_J}{12M}\cdot M \right)\\
&\ge& \frac{144}{11}\cdot\frac{N_J}{M}.
\IEEEyesnumber \label{eq:su-lb-j1}
\end{IEEEeqnarray*}

If $N_J/6\le M<N_J$, then $s_J$ is actually equal to $1$, and:
\begin{IEEEeqnarray*}{rCl}
v_J
&=& \frac{N_J - M}{b}\\
&\ge& \frac{N_J - M}{12M}\\
&\ge& \frac{1}{12}\left( 1 - \frac{M}{N_J} \right).
\IEEEyesnumber\label{eq:su-lb-j2}
\end{IEEEeqnarray*}

\subsubsection{Multiplicative gap}

By combining \eqref{eq:su-lb-g}, \eqref{eq:su-lb-h}, \eqref{eq:su-lb-i}, \eqref{eq:su-lb-j1} and \eqref{eq:su-lb-j2} with \eqref{eq:single-user-achievable-rate} and \eqref{eq:su-ub-J}, and also taking into account \eqref{eq:su-gap-0}, we get:
\[
\frac{R(M)}{R^\ast(M)} \le 72,
\]
which concludes the proof of Theorem~\ref{thm:single-user-gap}.\qed

\section{Complete characterization for the small example (Theorem~\ref{thm:small-example})}
\label{app:small-example}

This appendix proves Theorem~\ref{thm:small-example} in two parts.
The first part presents the achievability scheme, and the second part gives the information-theoretic outer bounds.

The rate-memory curve shown in Figure~\ref{fig:small-ex-region}---which we are trying to prove is both achievable and optimal---can be described using the following equation:
\begin{equation}
\label{eq:small-example-rate}
R = \max\left\{ 3-2M \,,\, \frac52-M \,,\, 2-\frac12M \,,\, 1 - \frac{M-2}{N_2/2} \right\}.
\end{equation}

We would like to remind the reader that, in the example considered, the number of less popular files is $N_2\ge4$.

\subsection{Achievability scheme}

To prove the achievability of the piece-wise linear curve in Figure~\ref{fig:small-ex-region} and in \eqref{eq:small-example-rate}, we need only show the achievability of the corner points, as the rest can be achieved using a memory-sharing scheme between every pair of points.
These $(M,R)$ points are:
\[
\left(0,3\right);\quad \left(\frac12,2\right);\quad \left(1,\frac23\right);\quad \left(2,1\right);\quad \text{and } \left(2+\frac{N_2}{2},0\right).
\]

\subsubsection{Point $(M,R)=(0,3)$}
When $M=0$, we do not place any content in the caches, and instead broadcast all the requested files to the users.
Since these can be three different files, the peak rate is $R=3$.

\subsubsection{Point $(M,R)=(2+\frac{N_2}{2},0)$}
On the other extreme, when $M=2+\frac{N_2}{2}$, the caches are large enough to each store all the popular files, and complementary halves of the unpopular files.
Thus, a user can recover any popular file by solely accessing any cache, and can recover any unpopular file by accessing both caches.
Therefore, no broadcast is needed, and thus $R=0$ is achievable.

\subsubsection{Point $(M,R)=(2,1)$}
In this case, each cache can hold up to two files.
Our strategy is to dedicate this memory to the two popular files.
Thus, each cache contains all the popular files, but no information about the unpopular files.
Therefore, users 1 and 2 can recover their respective requested files without relying on any broadcast, whereas user 3 requires a full broadcast of his requested file.
As a result, the rate $R=1$ is achievable.

\subsubsection{Point $(M,R)=(1,\frac32)$}
Let us split each of the popular files into two parts of equal size: $W^1_1=(W^1_{1a},W^1_{1b})$ and $W^1_2=(W^1_{2a},W^1_{2b})$.
We place these in the caches as follows.
The first cache contains the first half of each file: $Z_1=(W^1_{1a},W^1_{2a})$, whereas the second cache contains the second half of each file: $Z_2=(W^1_{1b},W^1_{2b})$.
When the user requests are revealed, the BS sends a common message with two parts: $X^\mathbf{r}=(X^\mathbf{r}_1,X^\mathbf{r}_2)$.
The first part directly serves user 3 by giving him the full file he requested: $X^\mathbf{r}_1=W^2_{r_3}$.
The second part will be $X^\mathbf{r}_2=W^1_{r_1b}\oplus W^1_{r_2a}$, which, along with the AP cache content accessed by each of users 1 and 2, allows them to recover their respective file.
The BS had to transmit one full file, plus a linear combination of two half files, and, as a result, the total broadcast rate is $R=\frac32$.

\subsubsection{Point $(M,R)=(\frac12,2)$}
This case is slightly different in that it requires storing coded content in the caches.
We again split the popular files into two halves as before.
However, we this time store the following: $Z_1=W^1_{1a}\oplus W^1_{2a}$; and $Z_2=W^1_{1b}\oplus W^1_{2b}$.
The BS will transmit the following broadcast: $X^\mathbf{r}=\left(W^2_{r_3},W^1_{r_1b},W^1_{r_2a}\right)$.
This will serve all the requests of the users.
The broadcasts message consists of a full file and two half-files, and thus the total rate is $R=2$.

\subsection{Outer bounds}

We will now prove that the above scheme is optimal with respect to information-theoretic bounds.
We do that by showing that the rate is larger than each one of the expressions in the maximization in \eqref{eq:small-example-rate}.

In all of the following, inequalities marked by $(\ast)$ are due to Fano's inequality, and the $\epsilon_F$ term that arises along with these inequalities is a term that decays to zero as $F\to\infty$.

\subsubsection{First expression}
Let the request vector be $\mathbf{r}=(1,2,1)$, and consider both caches along with the broadcast $X^\mathbf{r}$.
Then, for all $F$,
\begin{IEEEeqnarray*}{rCl}
RF + 2MF
&\ge& H\left( Z_1, Z_2, X^\mathbf{r} \right)\\
&=& H\left( Z_1, Z_2, X^{(1,2,1)} \left| W^1_1,W^1_2,W^2_1 \right.\right)\\
&&{} + I\left( W^1_1,W^1_2,W^2_1 ; Z_1, Z_2, X^{(1,2,1)} \right)\\
&\overset{(\ast)}{\ge}& 3F(1-\epsilon_F)\\
R+2M &\ge& 3(1-\epsilon_F).
\end{IEEEeqnarray*}
By taking $F\to\infty$, we get:
\[
R \ge 3-2M.
\]

\subsubsection{Second expression}
Consider the two request vectors $\mathbf{r}_1=(1,2,1)$ and $\mathbf{r}_2=(2,1,2)$.
\begin{IEEEeqnarray*}{rCl}
2\left(RF+MF\right)
&\ge& H\left(Z_1,X^{\mathbf{r}_1}\right)
+ H\left(Z_2,X^{\mathbf{r}_2}\right)\\
&=& H\left(Z_1,X^{\mathbf{r}_1} \left| W^1_1 \right.\right)
+ I\left(W^1_1;Z_1,X^{\mathbf{r}_1}\right)\\
&&{} + H\left(Z_2,X^{\mathbf{r}_2} \left| W^1_1 \right.\right)
+ I\left(W^1_1;Z_2,X^{\mathbf{r}_2}\right)\\
&\overset{(\ast)}{\ge}& H\left(Z_1,Z_2,X^{\mathbf{r}_1},X^{\mathbf{r}_2}\left|W^1_1\right.\right)
+ 2F(1-\epsilon_F)\\
&=& H\left( Z_1,Z_2,X^{\mathbf{r}_1},X^{\mathbf{r}_2} \left| W^1_1,W^1_2,W^2_1,W^2_2 \right.\right)\\
&&{} + I\left( W^1_2,W^2_1,W^2_2 ; Z_1,Z_2,X^{\mathbf{r}_1},X^{\mathbf{r}_2} \left| W^1_1 \right.\right)\\
&&{} + 2F(1-\epsilon_F)\\
&\overset{(\ast)}{\ge}& 5F(1-\epsilon_F)\\
2R+2M &\ge& 5(1-\epsilon_F).
\end{IEEEeqnarray*}
By taking $F\to\infty$, we get:
\[
R \ge \frac52 - M.
\]

\subsubsection{Third expression}
Consider the following four request vectors: $\mathbf{r}_1=(1,2,1)$, $\mathbf{r}_2=(2,1,2)$, $\mathbf{r}_3=(1,2,3)$, and $\mathbf{r}_1=(2,1,4)$.
For clarity, define $\mathcal{X}=(X^{\mathbf{r}_1},X^{\mathbf{r}_2},X^{\mathbf{r}_3},X^{\mathbf{r}_4})$ and $\mathcal{W}^2=(W^2_1,W^2_2,W^2_3,W^2_4)$.
\begin{IEEEeqnarray*}{rCl}
2(2RF+MF)
&\ge& H\left( Z_1,X^{\mathbf{r}_1},X^{\mathbf{r}_2} \right)
+ H\left( Z_2,X^{\mathbf{r}_3},X^{\mathbf{r}_4} \right)\\
&=& H\left( Z_1,X^{\mathbf{r}_1},X^{\mathbf{r}_2} \left| W^1_1,W^1_2 \right.\right)\\
&&{} + I\left( W^1_1,W^1_2 ; Z_1,X^{\mathbf{r}_1},X^{\mathbf{r}_2} \right)\\
&&{} + H\left( Z_2,X^{\mathbf{r}_3},X^{\mathbf{r}_4} \left| W^1_1,W^1_2 \right.\right)\\
&&{} + I\left( W^1_1,W^1_2 ; Z_2,X^{\mathbf{r}_3},X^{\mathbf{r}_4} \right)\\
&\overset{(\ast)}{\ge}& H\left( Z_1,Z_2,\mathcal{X} \left| W^1_1,W^1_2 \right.\right)\\
&&{} + 2\cdot 2F(1-\epsilon_F)\\
&=& H\left( Z_1,Z_2,\mathcal{X} \middle| W^1_1,W^1_2,\mathcal{W}^2 \right)\\
&&{} + I\left( \mathcal{W}^2 ; Z_1,Z_2,\mathcal{X} \left| W^1_1,W^1_2 \right.\right)\\
&&{} + 4F(1-\epsilon_F)\\
&\overset{(\ast)}{\ge}& 8F(1-\epsilon_F)\\
4R+2M &\ge& 8(1-\epsilon_F).
\end{IEEEeqnarray*}
By taking $F\to\infty$, we get:
\[
R \ge 2 - \frac12 M.
\]

\subsubsection{Fourth expression}
Consider the $N_2$ request vectors $\mathbf{r}_1,\ldots,\mathbf{r}_{N_2}$, such that:
\begin{align*}
\mathbf{r}_k &= (1,2,k) \qquad \forall k\le \floor{\frac{N_2}{2}},\\
\mathbf{r}_k &= (2,1,k) \qquad \forall k\ge \floor{\frac{N_2}{2}}+1.\\
\end{align*}
Define $\mathcal{X}^e=\{X^{\mathbf{r}_k}:\text{ $k$ is even} \}$ and $\mathcal{X}^o=\{X^{\mathbf{r}_k}:\text{ $k$ is odd} \}$.
Thus, each of $\mathcal{X}^e$ and $\mathcal{X}^o$ contains at least one broadcast $X^{\mathbf{r}_k}$ such that $\mathbf{r}_k=(1,2,k)$, and one broadcast $X^{\mathbf{r}_l}$ such that $\mathbf{r}_l=(2,1,l)$.
Furthermore, define, for clarity, $\mathcal{W}^2=(W^2_1,\ldots,W^2_{N_2})$.
Then,
\begin{IEEEeqnarray*}{rCl}
N_2RF+2MF
&\ge& H\left(Z_1,\mathcal{X}^e\right)
+ H\left(Z_2,\mathcal{X}^o\right)\\
&=& H\left(Z_1,\mathcal{X}^e\left|W^1_1,W^1_2\right.\right)\\
&&{} + I\left(W^1_1,W^1_2;Z_1,\mathcal{X}^e\right)\\
&&{} + H\left( Z_2,\mathcal{X}^o \left| W^1_1,W^1_2 \right.\right)\\
&&{} + I\left(W^1_1,W^1_2;Z_2,\mathcal{X}^o\right)\\
&\overset{(\ast)}{\ge}& H\left(Z_1,Z_2,\mathcal{X}^e,\mathcal{X}^o \left| W^1_1,W^1_2 \right.\right)\\
&&{} + 4F(1-\epsilon_F)\\
&=& H\left(Z_1,Z_2,\mathcal{X}^e,\mathcal{X}^o \left| W^1_1,W^1_2,\mathcal{W}^2 \right.\right)\\
&&{} + I\left( \mathcal{W}^2 ; Z_1,Z_2,\mathcal{X}^e,\mathcal{X}^o \left| W^1_1,W^1_2 \right.\right)\\
&&{} + 4F(1-\epsilon_F)\\
&\overset{(\ast)}{\ge}& (4+N_2)F(1-\epsilon_F)\\
N_2R+2M &\ge& (4+N_2)(1-\epsilon_F).
\end{IEEEeqnarray*}
By taking $F\to\infty$, we get:
\[
R \ge \frac{4 + N_2 - 2M}{N_2} = 1 - \frac{M-2}{N_2/2}.
\]

\end{document}